\documentclass[letterpaper,11pt]{article}

\usepackage{amsthm,amsmath,amssymb,amsfonts}
\usepackage[american]{babel}
\usepackage{graphicx}
\usepackage[url=false,giveninits=true,maxbibnames=99,style=alphabetic,maxalphanames=5]{biblatex}
\usepackage{hyperref}
\usepackage{xurl}
\hypersetup{breaklinks=true}
\usepackage{mathtools}
\usepackage{enumitem}
\usepackage{csquotes}
\usepackage[noend]{algpseudocode}
\usepackage{algorithm}
\usepackage{xparse}
\usepackage{xspace}
\usepackage{color}
\usepackage{caption}
\usepackage{subcaption}
\usepackage{fullpage}
\usepackage{placeins}
\usepackage{xcolor}
\usepackage{makecell}
\usepackage{thmtools}
\usepackage{thm-restate}
\usepackage{cleveref}
\usepackage{tikz}
\usepackage{tabularx}
\usepackage{diagbox}

\newcommand{\gray}[1]{\textcolor{gray}{{#1}}}

\mathchardef\mhyphen="2D % Define a "math hyphen"

\newcommand\newmathabbrev[2]{\newcommand{#1}{\ensuremath{#2}\xspace}}

\newcommand\cfont\mathrm
\newmathabbrev\p{\cfont{P}}
\newmathabbrev{\N}{\mathbb N}
\newmathabbrev\NP{\cfont{NP}}
\newmathabbrev\DTIME{\cfont{DTIME}}
\newmathabbrev\tSAT{3\cfont{\mhyphen{}SAT}}
\newmathabbrev\MA{\cfont{MA}}
\newmathabbrev\MAo{\cfont{MA}_1}
\newmathabbrev\AM{\cfont{AM}}
\newmathabbrev\NPDAG{\cfont{NP\mhyphen{}DAG}}
\newmathabbrev\QMADAG{\cfont{QMA\mhyphen{}DAG}}
\newmathabbrev\yes{\mathrm{yes}}
\newmathabbrev\no{\mathrm{no}}
\newmathabbrev\US{\cfont{US}}
\newmathabbrev\FP{\cfont{FP}}
\newmathabbrev\PP{\cfont{PP}}
\newmathabbrev\CeP{\cfont{C_=P}}
\newmathabbrev\coCeP{\cfont{coC_=P}}
\newmathabbrev\PH{\cfont{PH}}
\newmathabbrev\SAT{\cfont{SAT}}
\newmathabbrev\QSAT{\cfont{QSAT}}
\newmathabbrev\hQSAT{\mhyphen\QSAT}
\newmathabbrev\SPP{\cfont{SPP}}
\newmathabbrev\GapP{\cfont{GapP}}
\newmathabbrev\BQP{\cfont{BQP}}
\newmathabbrev\QP{\cfont{QP}}
\newmathabbrev\StoqMA{\cfont{StoqMA}}
\newmathabbrev\coNP{\cfont{coNP}}
\newmathabbrev\AzPP{\cfont{A_0PP}}
\newmathabbrev\QMA{\cfont{QMA}}
\newmathabbrev\QMAo{\cfont{QMA}_1}
\newmathabbrev\coQMA{\cfont{coQMA}}
\newmathabbrev\BPP{\cfont{BPP}}
\newmathabbrev\QCMA{\cfont{QCMA}}
\newmathabbrev\pNPlog{\p^{\NP[\log]}}
\newmathabbrev\pNP{\p^{\NP}}
\newmathabbrev\pNPtwo{\p^{\NP[2]}}
\newmathabbrev\pNPone{\p^{\NP[1]}}
\newmathabbrev\pParSAT{\p^{||\SAT}}
\newmathabbrev\pQMApar{\p^{||\QMA}}
\newmathabbrev\pCpar{\p^{||\C}}
\newmathabbrev\pStoqMApar{\p^{||\StoqMA}}
\newmathabbrev\pQMAlog{\p^{\QMA[\log]}}
\newmathabbrev\pClog{\p^{\textup{C}[\log]}}
\newmathabbrev\pC{\p^{\textup{C}}}
\newmathabbrev\QMASPACE{\cfont{QMASPACE}}
\newmathabbrev\pQMAtlog{\p^{\QMA(2)[\log]}}
\newmathabbrev\pStoqMAlog{\p^{\StoqMA[\log]}}
\newmathabbrev\pQMApt{\p^{\Vert\QMA(2)}}
\newmathabbrev\pQMA{\p^{\QMA}}
\newmathabbrev\SharpP{\cfont{\#P}}
\newmathabbrev\pSharP{\p^{\SharpP[1]}}
\newmathabbrev\PromisePP{\cfont{PromisePP}}
\newmathabbrev\lett{\le_\mathrm{tt}}
\newmathabbrev\YES{\mathsf{YES}}
\newmathabbrev\NO{\mathsf{NO}}
\newmathabbrev\PSPACE{\cfont{PSPACE}}
\newmathabbrev\IP{\cfont{IP}}
\newmathabbrev\POLY{\cfont{POLY}}
\newmathabbrev\DAG{\cfont{DAG}}
\newmathabbrev\StoqMADAG{\StoqMA\mhyphen\cfont{DAG}}
\newmathabbrev\CDAG{C\mhyphen\cfont{DAG}}
\newmathabbrev\CDAGf{C\mhyphen\cfont{DAG}_f}
\newmathabbrev\CDAGs{C\mhyphen\cfont{DAG}_s}
\newmathabbrev\CDAGd{C\mhyphen\cfont{DAG}_{d}}
\newmathabbrev\CDAGo{C\mhyphen\cfont{DAG}_1}
\newmathabbrev\LOGS{\cfont{LOGS}}
\newmathabbrev\TAUT{\cfont{TAUTOLOGY}}
\newmathabbrev\SBQP{\cfont{SBQP}}
\newmathabbrev\Fc{F_\coNP}
\newmathabbrev\Fa{F_\AzPP}
\newmathabbrev\GSCON{\cfont{GSCON}}
\newmathabbrev\GSCONexp{\GSCON_\cfont{exp}}
\newmathabbrev\QMAexp{\QMA_\cfont{exp}}
\newmathabbrev\UQMA{\cfont{UQMA}}
\newmathabbrev\R{\mathbb R}
\newmathabbrev\Trees{\cfont{TREES}}
\newmathabbrev\apxsim{\cfont{APX\mhyphen{}SIM}}
\newmathabbrev\AWPP{\cfont{AWPP}}
\newmathabbrev\X{\mathcal{X}}
\newmathabbrev\Y{\mathcal{Y}}
\newmathabbrev\C{\mathcal{C}}

\newmathabbrev\Z{\mathcal{Z}}
\newmathabbrev\ZZ{\mathbb{Z}}
\newmathabbrev\Hprop{H_\mathrm{prop}}
% \newmathabbrev\Hprop{H_\mathrm{prop,#1}}
\newmathabbrev\Hin{H_\mathrm{in}}
\newmathabbrev\Piin{\Pi_\mathrm{in}}
\newmathabbrev\Hout{H_\mathrm{out}}
\newmathabbrev\Piout{\Pi_\mathrm{out}}
\newmathabbrev\Hstab{H_\mathrm{stab}}
\newmathabbrev\Lext{\L_\mathrm{ext}}
\newmathabbrev\BTWNP{\cfont{BTW}(\NP)}
\newmathabbrev\BSN{\cfont{BSN}}
\newmathabbrev\SN{\cfont{SN}}
\newmathabbrev\BD{\cfont{BD}}
\newmathabbrev\HYPERTREE{\cfont{NP\mhyphen{}HYPERTREE}}
\newmathabbrev\Hext{H_\mathrm{ext}}
\newmathabbrev\Hpropt{\tilde{H}_\mathrm{prop}}
\newmathabbrev\Hint{\tilde{H}_\mathrm{in}}
\newmathabbrev\Houtt{\tilde H_\mathrm{out}}
\newmathabbrev\EXP{\cfont{EXP}}
\newmathabbrev\A{\mathcal{A}}
\newmathabbrev\U{\mathcal{U}}

\renewcommand\L{\mathcal{L}}

\newcommand\K{\mathcal K}
\newmathabbrev\DAGSSAT{\DAGS(\SAT)}
\newmathabbrev\DAGS{\mathrm{DAGS}}
\newmathabbrev\DAGSNP{\DAGS(\NP)}

\newcommand\psiin{\psi_\mathrm{in}}
\newmathabbrev\AND{\cfont{AND}}

\newmathabbrev\STCONN{{S,T}\cfont{\mhyphen{}CONN}}
\newmathabbrev\CNF{\cfont{CNF}}
\newmathabbrev\NEXP{\cfont{NEXP}}
\newmathabbrev\NPSPACE{\cfont{NPSPACE}}
\newmathabbrev\QCMASPACE{\cfont{QCMASPACE}}
\newmathabbrev\BQPSPACE{\cfont{BQPSPACE}}
\newmathabbrev{\PCP}{\cfont{PCP}}
\newmathabbrev\BQUPSPACE{\cfont{BQ_UPSPACE}}
\newmathabbrev\QMAt{\QMA(2)}
\newmathabbrev\QMAtexp{\QMAt_{\exp}}
\newmathabbrev\MIP{\cfont{MIP}}
\newmathabbrev\MIPt{\MIP(2)}
\newmathabbrev\BellQMA{\cfont{BellQMA}}
\newmathabbrev\BellQMAt{\BellQMA(2)}
\newmathabbrev\BellQMAtexp{\BellQMAt_{\exp}}

\protected\def\verythinspace{%
  \ifmmode
    \mskip0.5\thinmuskip
  \else
    \ifhmode
      \kern0.08334em
    \fi
  \fi
}

\newcommand{\bin}{\{0,1\}}

\newcommand{\CC}{\mathbb C}

\newcommand{\be}{\begin{equation}}
\newcommand{\ee}{\end{equation}}

\newcommand{\CNOT}{\mathrm{CNOT}}

\renewcommand{\epsilon}{\varepsilon}

\newcommand{\Null}{\mathcal{N}}
\DeclareMathOperator{\Image}{Im}

\newcommand\lmin{\lambda_{\mathrm{min}}}

\newcommand{\set}[1]{{\left\{#1\right\}}}    % braces for set notation

\DeclareMathOperator{\poly}{poly}

\DeclareMathOperator{\Span}{Span}

\DeclarePairedDelimiter\bra{\langle}{\rvert}
\DeclarePairedDelimiter\ket{\lvert}{\rangle}

\DeclarePairedDelimiter\abs{\lvert}{\rvert}
\DeclarePairedDelimiter\norm{\lVert}{\rVert}

\DeclarePairedDelimiterX\braket[2]{\langle}{\rangle}{#1 \delimsize\vert #2}
\DeclarePairedDelimiterX\ketbra[2]{\lvert}{\rvert}{#1 \delimsize\rangle\delimsize\langle #2}

\newcommand{\braketb}[2]{\bra{#1}#2\ket{#1}}
\newcommand{\braketc}[1]{\braket{#1}{#1}}
\newcommand{\ketbraa}[1]{{#1 \renewcommand\ket\bra #1}}
\newcommand{\ketbrab}[1]{\ketbra{#1}{#1}}

\setlist[itemize]{noitemsep, topsep=0pt}
\setlist[enumerate]{noitemsep, topsep=0pt}

\declaretheorem[numberwithin=section]{theorem}
\declaretheorem[sibling=theorem]{corollary}
\declaretheorem[sibling=theorem]{lemma}

\declaretheorem[sibling=theorem,style=definition]{definition}
\declaretheorem[sibling=theorem,style=definition]{remark}

%%%%%%%%%%%%%%%%%%%%
%flag qubit section defs

 % bijection in circuit-to-Ham mapping
 % weight on final time step T in history state
%\newcommand{\psiin}{\psi_{\textup{in}}}

\newcommand{\hin}{H_{\textup{in}}}
\newcommand{\hprop}{H_{\textup{prop}}}

\newcommand{\spa}[1]{\mathcal{#1}}

\newcommand{\ayes}{A_{\textup{yes}}} %CHECK
\newcommand{\ano}{A_{\textup{no}}} %CHECK
 %CHECK
\newcommand{\psihist}{\psi_{\textup{hist}}}
%\newcommand{\Null}[1]{\operatorname{Null}\left(#1\right)}

%%%%%%%temp additions for algorithmic approach

%%%%%%%%%%%%%%

\newcommand{\HHcomp}{\mathcal{H}_{\mathrm{comp}}}
\newcommand{\clock}{\mathrm{clock}}
\newcommand{\HHclock}{\mathcal{H}_{\clock}}
\newcommand{\HHclocki}{\mathcal{H}_{\clock,1}}
\newcommand{\HHclockii}{\mathcal{H}_{\clock,2}}
\newcommand{\Piclock}{\Pi_\clock}
\newcommand{\PiclockN}{\Piclock^{(N)}}
\newcommand{\HclockN}{H_{\mathrm{clock}}^{(N)}}
\newcommand{\I}{\mathbb{I}}
\newcommand{\Hclockp}[1]{H_{\mathrm{clock},#1}}
\newcommand{\Hclock}{H_{\mathrm{clock}}}
\newcommand{\Hinit}{H_{\mathrm{init}}}
\newcommand{\Hdiag}{H_{\mathrm{diag}}}
\newcommand{\Pidiag}{\Pi_{\mathrm{diag}}}
\newcommand{\Htrans}{H_{\mathrm{trans}}}
\newcommand{\Hgate}{H_{\mathrm{gate}}}
\newcommand{\Hlink}{H_{\mathrm{link}}}
\newcommand{\Hend}{H_{\mathrm{end}}}

\newcommand{\LS}{\mathcal{L}}
\newcommand{\Ls}{\mathbf{L}}
\newcommand{\KK}{\mathcal{K}}

%https://tex.stackexchange.com/questions/198771/align-in-substack
\makeatletter
\newcommand{\subalign}[1]{%
  \vcenter{%
    \Let@ \restore@math@cr \default@tag
    \baselineskip\fontdimen10 \scriptfont\tw@
    \advance\baselineskip\fontdimen12 \scriptfont\tw@
    \lineskip\thr@@\fontdimen8 \scriptfont\thr@@
    \lineskiplimit\lineskip
    \ialign{\hfil$\m@th\scriptstyle##$&$\m@th\scriptstyle{}##$\hfil\crcr
      #1\crcr
    }%
  }%
}
\NewDocumentCommand{\LeftComment}{s m}{%
  \Statex \IfBooleanF{#1}{\hspace*{\ALG@thistlm}}\(\triangleright\) #2}

%https://tex.stackexchange.com/questions/3964/mathematical-symbol-for-disjoint-set-union
\def\moverlay{\mathpalette\mov@rlay}
\def\mov@rlay#1#2{\leavevmode\vtop{%
   \baselineskip\z@skip \lineskiplimit-\maxdimen
   \ialign{\hfil$\m@th#1##$\hfil\cr#2\crcr}}}
\newcommand{\charfusion}[3][\mathord]{
    #1{\ifx#1\mathop\vphantom{#2}\fi
        \mathpalette\mov@rlay{#2\cr#3}
      }
    \ifx#1\mathop\expandafter\displaylimits\fi}

\makeatother

\algnewcommand{\LineComment}[1]{\State \(\triangleright\) #1}

\algblockdefx[ON]{Blk}{EndBlk}[1]
  {#1}
  {}

% Tells algorithmicx not to print an empty line if `noend' is set
\makeatletter
\ifthenelse{\equal{\ALG@noend}{t}}%
  {\algtext*{EndBlk}}
  {}%
\makeatother

% https://tex.stackexchange.com/questions/335957/matrix-with-evenly-spaced-columns
% \newcolumntype{M}[1]{>{\hbox to #1\bgroup\hss$}l<{$\egroup}}
\newcolumntype{M}[1]{>{\centering\arraybackslash$}m{#1}<{$}}

\makeatletter
\newcommand\@rcolwidth{0.67em}
\newenvironment{rmatrix}{%
    \new@ifnextchar[\@rarray{\@rarray[\@rcolwidth]}%
}{%
    \endarray
}
\def\@rarray[#1]{\arraycolsep=0pt\array{*\c@MaxMatrixCols {M{#1}}}}
\makeatother

\newcommand{\splitatcommas}[1]{%https://tex.stackexchange.com/a/309558
  \begingroup
  \begingroup\lccode`~=`, \lowercase{\endgroup
    \edef~{\mathchar\the\mathcode`, \penalty0 \noexpand\hspace{0pt plus 1em}}%
  }\mathcode`,="8000 #1%
  \endgroup
}

\AtEveryBibitem{%
  \clearlist{language}%
}

\addbibresource{lit.bib}
\addbibresource{litSev.bib}

% https://tex.stackexchange.com/a/61882
% Setup the matha and mathx font (from mathabx.sty)
\DeclareFontFamily{U}{matha}{\hyphenchar\font45}
\DeclareFontShape{U}{matha}{m}{n}{
      <5> <6> <7> <8> <9> <10> gen * matha
      <10.95> matha10 <12> <14.4> <17.28> <20.74> <24.88> matha12
      }{}
\DeclareSymbolFont{matha}{U}{matha}{m}{n}
\DeclareFontFamily{U}{mathx}{\hyphenchar\font45}
\DeclareFontShape{U}{mathx}{m}{n}{
      <5> <6> <7> <8> <9> <10>
      <10.95> <12> <14.4> <17.28> <20.74> <24.88>
      mathx10
      }{}
\DeclareSymbolFont{mathx}{U}{mathx}{m}{n}

\DeclareMathSymbol{\obot}         {2}{matha}{"6B}
\DeclareMathSymbol{\bigobot}       {1}{mathx}{"CB}

% https://tex.stackexchange.com/a/378327
\newcommand\scalemath[2]{\scalebox{#1}{\mbox{\ensuremath{\displaystyle #2}}}}

\setcounter{secnumdepth}{3}
\setcounter{tocdepth}{3}

\newcommand{\Hball}{{H_{\mathrm{ball}}}}
\newcommand{\Hballhalf}{{H_{\mathrm{ball}/2}}}
\newcommand{\Lball}{\L_{\mathrm{ball}}}
\newcommand{\Hsim}{H'_{\mathrm{sim}}}
\newcommand{\Hlog}{H'_{\mathrm{logical}}}
\usetikzlibrary{patterns,patterns.meta,decorations.pathmorphing}
\definecolor{colExclude}{gray}{0.65}
\colorlet{colUnitary}{red!90!black}
\tikzset{
  x=7.5mm,
  y=7.5mm,
  transition/.style={thick,draw},
  exclude/.style={pattern=north east lines, pattern color=colExclude},
  unitary/.style={transition,color=colUnitary,-latex},
  connect/.style={transition,color=blue,thick,decorate,decoration={zigzag,segment length=3mm,amplitude=0.7mm}},
  conditional/.style={transition,dashed},
  arrow cont/.style={->,very thick,dotted},
  arrow cont unitary/.style={->,very thick,dotted,color=colUnitary},
  % component/.style={very thick,gray,line cap=round, dash pattern=on 0pt off 2\pgflinewidth}
  component/.style={color=black!20,fill=black!4,thick}
}
\newcommand{\baseGridW}{5}
\newcommand{\baseGridH}{4}
\newcommand{\dotSize}{0.085}
\newcommand{\makeGrid}[2]{
  \foreach \i in {1,...,#1} {
    \foreach \j in {1,...,#2} {
        \fill (\i,\j) circle (\dotSize);
    }
}}
\newcommand{\makeGridBR}[2]{
  \foreach \i in {2,...,#1} {
    \fill (\i-1,1) circle (\dotSize);
  }
  \foreach \i in {1,...,#1} {
    \foreach \j in {2,...,#2} {
        \fill (\i,\j) circle (\dotSize);
    }
}}
\newcommand{\baseGrid}{
  \makeGrid{\baseGridW}{\baseGridH}
  \useasboundingbox (0,0.5) rectangle (\baseGridW+1,\baseGridH+0.5);
}
\newcommand\dC{8}

\newcommand{\makeEdgesX}[4]{\foreach \y in {#2,...,#3} {
	\draw[#4] (#1,\dC-\y) -- (#1+1,\dC-\y);}}
\newcommand{\makeEdgesY}[4]{\foreach \y in {#2,...,#3} {
	\draw[#4] (\y,\dC-#1) -- (\y,\dC-#1-1);}}
\newcommand{\makeEdgesXY}[4]{
	\makeEdgesX{#1}{#2}{#3}{#4}
	\makeEdgesY{#1}{#2}{#3}{#4}}

\newcommand{\makeArrowUpX}[3]{
	\path[#3] (#1+0.5,\dC-#2+0.25) edge (#1+0.5,\dC-#2+1);}
\newcommand{\makeArrowUpY}[3]{
	\path[#3] (#2-0.25,\dC-#1-0.5) edge (#2-1,\dC-#1-0.5);}
\newcommand{\makeArrowUpXY}[3]{
  \makeArrowUpX{#1}{#2}{#3}
  \makeArrowUpY{#1}{#2}{#3}}

\newcommand{\makeArrowDownX}[3]{
	\path[#3] (#1+0.5,\dC-#2-0.25) edge (#1+0.5,\dC-#2-1);}
\newcommand{\makeArrowDownY}[3]{
	\path[#3] (#2+0.25,\dC-#1-0.5) edge (#2+1,\dC-#1-0.5);}
\newcommand{\makeArrowDownXY}[3]{
	\makeArrowDownX{#1}{#2}{#3}
	\makeArrowDownY{#1}{#2}{#3}}

\newcommand{\makeEdgesUpXY}[5]{
	\makeEdgesXY{#1}{#2}{#3}{#4}
	\makeArrowUpXY{#1}{#2}{#5}}
\newcommand{\makeEdgesDownXY}[5]{
	\makeEdgesXY{#1}{#2}{#3}{#4}
	\makeArrowDownXY{#1}{#3}{#5}
}
\newcommand{\makeEdgesUpDownXY}[5]{
	\makeEdgesXY{#1}{#2}{#3}{#4}
	\makeArrowUpXY{#1}{#2}{#5}
	\makeArrowDownXY{#1}{#3}{#5}}

\newcommand\TL{\mathcal{T\mkern-3muL}}
\newcommand\TR{\mathcal{TR}}
\newcommand\BL{\mathcal{B\mkern-1muL}}
\newcommand\BR{\mathcal{BR}}
\newcommand\M{\mathcal{M}}

\def\figcommon{}

\title{Quantum $2$-SAT on low dimensional systems is $\QMAo$-complete: Direct embeddings and black-box simulation}
\author{Dorian Rudolph\footnotemark[1] \and Sevag Gharibian\footnote{Department of Computer Science and Institute for Photonic Quantum Systems (PhoQS), Paderborn University, Germany. Email: \{sevag.gharibian, dorian.rudolph\}@upb.de.} \and Daniel Nagaj\footnote{Institute of Physics, Slovak Academy of Sciences. Email: dnagaj@gmail.com}}

\begin{document}

\maketitle

\begin{abstract}
  Despite the fundamental role the Quantum Satisfiability (QSAT) problem has played in quantum complexity theory, a central question remains open: At which local dimension does the complexity of QSAT transition from ``easy'' to ``hard''?
  Here, we study QSAT with each constraint acting on a $k$-dimensional and $l$-dimensional qudit pair, denoted $(k,l)$-QSAT.
  Our first main result shows that, surprisingly, QSAT on qubits can remain $\QMAo$-hard, in that $(2,5)$-QSAT is $\QMAo$-complete.
  In contrast, $(2,2)$-QSAT (i.e. Quantum $2$-SAT on qu\emph{b}its) is well-known to be poly-time solvable [Bravyi, 2006].
  %We also show $\QMAo$-completeness for $(3,4)$-QSAT.
  Our second main result proves that $(3,d)$-QSAT on the 1D line with $d\in O(1)$ is also $\QMAo$-hard.
  Finally, we initiate the study of $(2,d)$-QSAT on the 1D line by giving a frustration-free 1D Hamiltonian with a unique, entangled ground state.

  As implied by our title, our first result uses a \emph{direct embedding}: We combine a novel clock construction with the 2D circuit-to-Hamiltonian construction of [Gosset and Nagaj, 2013].
  Of note is a new simplified and \emph{analytic} proof for the latter (as opposed to a partially numeric proof in [GN13]). This exploits Unitary Labelled Graphs [Bausch, Cubitt, Ozols, 2017] together with a new ``Nullspace Connection Lemma'', allowing us to break low energy analyses into small patches of projectors, and to improve the soundness analysis of [GN13] from $\Omega(1/T^6)$ to $\Omega(1/T^2)$, for $T$ the number of gates.
  Our second result goes via \emph{black-box} reduction: Given an \emph{arbitrary} 1D Hamiltonian $H$ on $d'$-dimensional qudits, we show how to embed it into an effective 1D $(3,d)$-QSAT instance, for $d\in O(1)$.
  Our approach may be viewed as a weaker notion of ``simulation'' (\emph{\`{a} la} [Bravyi, Hastings 2017], [Cubitt, Montanaro, Piddock 2018]). As far as we are aware, this gives the first ``black-box simulation''-based $\QMAo$-hardness result.
\end{abstract}

\section{Introduction}

Boolean satisfiability problems have long served as a testbed for probing the boundary between ``easy'' (i.e. poly-time solvable) versus ``hard'' (e.g. \NP-complete) computational problems.
A striking early example of this is the fact that while $3$-SAT is NP-complete~\cite{cookComplexityTheoremprovingProcedures1971,l.levinUniversalSearchProblems1973,karpReducibilityCombinatorialProblems1972}, $2$-SAT is in P~\cite{w.v.quineCoresPrimeImplicants1959,m.davisComputingProcedureQuantification1960,m.r.kromDecisionProblemClass1967,s.evenComplexityTimeTable1976,b.aspvallLineartimeAlgorithmTesting1979, c.papadimitriouSelectingSatisfyingTruth1991}.
Despite this, \emph{MAX}-$2$-SAT (i.e. what is the \emph{maximum} number of satisfiable clauses of a $2$-CNF formula?) remains NP-complete ~\cite{gareySimplifiedNPcompleteGraph1976}!
Thus, the border between tractable and intractable can often be intricate, hiding abrupt transitions in complexity.

In the quantum setting, generalizations of $k$-SAT and MAX-$k$-SAT have similarly played a central role, additionally due to their strong physical motivation. The input here is a \emph{$k$-local Hamiltonian} $H=\sum_iH_i$ acting on $n$ qubits, which is a $2^n\times 2^n$ complex Hermitian matrix (a quantum analogue of a Boolean formula on $n$ bits), given via a succinct description $\set{H_i}$. Here, each $H_i$ is a $2^k\times 2^k$ operator acting on\footnote{Formally, if $H_i$ acts on a subset $S\subseteq[n]$ of qubits, to make the dimensions match we consider $(H_i)_S\otimes I_{[n]\setminus S}$.} $k$ out of $n$ qubits (i.e. a local quantum clause). Given $H$, the goal is to compute the smallest eigenvalue $\lmin(H)$ of $H$, known as the \emph{ground state energy}. The corresponding eigenvector, in turn, is the \emph{ground state}. This $k$-local Hamiltonian problem ($k$-LH) generalizes MAX-$k$-SAT, and formalizes the question: If a many-body quantum system is cooled to near absolute zero, what energy level will the system relax into? The complexity of $k$-LH is well-understood, and analogous to MAX-$2$-SAT, even $2$-LH is complete for\footnote{QMA is the bounded-error quantum analogue of NP, now with poly-size quantum proof and quantum verifier~\cite{KSV02}.} Quantum Merlin-Arthur (QMA)~\cite{KSV02, j.kempeComplexityLocalHamiltonian2006,cubittComplexityClassificationLocal2016a}.

The quantum generalization of $k$-SAT (as opposed to \emph{MAX}-$k$-SAT), on the other hand, is generally \emph{less} understood. In contrast to $k$-LH, one now asks whether there exists a ground state of $H$ which is simultaneously a ground state for \emph{each local term} $H_i$ (analogous to a string satisfying \emph{all} clauses of a $k$-SAT formula).
Formally, in Quantum $k$-SAT ($k$-QSAT), each $H_i$ is now a projector, and one asks whether $\lmin(H)=0$.
As in the classical setting, it is known that the locality $k$ leads to a complexity transition: On the one hand, Gosset and Nagaj~\cite{GN13} proved that $3$-QSAT is\footnote{$\QMAo$ is \QMA but with perfect completeness. Note that while $\MA=\MAo$~\cite{zachosProbabilisticQuantifiersVs1987}, whether $\QMAo=\QMA$ remains a major open question.} $\QMAo$-complete, while on the other hand, Bravyi gave~\cite{bravyiEfficientAlgorithmQuantum2006} a poly-time classical algorithm for $2$-QSAT (in fact, $2$-QSAT is solvable in linear time~\cite{aradLinearTimeAlgorithm2016,beaudrapLinearTimeAlgorithm2016}). \\
\vspace{-1mm}

\noindent \emph{Systems of higher local dimension.} In the quantum setting, however, there is an additional, physically motivated direction to probe for complexity transitions for $2$-QSAT --- \emph{systems of higher local dimension}.
Perhaps the most striking example of this is that, while Boolean satisfiability problems in 1D are efficiently solvable via dynamic programming (even for $d$-level systems instead of bits), Aharonov, Gottesman, Irani and Kempe showed~\cite{aharonovPowerQuantumSystems2009} that $2$-LH on the line remains \QMA-complete, \emph{so long as} one uses local dimension $d=12$! An analogous result for $2$-QSAT with $d=11$ was subsequently given by Nagaj~\cite{nagajLocalHamiltoniansQuantum2008}. This raises the guiding question of this work:
\begin{center}
    \emph{What is the smallest local dimension that can encode a \QMAo-hard problem?}
\end{center}
There are three results we are aware of here. Define $(k,l)$-QSAT as $2$-QSAT where each constraint acts on a qu-$k$-it and a qu-$l$-it, i.e. on $\CC^k\otimes \CC^l$. (When $k\neq l$, for this to be well-defined, the interaction graph of the instance must be bipartite.) Chen, Chen, Duan, Ji, and Zheng showed~\cite{chenNogoTheoremOneway2011} that $2$-QSAT on qubits, $(2,2)$-QSAT, cannot encode $\QMAo$-hard problems unless $\NP=\QMAo$, as the ground space always contains an NP witness.
On the other hand, Bravyi, Caha, Movassagh, Nagaj, and Shor gave a frustration-free\footnote{A \emph{frustration-free} Hamiltonian is a YES instance of QSAT, i.e. a local Hamiltonian $H\succeq 0$ with $\lmin(H)=0$.} qutrit construction (i.e. on local dimension $d=3$) on the 1D line\footnote{``On the line'' means $H=\sum_{i=1}^m H_{i,i+1}$, i.e. the qudits can be depicted in a sequence with each consecutive constraint acting on the next pair of qudits in the sequence.}~\cite{bravyiCriticalityFrustrationQuantum2012} with a \emph{unique}, \emph{entangled} ground state. While this construction does not encode a computation (and thus does not give $\QMAo$-hardness), it is an important first step in that it shows even such low dimensional systems can encode entangled witnesses (entanglement is necessary, otherwise an NP witness is possible).
Together, these works~\cite{chenNogoTheoremOneway2011} and~\cite{bravyiCriticalityFrustrationQuantum2012} suggest that qu\emph{b}it systems are a no-go barrier for $\QMAo$-hardness. Prior to these, Eldar and Regev~\cite{ER08} came closest to establishing a result about qubit systems, showing that $(3,5)$-QSAT is $\QMAo$-hard (on general graphs).\footnote{$\QMA_1$-hardness of $(4,9)$-QSAT is claimed without proof in \cite{NM07}.}
But the key question remains open --- \emph{Can qubit systems support $\QMAo$-hardness for Quantum $2$-SAT, i.e. is $(2,k)$-QSAT $\QMAo$-hard for some $k\in O(1)$?}

\paragraph{Our results.} We show two main results, along with a third preliminary one. \\

\newcommand\QMAoH{{\textbf{QMA$_1$-C}}}
\begin{table}[h]
\begin{center}
  \begin{tabularx}{.59\linewidth}{l||l|l|l|l|l}
  &$2$&$3$&$4$&$5$&$6$\\\hline\hline
  $2$&$\in\p$&$\NP$-H&$\NP$-H&\QMAoH&\QMAoH\\
  $3$&&$\NP$-H&\QMAoH&$\QMAo$-C&$\QMAo$-C\\
  $4$&&&\QMAoH&$\QMAo$-C&$\QMAo$-C\\
  $5$&&&&$\QMAo$-C&$\QMAo$-C
  \end{tabularx}
\end{center}
\caption{Summary of the known complexity results of$(k,l)\hQSAT$. New results from this work are bold. `-H'/`-C' denote hardness/completeness.}
\label{table:results}
\end{table}

\noindent\emph{1. $\QMAo$-hardness for qubit systems.} The complexity of $(k,l)\hQSAT$ including our new results is summarized in \Cref{table:results}.
The first main result is as follows.

\begin{restatable}{theorem}{thmTwoFive}\label{thm:25}
  $(2,5)$-QSAT is $\QMAo$-complete with soundness $\Omega(1/T^2)$.
\end{restatable}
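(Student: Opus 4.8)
$(2,5)$-QSAT is a special case of QSAT, which lies in $\QMAo$: given $H=\sum_{i=1}^m H_i$ with each $H_i$ a projector, Arthur receives the claimed common zero eigenstate $\ket\psi$, picks $i\in[m]$ uniformly, measures $\set{H_i,\I-H_i}$, and accepts on the $\I-H_i$ outcome. Any state in $\bigcap_i\ker H_i$ is accepted with certainty, and if $\lmin(H)\ge\epsilon$ then every $\ket\psi$ is rejected with probability $\ge\epsilon/m$; hence the instances built below, which have $\lmin(H)\ge\Omega(1/T^2)$ in the no case, witness membership with the stated soundness.

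\textbf{The reduction.} For $\QMAo$-hardness we reduce from an arbitrary $\QMAo$ verification circuit. Fix $L\in\QMAo$ with verifier $V_x$, a circuit of $T=\poly(\abs x)$ gates from a fixed universal two-qubit gate set acting on a proof register and an ancilla register initialized to $\ket{0\cdots0}$, such that when $x\in L$ some proof is accepted with probability $1$, and when $x\notin L$ every proof is accepted with probability $\le 1/3$ (amplification preserves perfect completeness). We build a $(2,5)$-QSAT instance $H_x=\hin+\hprop+\hout$ (a sum of projectors, each $2$-local on a bipartite interaction graph between \emph{data qubits}, holding the register of $V_x$, and $5$-level \emph{clock particles}), with the property that no two data qubits and no two clock particles interact directly. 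Following \cite{GN13} we lay $V_x$ out in two dimensions --- one column per time slice, one row per wire --- and propagate a domain-wall-type clock through the grid. The new ingredient is a \emph{clock construction} on the $5$-level particles so that (i) every projector touches exactly one data qubit and one clock particle, and (ii) the five levels are precisely what is needed to simultaneously carry the local clock information and temporarily host a wire's qubit during a two-qubit gate, so that two-qubit gates --- inherently three-party events in the naive layout --- are implemented by routing data through an adjacent clock particle rather than by a $3$-local term. Here $\hin$ penalizes ancilla data qubits not in $\ket 0$ at the initial clock time, $\hprop$ enforces each local clock transition together with its associated gate, and $\hout$ penalizes the rejecting outcome of the output qubit at the final clock time. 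By construction the common kernel of $\hin+\hprop$ is the span of history states $\propto\sum_t\ket{\text{clock}=t}\otimes V_x^{(t)}\cdots V_x^{(1)}\ket{\text{proof}}\ket{0\cdots0}$, so $\lmin(H_x)=0$ precisely when some proof makes $V_x$ accept with certainty, i.e.\ precisely when $x\in L$.

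\textbf{Soundness.} It remains to show $\lmin(H_x)\ge\Omega(1/T^2)$ when $x\notin L$. A monolithic Kitaev-style estimate is very lossy here because of the branching $2$D clock --- this is why \cite{GN13} obtained only $\Omega(1/T^6)$, in part through numerics. We instead describe the kernel of $\hin+\hprop$ via the Unitary Labelled Graph framework of Bausch, Cubitt and Ozols, as the span of path states for walks in a graph $G$ of legal clock configurations with gate-labelled edges, and then prove a \emph{Nullspace Connection Lemma}: it lower bounds $\lmin$ of a sum of projectors, restricted to the orthogonal complement of its common kernel, by covering the projectors with small overlapping \emph{patches}, bounding the relevant overlap angle on each patch by direct inspection, and chaining these local bounds along $G$ with only quadratic loss in the number of patches. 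Applied to the full $H_x$, the $\hin$ and $\hout$ projectors pinch the local kernels of the initial and terminal patches; since for $x\notin L$ these pinches are incompatible --- every proof rejects with probability $\ge 2/3$ --- and the clock has $O(T)$ patches, chaining yields $\lmin(H_x)=\Omega(1/T^2)$, as claimed.

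\textbf{Main obstacle.} The crux is the clock construction itself: realizing two-qubit gates while every term stays $2$-local on a qubit and a $5$-level particle, never coupling two qubits or two clock particles directly, while the five clock levels still keep the distributed $2$D clock unambiguous, synchronized, and collision-free, with no spurious zero-energy states beyond the intended history states. Making the induced Unitary Labelled Graph clean enough that the Nullspace Connection Lemma delivers $\Omega(1/T^2)$ --- rather than a worse polynomial --- is the delicate part; the remaining soundness bookkeeping is then a disciplined application of Kitaev-style spectral arguments \cite{KSV02}.
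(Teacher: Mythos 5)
Your roadmap matches the paper's: lay out the verifier in a 2D Gosset--Nagaj grid, replace the qutrit clock with a clock on $(2,5)$ particles, characterize gadget nullspaces via Unitary Labelled Graphs, and glue them with a ``Nullspace Connection Lemma'' to get $\Omega(1/T^2)$ soundness. That is precisely the paper's strategy. However, the proposal leaves the central construction — the $(2,5)$ clock — entirely unbuilt, and you acknowledge as much (``the crux is the clock construction itself''). The paper's actual content here is the encoding of a logical $4$-dit into a $5$-dit plus one qubit (states $\ket{\mathbf U},\ket{\mathbf A_1},\ket{\mathbf A_2},\ket{\mathbf D}$, with the extra level $u'$ serving as an indicator so that $\ket{\mathbf U}$ can be detected by the qubit alone) and a logical qutrit into a $5$-dit plus two qubits, together with the $\Hlogical$, $\Hclockp{1}$, $\Hclockp{2}$ Hamiltonians and the case analysis showing their common kernel is exactly the clock states. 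Without this, the proof does not exist. Also, your mechanism for two-qubit gates — ``temporarily host a wire's qubit'' inside the $5$-level clock particle — is not how the construction works: the five levels encode purely clock/indicator information, and two-qubit gates are realized (as in \cite{GN13}) by routing the \emph{clock} along two different paths in the 2D grid conditioned on the control qubit, each path applying a different pair of $1$-local unitaries; no data ever migrates into a clock particle. The remaining ingredient you need to reach $\Omega(1/T^2)$ rather than a worse polynomial is the idling trick (padding the first and last $T/3$ gates with identities so that $\Hin$ and $\Hout$ act on a constant fraction of clock steps), which you do not mention.

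Your containment argument also has a gap. You assert that any sum of projectors lies in $\QMAo$ via ``measure a uniformly random $H_i$,'' but a verifier with a \emph{fixed} universal gate set (here Clifford$+T$) cannot in general implement the measurement $\{H_i,\I-H_i\}$ with \emph{perfect} completeness; that is exactly why the definition of $\QMAo$ pins down the gate set and why the paper must argue separately (\Cref{lem:containment}) that every projector actually produced by the reduction has entries in $\ZZ[\tfrac1{\sqrt2},i]$ — except for the single offending term from \eqref{eq:Ci34}, which requires a dedicated measurement routine from \cite[Appendix A]{GN13}. Simply saying ``QSAT is in $\QMAo$'' begs the question for this theorem.
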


\noindent Thus, qubit systems \emph{can} encode $\QMAo$-hardness.
Let us be clear that the surprising part of this is \emph{not} that this setting is intractable --- indeed, even classical $(2,3)$-SAT is known to be NP-hard~(e.g. \cite{nagajLocalHamiltoniansQuantum2008}).
What \emph{is} surprising is that one can encode \emph{quantum} verifications in such low dimension, for two reasons.
First, \emph{a priori} a $2$-dimensional space for $2$-local constraints appears too limited to \emph{exactly}\footnote{An \emph{exact} encoding appears needed by definition of \QSAT. This is in strong contrast to $2$-LH, where \emph{approximate} encodings are allowed (since all constraints need not be simultaneously satisfiable) via perturbation theory~\cite{j.kempeComplexityLocalHamiltonian2006,bravyiSchriefferWolffTransformation2011}.} encode a computation --- a two-dimensional space only appears to suffice to encode a ``data qubit'', so where does one encode the ``clock'' tracking the computation?
Second, the entanglement of $2\times d$ systems is generally easier to characterize than that of $d\times d$ systems.
For example, whether a $2\times 2$ or $2\times 3$ system is entangled is detectable via Peres' Positive Partial Transpose (PPT) criterion~\cite{peresSeparabilityCriterionDensity1996}, whereas detecting entanglement for $d\times d$ systems (for polynomial $d$) becomes strongly NP-hard~\cite{l.gurvitsClassicalDeterministicComplexity2003,l.ioannouComputationalComplexityQuantum2007,s.gharibianStrongNPhardnessQuantum2010}.
And recall that a ``sufficiently entangled'' ground space is necessary to encode $\QMAo$-hardness.

As a complementary result, we show that one can ``trade'' dimensions in the construction above if one is careful, i.e. the $5$ in $(2,5)$ can be reduced to $4$ at the expense of increasing $2$ to $3$.
\begin{restatable}{theorem}{thmThreeFour}\label{thm:34}
  $(3,4)$-QSAT is $\QMAo$-complete with soundness $\Omega(1/T^2)$.
\end{restatable}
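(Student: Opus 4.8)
Membership in $\QMAo$ is standard and holds for every $k$-QSAT with $k=O(1)$: the prover sends a purported frustration-free state $\ket{\psi}$, the verifier measures a uniformly random projector $\{H_i,I-H_i\}$ and rejects on the first outcome; a genuine $0$-energy state is accepted with certainty, while if $\lmin(H)\ge\epsilon$ every state is rejected with probability $\ge\epsilon/m$, where $m=\poly(n)$. So the work is in the $\QMAo$-hardness with soundness $\Omega(1/T^2)$, and the plan is to obtain it by a dimension re-balancing of the $(2,5)$-QSAT construction of \Cref{thm:25}.

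The cleanest route, I expect, is to re-run that construction with the clock/control register redesigned to have local dimension $4$ rather than $5$, at the cost of promoting each data qubit to a qutrit: the spare level freed up on the control side is folded into the data side as a new ``inactive/dormant'' level $\ket{\bot}$, so that a data qudit now records by itself whether the computation has reached its cell. Since this redistribution touches neither the Gosset--Nagaj 2D layout nor the Unitary-Labelled-Graph encoding of the history --- only the accounting of which physical qudit carries which local degree of freedom --- all $2$-local constraints can be re-expressed on $\CC^3\otimes\CC^4$, with at most a constant number of extra $\CC^3\otimes\CC^4$ projectors gluing the relocated level to the rest (e.g.\ forcing a cell's data qudit to be in $\ket{\bot}$ exactly when its control neighbours are in the dormant sector). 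Bipartiteness of the interaction graph is preserved, and the instance grows by only a constant factor. (Alternatively one can post-process the $(2,5)$ instance of \Cref{thm:25} directly through a size-$O(1)$ local isometry realising the same relocation; the two viewpoints give the same instance up to relabelling.)

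Completeness is then immediate: the history state of the $(2,5)$ instance, transported qudit-by-qudit through this relocation, is frustration-free for the $(3,4)$ instance, since the relocation was chosen to commute with every rewritten projector and the gluing projectors vanish on it by construction. For soundness I would invoke the low-energy analysis of \Cref{thm:25} essentially verbatim: the Nullspace Connection Lemma splits the energy of any state into local patches of projectors, and after the relabelling each patch is spectrally identical to its $(2,5)$ counterpart, so the patchwise bounds --- hence the $\Omega(1/T^2)$ soundness --- transfer. The step I expect to require the most care, and the main obstacle, is ruling out \emph{spurious} frustration-free states created by the re-balancing: one must verify that every original constraint whose action genuinely distinguished the relocated level from the other four clock levels --- among both the on-diagonal ``penalty'' terms and the ``transition'' terms of the ULG Hamiltonian --- is still faithfully enforced after the merge (adding further $\CC^3\otimes\CC^4$ projectors where it is not), and that the new gluing projectors do not themselves obstruct the history state. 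Once this bookkeeping checks out, $(3,4)$-QSAT is $\QMAo$-complete with soundness $\Omega(1/T^2)$.
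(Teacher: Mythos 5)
Your high-level instinct --- that $(3,4)$-QSAT should be obtainable by a dimension trade starting from the $(2,5)$-QSAT construction --- matches the paper, but the mechanism you propose is not the one used and, as stated, would fail. You suggest deleting one level from the $5$-dit and folding it into the data register as a new ``dormant'' level, so that ``a data qudit now records by itself whether the computation has reached its cell.'' The paper does nothing of the sort: the data qubits are simply embedded as a $2$-dimensional subspace of each qutrit (no new semantic level), and the clock is realised by taking the \emph{logical} qutrit and logical $4$-dit that already underlie the $(2,5)$ clock (recall $\ket{\Lu},\ket{\La},\ket{\Ld}$ live on a $(5,2,2)$-tuple and $\ket{\LU},\ket{\LA_1},\ket{\LA_2},\ket{\LD}$ on a $(5,2)$-pair) and representing them as \emph{physical} qu$3$it and qu$4$it respectively. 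In other words, $(3,4)$ is obtained by dropping the indicator-qubit encoding and working directly in the logical space, not by relocating a clock level into the data register. Your folding mechanism gives no account of how the ten-odd internal clock levels $\{u,a,d,x,d',u',a_1,a_2,\dots\}$ redistribute over $\CC^3\otimes\CC^4$, and it is not clear that it can: there is no natural ``spare level'' to move.

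The concrete obstacle your proof misses is that, once you work in logical space, one of the $\Hclockp{1}$ terms becomes illegal. The constraint enforcing ``$\Ld$ implies $\Ld$ to the left,'' implemented in the $(2,5)$ setting as $\ketbrab{1,d}_{\beta_i\alpha_{i+1}}$, lives on a qubit of the $i$-th logical qutrit and the $5$-dit of the $(i{+}1)$-th logical qutrit; after collapsing to logical space this is a $(3,3)$-interaction, which $(3,4)$-QSAT does not allow. The paper's fix is to replace it with ``$\LD$ implies $\Ld$ to the left,'' putting the projector on the $4$-dit $\delta_i$ and the qutrit $\alpha_{i+1}$, and then to re-derive the $C_{\sim i}$ projectors directly on the logical registers ($C^{(4)}_{\le i}=\ketbra{U}{U}_{\delta_i}$, $C^{(3)}_{\le i}=\ketbra uu_{\alpha_i}$, etc.). Your ``add more gluing projectors where needed and hope the bookkeeping works out'' is not a substitute for identifying this specific obstruction and its specific repair.

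Finally, your containment argument is too casual for $\QMAo$: the class is defined with a fixed gate set (Clifford$+T$) and \emph{perfect} completeness, so the verifier must be able to implement the projective measurements exactly. The paper handles this with a dedicated argument (\Cref{lem:containment}), embedding qudits into qubits, checking that all but one projector have entries in $\ZZ[\frac{1}{\sqrt2},i]$, and handling the remaining amplitude-$\sqrt{2/3}$ projector via the measurement gadget of \cite{GN13}. A random-clause verifier that ``measures $\{H_i,I-H_i\}$'' only places the problem in $\QMAo$ if those measurements can be implemented with perfect completeness in the chosen gate set, which must be argued.
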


We remark that obtaining Theorems \ref{thm:25} and \ref{thm:34} is significantly more involved than the previous best $\QMAo$-hardness for $(3,5)$-QSAT~\cite{ER08} (details in ``Techniques'' below), and in particular requires (among other ideas) a simplified anlaysis of the advanced $2D$-Hamiltonian framework of Gosset and Nagaj for $(2,2,2)$-QSAT (i.e. $3$-QSAT on qubits)~\cite{GN13}, a new clock construction, and the Unitary Labelled Graphs of Bausch, Cubitt and Ozols~\cite{BCO17}.
One of the payoffs is that we obtain a ``tight''\footnote{By ``tight'', we mean that it is not currently known for either QSAT or LH how to get a promise gap larger than $\Omega(1/T^2)$~\cite{Wat19}.} soundness gap of $\Omega(1/T^2)$ for \Cref{thm:25}, compared to the $\Omega(1/T^6)$ gap of~\cite{GN13}.
This allows us to recover the $3$-QSAT hardness results of~\cite{GN13}, but with improved soundness:

\begin{restatable}{theorem}{thmThree}\label{thm:3}
  $3$-QSAT is $\QMAo$-complete with soundness $\Omega(1/T^2)$.
\end{restatable}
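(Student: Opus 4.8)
The plan is to obtain \Cref{thm:3} essentially for free from the machinery assembled for \Cref{thm:25}: $3\hQSAT$ (i.e.\ $3$-local $\QSAT$ on qu\emph{b}its) is exactly the setting handled by the $2D$ circuit-to-Hamiltonian construction of Gosset and Nagaj, which our route to $(2,5)\hQSAT$ already re-proves analytically with the improved soundness $\Omega(1/T^2)$; the clock construction that pushes the local dimensions down to $(2,5)$ is simply not invoked here.

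Containment $3\hQSAT\in\QMAo$ is standard: on an instance $H=\sum_{j=1}^m\Pi_j$ of $3$-local projectors, Merlin sends a purported ground state, Arthur measures a uniformly random $\Pi_j$ and rejects on outcome $1$. A YES instance has a state annihilated by all $\Pi_j$ (acceptance probability exactly $1$), while a NO instance has $\lmin(H)\ge 1/\poly(n)$, so every witness is rejected with probability $\ge 1/(m\poly(n))$, which is amplified by independent repetition without harming perfect completeness.

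For hardness, start from an arbitrary $\QMAo$ verification circuit $V_x$ with $T$ gates over a fixed algebraic gate set, perfect completeness, and constant soundness, and feed it into the Gosset--Nagaj $2D$ construction as re-proven in this paper. This yields a $3$-local Hamiltonian $H=\sum_j\Pi_j$ on qubits arranged in a $2D$ grid, each $\Pi_j$ a projector, such that: (i) if $V_x$ accepts some witness with probability $1$, then the associated $2D$ history state is annihilated by every $\Pi_j$, so $H$ is a YES instance of $3\hQSAT$; and (ii) otherwise $\lmin(H)=\Omega(1/T^2)$ by our improved soundness analysis. Since the number of local terms is $\Theta(T)$, this is precisely a $\QMAo$-hardness reduction with soundness $\Omega(1/T^2)$.

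I do not expect any genuinely new obstacle here: the entire burden lies in the improved low-energy analysis of the Gosset--Nagaj construction, where the Unitary Labelled Graph reformulation together with the Nullspace Connection Lemma allow the analysis to be decomposed into small local patches of projectors and the soundness lifted from $\Omega(1/T^6)$ to $\Omega(1/T^2)$. The only delicate points are (a) maintaining \emph{perfect} completeness, which requires the verifier's gate set to be compatible with exactly satisfiable propagation and output projectors --- handled in the usual way by working over an algebraic gate set and amplifying the $\QMAo$ verifier so that YES instances are accepted with probability exactly $1$ --- and (b) bookkeeping so that the promise gap is stated in terms of the gate count $T$ rather than a coarser polynomial.
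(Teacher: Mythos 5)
Your proposal is correct and takes essentially the same route as the paper: \Cref{thm:3} is obtained directly from the generic $2D$ construction \Cref{thm:generic-clock}, whose improved $\Omega(\gamma(\Hclock)/T^2)$ analysis (via the ULG reformulation and \Cref{lem:connect}) is the actual content, by instantiating it with the $3$-local qubit clock of Gosset and Nagaj. One small imprecision: the containment $3\hQSAT\in\QMAo$ is not quite as standard as your parallel-repetition sketch suggests, since Arthur must measure each $\Pi_j$ \emph{exactly} over the fixed Clifford$+T$ gate set; this silently restricts $3\hQSAT$ to projectors with entries in $\ZZ[\frac1{\sqrt2},i]$ (as the paper notes after \Cref{def:klQSAT} and handles via~\cite{GN13,GS13}), a restriction that the hardness reduction happens to respect.
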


\noindent\emph{2. Low dimensional systems on the line.}  Our next main result is the following.

\begin{restatable}{theorem}{thmThreeD}\label{thm:3d}
  $(3,d)$-QSAT on a line is $\QMAo$-complete with $d=O(1)$.
\end{restatable}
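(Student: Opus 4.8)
The plan is to give a "black-box simulation" reduction from an arbitrary 1D $\QMAo$ problem to $(3,d)$-QSAT on a line. Since $\QMAo$-hardness of 1D QSAT for some constant qudit dimension $d'$ follows from Theorem \ref{thm:3} (take the $3$-QSAT instance and lay it out on a line using standard 1D-ization techniques, paying a constant blowup in local dimension), it suffices to take as given a frustration-free 1D Hamiltonian $H = \sum_{i} H_{i,i+1}$ on a chain of $d'$-dimensional qudits, with $d' = O(1)$, whose ground-space encodes a $\QMAo$ verification, and to embed it losslessly into a $(3,d)$-QSAT instance with each constraint acting on a qutrit–qu$d$-it pair, $d = O(1)$, arranged along a line whose interaction graph is bipartite (qutrits on odd sites, qu$d$-its on even sites, say). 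The key point is that QSAT is a promise problem about whether $\lmin = 0$, so we only need a reduction that preserves frustration-freeness exactly: YES instances of $H$ must map to frustration-free $(3,d)$-QSAT instances, and NO instances ($\lmin(H) > 0$) must map to instances whose ground energy is still strictly positive — we do not need to preserve the size of the gap (only its positivity), which is exactly what makes a "weaker notion of simulation" (in the spirit of Bravyi–Hastings and Cubitt–Montanaro–Piddock) suffice here.

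The main construction I would carry out in steps. First, alternate the original $d'$-dimensional sites with fresh qutrit "mediator" sites, so the chain becomes $q_1, m_1, q_2, m_2, \dots$ where $q_i$ carries (a subspace encoding) the original qudit and $m_i$ is an auxiliary qutrit. Second, on each qutrit, reserve one of its three levels (say $\ket{2}$) as a "dead"/flag level and use the remaining two-dimensional subspace to do the mediating work; the $d'$-dimensional qu$d$-it sites will need $d \geq d'$ plus a few bookkeeping levels, still $O(1)$. Third — the crux — simulate each original two-body term $H_{i,i+1}$, which acts on $\CC^{d'} \otimes \CC^{d'}$, by a sequence of qutrit–qu$d$-it projectors. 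Because a qutrit–qu$d$-it pair has a $3d$-dimensional joint space and $3d \geq d'^2$ for $d = O(1)$ once $d'$ is constant, there is "enough room" to route the two original qudits through a common mediator, but the subtlety is that a projector on $\CC^{d'}\otimes\CC^{d'}$ does not factor through any single qutrit. The standard fix is a "teleport-and-test" gadget: use frustration-free projectors (à la Bravyi's $2$-QSAT algebra, or the unitary-labelled-graph toolkit of \cite{BCO17}) to force a copy of the state of $q_i$ into a subspace of the mediator $m_i$, and likewise a copy of $q_{i+1}$; but a single qutrit cannot simultaneously hold two generic $d'$-level states. So instead I would serialize: chain several mediator qutrits between consecutive $d'$-sites, each one "passing along" partial information, and implement $H_{i,i+1}$ as a telescoping product of constant-dimensional two-body checks along this short sub-chain, in the spirit of how \cite{GN13}-type clock/propagation Hamiltonians decompose a global check into local ones. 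One clean way to organize this: replace each original two-body check by a tiny "computation" that swaps/encodes the pair into an ancillary register and applies the check there, and use a miniature clock (again constant-dimensional, as in our other constructions) to sequence the sub-steps; the Nullspace Connection Lemma then guarantees frustration-freeness of the composite is equivalent to frustration-freeness of the original check, and the $\Omega(1/T^2)$-type soundness of the sub-gadgets guarantees strict positivity in the NO case.

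The hard part, and the step I expect to be the main obstacle, is exactly this faithful two-body-to-two-body compression: encoding a genuinely $d'^2$-dimensional constraint using only qutrit–qu$d$-it projectors on a \emph{line}, while (a) keeping $d$ constant, (b) keeping the bipartite 3-vs-$d$ structure, and (c) ensuring no "spurious" zero-energy states are introduced (which would break the YES-to-YES $\Leftrightarrow$ NO-to-NO correspondence) and no legitimate zero-energy state is destroyed. I would handle (c) by a careful kernel/image analysis: show the map on Hilbert spaces is an isometry from the original ground space onto a subspace cut out by the new projectors, and conversely that every new zero-energy state, restricted to the "physical" subspace, pushes forward to an original zero-energy state — this is where invoking the Nullspace Connection Lemma to localize the analysis to small patches of projectors pays off, letting us verify the correspondence one gadget at a time rather than globally. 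Boundary terms at the two ends of the line and the qutrit flag-level projectors (forbidding $\ket{2}$ outside its intended role) require separate but routine checks. Finally, $(3,d)$-QSAT is in $\QMAo$ for any constant $d$ by the standard verifier (guess the ground state, measure each projector), so completeness is immediate and only the hardness direction above needs the work.
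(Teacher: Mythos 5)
Your high-level framing matches the paper: a black-box embedding of a known 1D $\QMAo$-hard QSAT instance into a $(3,d)$ alternating line, preserving frustration-freeness exactly rather than simulating the full spectrum. But the core technical mechanism is missing from your proposal, and the substitute you sketch (serialization with a mini-clock) is not what the paper does and does not obviously work.

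The paper's key idea is a \emph{static} encoding, not a computational one. Each logical qu$d$it $i$ is spread across a triple $(\delta_{i-1},\beta_i,\gamma_i)$, where $\delta_{i-1}$ and $\gamma_i$ are halves of two \emph{different} physical qu$d'$its $\alpha_{i-1}$ and $\alpha_i$, and $\beta_i$ is the intervening qutrit. A balls-and-bins Hamiltonian $\Hball$ forces a logical $d$-dimensional nullspace for each such triple, with the crucial property that projecting \emph{either} half $\delta_{i-1}$ or $\gamma_i$ onto a designated basis state $\ket{c_j^*}$ reveals the logical index $j$, leaving a residual state $\ket\eta$ that is the \emph{same for all} $j$. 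Since $\alpha_i=(\gamma_i,\delta_i)$ simultaneously holds the right half of logical qudit $i$ and the left half of logical qudit $i+1$, the original two-body term $H_{i,i+1}$ becomes a \emph{one-local} term on the single physical qu$d'$it $\alpha_i$, namely $(L\otimes L)H_{i,i+1}(L\otimes L)^\dagger$ with $L=\sum_j\ketbra{c_j^*}{j}$. This is why the qutrits never need to carry any genuine information between the big sites: they only stitch together the two halves of a logical qudit via the balls-and-bins constraints. Correctness is then a direct computation (Lemma~\ref{lem:H'L}: $T^\dagger H''T=\tfrac19 H$ for the isometry $T=V^{\otimes n}$) plus one application of \Cref{cor:GN:cor2} to bound the gap.

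Your proposal correctly identifies the obstacle — a qutrit cannot hold two $d'$-level states — but then tries to route around it by inserting several mediator qutrits and a ``miniature clock'' between consecutive big sites. This is not a black-box reduction anymore: adding clock dynamics to each two-body check essentially re-enters circuit-to-Hamiltonian territory, and the Nullspace Connection Lemma (\Cref{lem:connect}), which you invoke to guarantee frustration-freeness, is tailored to propagation Hamiltonians of the specific form in \Cref{sec:2D-Hamiltonian}, not to an arbitrary ``encode-and-test'' gadget; you would have to verify its hypotheses (2a)--(2f) per gadget, which is exactly the work your sketch defers. There is also a secondary gap in your starting point: you propose to begin with Theorem~\ref{thm:3} ($3$-QSAT on general interaction graphs) and ``1D-ize'' it, but no standard reduction turns a general $3$-local frustration-free Hamiltonian into a 1D constant-local-dimension one while preserving both frustration-freeness and $\QMAo$-hardness. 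The paper instead starts directly from Nagaj's 1D $\QMAo$-complete QSAT on qu-$11$-its~\cite{nagajLocalHamiltoniansQuantum2008}, which is the right black box to use.
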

\noindent This improves significantly on the previous best $(11,11)$-QSAT 1D $\QMAo$-hardness construction of Nagaj~\cite{nagajLocalHamiltoniansQuantum2008}, showing that frustration free Hamiltonians on qutrits can encode not just entangled ground states (\emph{cf.}~\cite{bravyiCriticalityFrustrationQuantum2012}), but also $\QMAo$-hard computations.

For clarity, there is a trade-off in the parameter, $d$, which we now elaborate. 
A key novelty of \Cref{thm:3d} is its proof via \emph{black-box} reduction (see Techniques below), as opposed to a direct embedding of a $\QMAo$ computation.
Specifically, given an \emph{arbitrary} 1D Hamiltonian $H$ on $d'$-dimensional qudits, we embed it into an effective 1D $(3,d)$-QSAT instance with $d\in O(1)$.
On the not-so-positive side, the generality of this approach means that an input 1D Hamiltonian $H$ on $d$-dimensional qudits is mapped to a 1D Hamiltonian on constraints of dimension $(3,O((d')^4))$, i.e. the first dimension drops to $3$ at the expense of the second dimension increasing.
Thus, for example, if we plug in the current best known 1D $\QMAo$-hardness construction~\cite{nagajLocalHamiltoniansQuantum2008}, \Cref{thm:3d} gives $\QMAo$-hardness for $(3,76176)$-QSAT.

On the positive side, however, our technique is the first use of (a weaker notion) of the influential idea of \emph{local simulation} (Bravyi and Hastings~\cite{BH17}, Cubitt, Montanaro, and Piddock~\cite{cubittUniversalQuantumHamiltonians2018}; see Definition 4.1 of \cite{gharibianOracleComplexityClasses2020} for a simpler statement) in the study of QSAT.
Roughly, in such simulations, given a local Hamiltonian $H$ on $n$ qubits, one typically a local isometry $V$ to each qubit, i.e. maps $H\mapsto V^{\otimes n} H (V^\dagger)^{\otimes n}$, blowing up the input space $A$ into a larger, ``logical'' space $B$.
By cleverly choosing an appropriate Hamiltonian $H'$ on $B$, one forces the low-energy space of $H'$ to approximate $H$.
Traditionally, the drawback of this approach is its reliance on perturbation theory, which necessarily gives rise to\footnote{In words, perturbation theory is only known to be able to show $\QMA$-hardness for $k$-LH, as opposed to $\QMAo$-hardness for $k$-QSAT.} \emph{frustrated} Hamiltonians $H'$.
Here, however, we show for the first time that a weaker\footnote{For clarity, the simulations of \cite{BH17,cubittUniversalQuantumHamiltonians2018} reproduce the whole target Hamiltonian $H$, whereas our approach is weaker in that we prove simulation of only $H$'s null space.} version of such local embeddings can be designed even for the frustration-free setting, ultimately yielding \Cref{thm:3d}.\\

\noindent\emph{3. Towards qubits on the line.} Finally, we initiate the study $(2,d)$ on a line by proving that even a frustration-free Hamiltonian on a line of alternating particles with dimensions $2$ and $4$ can have a unique entangled ground state.
\begin{restatable}{theorem}{thmTwoFourLine}\label{thm:24-line}
  Consider a line of particles $2n$ particles such that the $i$-th particle has dimension $2$ for even $i$ and $4$ for odd $i$.
  There exists a Hamiltonian $H = \sum_{i=1}^{n} A_{2i-1,2i} + \sum_{i=1}^{n-1}B_{2i,2i+1} + L_{1,2} + R_{2n-1,2n}$, where $A,B,L,R$ are $2$-local projectors, such that $\Null(H) = \Span\{\ket\psi\}$ and $\ket{\psi}$ is entangled across all cuts.\footnote{The Schmidt rank is $\Theta(1)$, but we do not explicitly analyze it here.}
\end{restatable}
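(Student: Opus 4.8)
The plan is to construct $\ket\psi$ explicitly as a matrix product state (MPS) of bond dimension $D=2$ on the alternating chain, and to take $H$ to be a boundary-augmented version of its nearest-neighbour parent Hamiltonian. For the odd (dimension-$4$) sites I would use the MPS tensor whose four $2\times2$ Kraus matrices are the (normalized) Pauli operators $\{I,X,Y,Z\}$; for the even (dimension-$2$) sites I would use two invertible $2\times 2$ matrices, e.g.\ $M^{(0)}=I$ and $M^{(1)}$ a generic invertible matrix (a small rotation away from $I$). The state is then $\ket\psi = \sum_{\vec s}\, \bra L\,(\cdots M^{(s_{2i})} N^{(s_{2i-1})}\cdots)\,\ket R\ \ket{\vec s}$ for suitably chosen boundary vectors $\ket L,\ket R\in\CC^2$. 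The bulk terms $A_{2i-1,2i}$ and $B_{2i,2i+1}$ are the projectors onto the orthogonal complements, inside $\CC^4\otimes\CC^2$ resp.\ $\CC^2\otimes\CC^4$ (both $8$-dimensional), of the (at most) $4$-dimensional spaces of two-site states the MPS produces for arbitrary virtual boundary conditions; $L_{1,2}$ and $R_{2n-1,2n}$ are additional projectors on the boundary pairs that pin down $\bra L$ and $\ket R$. By construction $\ket\psi\in\Null(H)$, so $H\succeq 0$ with $\lmin(H)=0$, and each term is a $2$-local projector as required.

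Uniqueness of the ground state is the heart of the argument and would go through the standard parent-Hamiltonian machinery for MPS (Fannes--Nachtergaele--Werner; Pérez-García--Verstraete--Wolf--Cirac). The key observation is that a single odd site is already \emph{injective}: its four Kraus matrices are the Paulis, which span all of $\CC^{2\times2}=\CC^{D\times D}$, and its transfer channel $X\mapsto\frac14\sum_{P\in\{I,X,Y,Z\}}PXP$ is primitive (it is the qubit depolarizing channel, with unique full-rank fixed point $I$). Hence any block of the chain containing an odd site is injective, the injectivity length is $1$, and the general theorem implies that the nearest-neighbour parent Hamiltonian --- with the boundary terms $L,R$ fixing the open boundary conditions --- is frustration-free with the MPS as its \emph{unique} frustration-free ground state; equivalently, the intersection of the kernels of the $A$'s, $B$'s, $L$ and $R$ is exactly $\Span\{\ket\psi\}$. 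Since the chain is only period-$2$ rather than translation invariant, I would apply the theorem to the blocked chain obtained by pairing each even site with its right neighbour, which has uniform physical dimension $8$ and whose block tensors $\{M^{(t)}N^{(s)}\}$ again span $\CC^{2\times2}$.

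Finally, entanglement across every cut follows from the invertibility properties of the two transfer channels. For the cut between sites $j$ and $j+1$, write $\ket\psi=\sum_{\beta=1}^{2}\ket{L_\beta}\ket{R_\beta}$ in the obvious virtual-index decomposition; the Schmidt rank is the smaller of the ranks of $\{\ket{L_\beta}\}$ and $\{\ket{R_\beta}\}$, hence at most $D=2$. The point is that it is exactly $2$: the Gram matrix of $\{\ket{L_\beta}\}$ is $\ket L\bra L$ propagated through the first $j$ (adjoint) transfer channels, and the odd-site channel sends \emph{any} nonzero operator to a full-rank one (indeed $\frac14\sum_P P\rho P=\tfrac{1}{2}\Tr(\rho)\,I$), while the even-site channel preserves full rank since $M^{(0)}=I$ is invertible; so after the very first site the left environment is already full rank and stays so. A symmetric argument from the right boundary (using that $M^{(0)\dagger}\ket R$ and $M^{(1)\dagger}\ket R$ are independent for generic $\ket R$) shows the right environment is full rank as well, whence the Schmidt rank is $2$ at every cut, i.e.\ $\ket\psi$ is entangled across all cuts, with Schmidt rank $\Theta(1)$ (in fact $2$).

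The step I expect to be the main obstacle is the uniqueness claim: one must check carefully that \emph{nearest-neighbour} terms really suffice (the textbook parent-Hamiltonian statements are often phrased for translation-invariant MPS and for interaction range strictly larger than the injectivity length), that the extra boundary projectors $L,R$ are exactly what is needed to eliminate the spurious ground states an open parent Hamiltonian would otherwise have, and that blocking the chain to restore translation invariance does not secretly require $3$-local terms. A cleaner, self-contained alternative that sidesteps the general theorem is to prove uniqueness directly by the usual induction: show that any state annihilated by $L_{1,2}$ and by $A_{1,2},B_{2,3},\dots$ through site $j$ must be of MPS form on sites $1,\dots,j$ with a right virtual ``boundary operator'' that --- by injectivity of the odd-site tensor and the boundary constraints --- is forced to be unique up to a scalar, concluding $\Null(H)=\Span\{\ket\psi\}$.
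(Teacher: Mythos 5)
Your route is genuinely different from the paper's: the paper writes down an explicit state with a transparent combinatorial ``ball-moving'' structure and verifies $\Null(H)=\Span\{\ket\phi\}$ by a direct case analysis on basis-state support, whereas you want to realize $\ket\psi$ as a bond-dimension-$2$ MPS and invoke parent-Hamiltonian machinery. The transfer-channel part of your argument (full-rank left/right environments at every cut, hence Schmidt rank exactly $2$) is fine in spirit, given a generic choice of $M^{(1)}$ and boundary vectors.

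The uniqueness argument, however, has a genuine gap, and the concern you flag at the end is not merely a bookkeeping issue --- the nearest-neighbour parent Hamiltonian of your MPS does \emph{not} have the intersection property, so the standard theorems do not apply. The overlap site of any two consecutive two-site blocks is an even site, and its two Kraus operators $M^{(0)}=I$, $M^{(1)}$ span only a $2$-dimensional subspace of $\CC^{2\times2}$; single-site injectivity fails there. Concretely, parametrize a vector in $(G_{\alpha\beta}\otimes\CC^4)\cap(\CC^4\otimes G_{\beta\gamma})$ by matrices $\{Y_{s'}\}$ (left form) and $\{W_s\}$ (right form); equating the two forms gives $\Tr[Y_{s'}N^{(s)}M^{(t)}]=\Tr[W_sM^{(t)}N^{(s')}]$ for all $s,s',t$. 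In the Pauli-coefficient matrices $\mathbb{Y}_{s',s}=[Y_{s'}]_s$ and $\mathbb{W}_{s,s'}=[W_s]_{s'}$, the $t=0$ equations give $\mathbb{W}=\mathbb{Y}^T$ and the $t=1$ equations reduce to the Sylvester equation $R\,\mathbb{Y}=\mathbb{Y}\,L^T$, where $L,R$ are the $4\times4$ Pauli representations of left/right multiplication by $M^{(1)}$. Both $L$ and $R$ inherit the two eigenvalues of $M^{(1)}$, each with multiplicity two, so for generic $M^{(1)}$ the Sylvester kernel is $2\cdot2+2\cdot2=8$ dimensional, while the genuine MPS image on three sites is only $4$-dimensional. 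So the nearest-neighbour parent Hamiltonian already has spurious ground states on three sites, and the ``closure'' needed for the induction does not go through. The blocking workaround you mention would produce $4$-local (not $2$-local) terms on the original chain, violating the form demanded by the statement. Fixing this would require either a different choice of tensors (or a larger bond dimension) for which the intersection property can be re-established, or a genuinely new argument tracking and eventually killing the spurious three-site states further along the chain --- none of which is carried out here. The paper sidesteps all of this by never going through the parent-Hamiltonian machinery: its $\ket\phi$ and its $A,B$ projectors are cooked up so that any other null vector would necessarily be supported on an ``illegal'' ball configuration that some term penalizes, and the transition terms then force equal weights, giving uniqueness directly.
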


\paragraph{Techniques.} We focus on our main results, \Cref{thm:25} and \Cref{thm:3d}.\\ 

\noindent\emph{$\QMAo$-hardness for $(2,5)$-QSAT.} We begin with \Cref{thm:25}, which is proven via a direct embedding of a $\QMAo$ computation into a $(2,d)$-QSAT instance. 
Our first challenge is to break the ``qutrit barrier'' by embedding a clock into the nullspace of a $(2,d)$-Hamiltonian.
When constructing a clock for a circuit Hamiltonian, one has to ensure that each transition from timestep $i$ to $i+1$ has support precisely on these timesteps.
However, with only qubit-systems at our disposal to act as ``auxiliary particles'' (as opposed to qutrits in \cite{ER08}), even the construction of a clock (never mind embedding a computation!) is not obvious.

Our starting observation is that a simple sequence of clock states $\ket{100},\ket{200},\ket{210},\ket{220},\ket{221},\dots$ can be implemented in the nullspace of a $(3,3)$-Hamiltonian.
Note that in this clock, only a single qutrit changes in each step.
Hence, our key idea is to implement ``logical qutrits'' by adding ``indicator qubits'' to $6$-dimensional qudits, so that the logical qutrit $\ket{x}$ becomes 
\begin{equation}
    \ket{x}\mapsto \ket{x}_\alpha\ket{000}_\beta+\ket{x'}_\alpha\ket{0^{x}10^{2-x}}_\beta\quad\text{for }x\in\{0,1,2\}, x'=x+3\in\set{3,4,5}.
\end{equation}
Here, $\alpha$ is $6$-dimensional, and $\beta$ consists of three qubits labelled $\beta_0,\beta_1,\beta_2$. Now, a $\ket{10}\leftrightarrow\ket{20}$ transition on the original $(3,3)$ space can be realized with a $(2,6)$-projector onto $\ket{11}_{\alpha,\beta_0}-\ket{21}_{\alpha,\beta_0}$.
This basic principle generalizes to qudits of any dimension, and combined with the $(3,5)$-Hamiltonian of \cite{ER08}, gives $\QMAo$-hardness of $(2,10)$-QSAT.
We can remove some indicator qubits to improve this to $(2,7)$-QSAT (\Cref{sec:intuition}) as the indicator requires an extra qudit dimension ($\ket{x}$ and $\ket{x'}$).

Further improvement to $(2,5)$-QSAT requires much more work.
For that, we employ the $2D$-clock \cite{GN13}, which geometrically realizes the idea of Eldar and Regev~\cite{ER08} of implementing CNOT gates by ``routing'' states with a $\ket{0}$ and $\ket{1}$ in the control register along different paths.
The $2D$-Hamiltonian thus has more relaxed requirements for the clock, so that a $(3,4)$-system suffices.
The key difference is that \cite{ER08} requires $3$ ``alive states'' to implement CNOT gates with a ``triangle gadget'' (the triangle contains the two paths), whereas \cite{GN13} only needs $2$ ``alive states'' by using the second dimension to realize two distinct paths.
Then we simulate this $(3,4)$-clock on a $(2,5)$-system via the indicator qubit principle.
Note that when implementing a logical $4$-dit on a $(2,5)$-system, we are only allowed to use a single indicator qubit.
Thus, we need a very carefully crafted $(3,4)$-clock, and further ``technical tricks''.

Finally, to analyze the $2D$-Hamiltonian, we prove a novel technical lemma, which we dub the ``Nullspace Connection Lemma'', because it enables us to split the $2D$-Hamiltonian into smaller gadgets, each of which implements a small part of the history state\footnote{A \emph{history state} is a quantum analogue of a tableau from the Cook-Levin theorem.}.
The gadgets are then connected with additional transitions to form the complete Hamiltonian, whose nullspace is spanned by superpositions of the gadget's history states.
This lemma can also be applied to, e.g., the original circuit Hamiltonian of Kitaev \cite{KSV02} (see \Cref{rem:nullspace-kitaev}), matching the $\Omega(1/T^2)$ soundness (smallest non-zero eigenvalue) therein.
Furthermore, the Connection Lemma is proven directly via the Geometric Lemma~\cite{KSV02} and does not require transformation to the Laplacian matrix of a random walk, unlike~\cite{KSV02}.
Our main application of the Connection Lemma is to give a simplified proof for the $2D$-Hamiltonian with improved soundness.
Since the Connection Lemma requires modifications to the Hamiltonian of \cite{GN13}, we present our own variant, which is slightly more compact (using $6M+1$ clock states as opposed to $9M+3$ for $M$ gates).
Finally, we do not rely on numerical methods to derive the nullspaces of the individual gadgets (\emph{cf.} the gadget analysis of \cite{GN13}, which required numerics), and instead give a formal proof based on the theory of \emph{unitary labeled graphs} \cite{BCO17}.
The upshot is that our overall approach is very flexible --- by combining unitary labelled graphs with the Connection Lemma, one can in principle analyze combinations of Hamiltonian gadgets beyond just our 2D setting with relative ease.\\

\noindent\emph{$\QMAo$-hardness for $(3,d)$-QSAT on the line.} As mentioned earlier, in contrast to the direct embedding for \Cref{thm:25}, for \Cref{thm:3d} we instead use a black-box simulation.  
In other words, we do not give a new circuit-to-Hamiltonian mapping, but instead bootstrap the prior $\QMAo$-completeness result of QSAT on a line of qu$d$its with $d=11$ due to Nagaj~\cite{nagajLocalHamiltoniansQuantum2008}.
We treat that Hamiltonian as a black box and construct an embedding of a general $1D$-Hamiltonian $H$ on a line of qudits into an Hamiltonian $H'$ on an alternating line of qu$d'$its and qutrits.
Each qu$d'$it is treated as two logical qu$d''$its with $d''=\sqrt{d'}$ (see \Cref{fig:H'}).
We construct a Hamiltonian $\Hlog$ that restricts the $(d'',3,d'')$ systems to a $d$-dimensional subspace, which acts as a logical qu$d$it.
This logical subspace has the key feature that, in a sense, the logical qu$d$it can be ``accessed'' from either its left or right qu$d''$it.
This allows us to encode the $2$-local terms of $H$ as $1$-local terms acting on the qu$d'$its of $H'$.
As previously described, our embedding can be seen as a weaker notion of \emph{simulation} in the sense of \cite{BH17,cubittUniversalQuantumHamiltonians2018}, in that formally our embedding is achieved via application of local isometries, followed by additional constraints on the logical space (Equations (\ref{eqn:iso1})-(\ref{eqn:iso3})).

\paragraph{Open questions.} Although we have shown that qubit systems can support $\QMAo$-hard problems, the frontier for characterizing the complexity transition of local Hamiltonian problems from low to high local dimension remains challenging.
In our setting, in particular, the main open question is whether one can obtain $\QMAo$-hardness even for $(2,3)$-QSAT? This would complete the complexity characterization for $(k,l)$-QSAT, as recall $(2,2)$-QSAT (i.e. $2$-QSAT) is in P~\cite{bravyiEfficientAlgorithmQuantum2006}.
Getting this down to $(2,3)$-QSAT (or even $(3,3)$-QSAT), however, appears difficult, requiring ideas beyond those introduced here.

As for the 1D line, the best hardness results for $2$-LH and $2$-QSAT are on $8$-dimensional~\cite{HNN13} and $11$-dimensional~\cite{nagajLocalHamiltoniansQuantum2008} systems, respectively.
Is 1D $2$-QSAT on qudits with $2< d<11$ $\QMAo$-hard?
We showed that 1D $(3,d)$-QSAT is $\QMAo$-hard, but due to our black-box approach we only get $d=76176$.
So it seems likely that this $d$ can be improved significantly.
Perhaps more interesting is the question whether 1D $(2,d)$-QSAT is still $\QMAo$-hard.
We showed that even 1D $(2,4)$-dimensional constraints can support a unique globally entangled ground state (\Cref{thm:24-line}), but this construction alone does not embed a computation, and thus does not yield $\QMAo$-hardness.
Can this be bootstrapped to obtain $\QMAo$-hardness for 1D $(2,d)$-QSAT?
For 1D $2$-LH, the situation is even worse --- on qubits, these systems can only be efficiently solved in the presence of a constant spectral gap~\cite{landauPolynomialTimeAlgorithm2015}.
In contrast, for inverse polynomial gap, and even with the promise of an NP witness (i.e. via Matrix Product State), the problem is NP-complete~\cite{schuchComputationalDifficultyFinding2008}. What is the complexity of 1D $2$-LH on qudits with $2< d<8$?

\paragraph{Organization.}

\Cref{sec:prelim} gives formal definitions for $\QMA_1$, $(k,l)$-QSAT, and states the Geometric Lemma with various corollaries.
\Cref{sec:2D-Hamiltonian} describes the 2D-Hamiltonian and proves its soundness.
\Cref{sec:clock} constructs the $(2,5)$-clock and proves $\QMA_1$-completeness of $(2,5)$- and $(3,4)$-QSAT.
\Cref{sec:connection-lemma} proves the Nullspace Connection Lemma.
\Cref{sec:3d} proves the $\QMAo$-completeness of $(3,d)$-QSAT on a line.
\Cref{sec:line} gives the construction of the $(2,4)$-Hamiltonian on a line with entangled ground space.

\section{Preliminaries}\label{sec:prelim}

In this section, we formally introduce $\QMAo$ and elaborate on the Geometric Lemma.

\subsection{\texorpdfstring{$\QMAo$}{PDFstring}}

\newcommand{\gateset}{\mathcal{G}}
The complexity class $\QMA_1$ is defined in the same way as $\QMA$, but with the additional requirement of \emph{perfect completeness}, i.e., in the YES-case, there exists a proof that the verifier accepts with a probability of exactly $1$.
Consequently, $\QMA_1$ is not known to be independent of the gate set~\cite{GN13}, as approximate decompositions of arbitrary unitaries generally breaks perfect completeness.
Therefore, we have to fix a gate set before we define $\QMAo$, and here we follow~\cite{GN13} in choosing the ``Clifford + T'' gate set $\gateset = \{\widehat{H}, T, \CNOT\}$, where $\widehat H$ denotes the Hadamard gate.
Giles and Selinger~\cite{GS13} have proven that a unitary can be synthesized exactly with gate set $\gateset$ iff its entries are in the ring $\ZZ[\frac{1}{\sqrt2},i]$.

\begin{definition}[$\QMAo$]\label{def:QMA1}
  A promise problem $A=(\ayes,\ano)$ is in $\QMAo$ if there exists a poly-time uniform family of quantum circuits $\set{Q_x}$ with gate set $\gateset$ such that:
  \begin{itemize}
    \item (Completeness) If $x\in\ayes$, then there exists a proof $\ket{\psi}$ with $\Pr[Q_x\text{ accepts }\ket{\psi}] = 1$.
    \item (Soundness) If $x\in\ano$, then then for all proofs $\ket{\psi}$, $\Pr[Q_x\text{ accepts }\ket{\psi}] \le 1-1/\poly(\abs{x})$.
  \end{itemize}
\end{definition}

\begin{definition}[$(k,l)$-QSAT]\label{def:klQSAT}
  Consider a system of $k$-dimensional and $l$-dimensional particles, denoted $k_i,i\in [n_k]$ and $l_j,j\in [n_l]$, respectively.
  In the $(k,l)$-QSAT problem, the input is a $(k,l)$-Hamiltonian $H=\sum_{i\in[n_k],j\in[n_l]} \Pi_{k_i,l_j}$ with $2$-local projectors $\Pi_{k_i,l_j}$ acting non-trivially only on particles $k_i$ and $l_j$.
  Decide:
  \begin{itemize}
    \item (YES) $\lmin(H)=0$.
    \item (NO) $\lmin(H)\ge 1/\poly(n_l+n_k)$.
  \end{itemize}
\end{definition}

Note that the projectors of $(k,l)$-QSAT need to have a specific form such that the problem is contained in $\QMA_1$ with our chosen gate set (see \Cref{sec:QMA1-completeness}).

\paragraph{Unitary Labelled Graphs.} Our analysis of the 2D Hamiltonian in Section~\Cref{sec:2D-Hamiltonian} utilizes the notion of Unitary Labelled Graphs of Bausch, Cubitt and Ozols~\cite{BCO17}, we which now define. The power of the ULG framework is that it simplifies  the characterization of the null space of Kitaev's propagation Hamiltonian~\cite{KSV02} when time steps do not necessarily occur in a simple linear progression (specifically, for propagation Hamiltonians defined on ``simple'' ULGs).

\begin{definition}[Unitary Labelled Graph (ULG)~\cite{BCO17}]\label{def:ULG}
    Given an undirected graph $G=(V,E)$ with no self-loops, a \emph{unitary labelled graph} is a triple $(G,\set{\spa{H}_v}_{v\in V},g)$ such that:
    \begin{itemize}
        \item each $\spa{H}_v$ denotes a Hilbert space,
        \item each edge $(u,v)\in E$ is labelled by a unitary $g(u,v)=:U$, so that $(v,u)\in E$ (the same edge in the reverse direction) has label $g(v,u)=U^\dagger$.
    \end{itemize}
\end{definition}

\begin{definition}[Simple ULG~\cite{BCO17}]\label{def:simple}
  For ULG  $(G,\set{\spa{H}_v}_{v\in V},g)$, if for all $u,v\in V$, the product of unitaries along any directed path connecting $a$ to $b$ is equal, then ULG is \emph{simple}. (Equivalently, the product of unitaries along any directed cycle is $I$.)
\end{definition}

\subsection{Geometric Lemma}

In our proofs, we frequently apply Kitaev's Geometric Lemma~\cite{KSV02} as well as its extension to the frustration-free case due to Gosset and Nagaj~\cite{GN13}, where we are interested in lower-bounding the smallest non-zero eigenvalue of a Hamiltonian $H$, denoted $\gamma(H)$.
In the following, we also give further refinements of these statements.
As in \cite{GN13}, we use the notation $H|_S = \Pi_S H \Pi_S$ for the restriction of the Hamiltonian $H$ to the subspace $S$, where $\Pi_S$ is the projector onto $S$.%
\footnote{Note, this is not the standard restriction of linear map to a subspace since $H$ does not necessarily map $S$ to itself.}

\begin{lemma}[Kitaev's Geometric Lemma \cite{KSV02} as stated in \cite{GN13}]\label{lem:geometric}
  Let $H=H_A + H_B$ with $H_A\succeq0$ and $H_B\succeq0$.
  Let $S = \Null(H_A)$ and $\Pi_B$ be the projector onto $\Null(H_B)$.
  Suppose $\Null(H) = \{0\}$. Then
  \begin{equation}
    \gamma(H) \ge \min\{\gamma(H_A),\gamma(H_B)\}\cdot (1-\sqrt{c}),
  \end{equation}
  where
  \begin{equation}
    c = \max_{\ket{v}\in S:\braket{v}{v}=1}\bra v\Pi_B\ket v.
  \end{equation}
\end{lemma}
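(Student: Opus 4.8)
The plan is to prove Kitaev's Geometric Lemma by a direct variational argument. Let $\ket v$ be a unit eigenvector of $H = H_A + H_B$ achieving the smallest \emph{non-zero} eigenvalue $\gamma(H)$ — since $\Null(H) = \{0\}$, every eigenvalue is non-zero, so in fact $\ket v$ is the ground state and $\gamma(H) = \bra v H \ket v$. I would decompose $\ket v = \ket a + \ket{a^\perp}$, where $\ket a = \Pi_S \ket v$ lies in $S = \Null(H_A)$ and $\ket{a^\perp}$ is orthogonal to $S$. The key structural facts are: (i) $\bra v H_A \ket v = \bra{a^\perp} H_A \ket{a^\perp} \ge \gamma(H_A)\,\enorm{\ket{a^\perp}}^2$, because $H_A$ annihilates $S$, is PSD, and its smallest non-zero eigenvalue is $\gamma(H_A)$; and (ii) an analogous bound for $H_B$ obtained by instead splitting $\ket v$ relative to $\Null(H_B)$.

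The heart of the argument is to show $\gamma(H) \ge \min\{\gamma(H_A), \gamma(H_B)\}\,(1 - \sqrt c)$. Here I would use the standard trick of also splitting $\ket v$ with respect to $\Null(H_B)$: write $\ket v = \Pi_B \ket v + (I - \Pi_B)\ket v$, so $\bra v H_B \ket v \ge \gamma(H_B)\,\enorm{(I-\Pi_B)\ket v}^2$. The quantity $c$ controls the overlap $\bra a \Pi_B \ket a \le c\,\enorm{\ket a}^2$ for any $\ket a \in S$. Setting $p = \enorm{\ket{a^\perp}}^2$ so $\enorm{\ket a}^2 = 1 - p$, I would estimate $\enorm{\Pi_B \ket v}$: writing $\Pi_B \ket v = \Pi_B \ket a + \Pi_B \ket{a^\perp}$ and using $\enorm{\Pi_B \ket a} \le \sqrt c\,\sqrt{1-p}$ and $\enorm{\Pi_B \ket{a^\perp}} \le \sqrt p$, we get $\enorm{\Pi_B \ket v} \le \sqrt{c(1-p)} + \sqrt p$, hence $\enorm{(I - \Pi_B)\ket v}^2 \ge 1 - (\sqrt{c(1-p)} + \sqrt p)^2$. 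Combining with $\bra v H_A \ket v \ge \gamma(H_A) p$ and $\bra v H_B \ket v \ge \gamma(H_B)(1 - (\sqrt{c(1-p)}+\sqrt p)^2)$, and minimizing the sum over $p \in [0,1]$, should yield the claimed bound after a short calculation (the minimum of $\gamma_A p + \gamma_B(1 - (\sqrt{c(1-p)}+\sqrt p)^2)$ over $p$ is at least $\min\{\gamma_A,\gamma_B\}(1-\sqrt c)$).

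The main obstacle I anticipate is getting the one-variable optimization over $p$ clean: one wants to show $p + \bigl(1 - (\sqrt{c(1-p)}+\sqrt p)^2\bigr) \ge 1 - \sqrt c$ uniformly in $p \in [0,1]$ (after factoring out $\min\{\gamma_A,\gamma_B\}$ and throwing away the gap between $\gamma_A, \gamma_B$). Expanding the square gives $(\sqrt{c(1-p)}+\sqrt p)^2 = c(1-p) + p + 2\sqrt{cp(1-p)}$, so the left side equals $1 + p - p - c(1-p) - 2\sqrt{cp(1-p)} = 1 - c(1-p) - 2\sqrt{cp(1-p)}$, and one needs $1 - c(1-p) - 2\sqrt{cp(1-p)} \ge 1 - \sqrt c$, i.e.\ $\sqrt c \ge c(1-p) + 2\sqrt{cp(1-p)}$. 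Dividing by $\sqrt c$, this is $1 \ge \sqrt c\,(1-p) + 2\sqrt{p(1-p)}$, which follows from $\sqrt c \le 1$ and $(1-p) + 2\sqrt{p(1-p)} = (\sqrt{1-p} + \sqrt p)^2 - p \le 1$ — so it works out, but care is needed with the direction of inequalities and with handling the degenerate cases $p \in \{0,1\}$ and $c \in \{0,1\}$ separately. I would also remark that the refinements/extensions promised in the surrounding text (the frustration-free version, and the variants used later) follow by the same template with $S$ and $\Pi_B$ reinterpreted appropriately.
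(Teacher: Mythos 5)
The paper cites this lemma from \cite{KSV02}/\cite{GN13} without reproving it, so there is no in-paper proof to compare against; nonetheless the proposal is not correct as written.

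The final algebraic claim is false. You assert that $(1-p)+2\sqrt{p(1-p)} = (\sqrt{1-p}+\sqrt p)^2 - p \le 1$, but $(\sqrt{1-p}+\sqrt p)^2 = 1 + 2\sqrt{p(1-p)}$, so $(\sqrt{1-p}+\sqrt p)^2 - p = 1 - p + 2\sqrt{p(1-p)}$, which is $\le 1$ only when $2\sqrt{p(1-p)}\le p$, i.e.\ $p\ge 4/5$; for $0<p<4/5$ it strictly exceeds $1$. As a concrete failure of the inequality you actually need, $1\ge\sqrt c(1-p)+2\sqrt{p(1-p)}$, take $p=1/2$: the right-hand side is $\sqrt c/2 + 1 > 1$ for every $c>0$.

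This is not merely an algebra slip --- the approach itself is too lossy. With the triangle-inequality bound $\enorm{\Pi_B\ket v}\le\sqrt{c(1-p)}+\sqrt p$, the best you can conclude is $\gamma(H)\ge\min\{\gamma_A,\gamma_B\}\cdot\min_{p\in[0,1]}\bigl[p + \max\{0,\,1-(\sqrt{c(1-p)}+\sqrt p)^2\}\bigr]$, and at $p=1/2$ the bracketed quantity evaluates to $1 - c/2 - \sqrt c$, which falls short of the target $1-\sqrt c$ by $c/2$ for every $c>0$. So the conclusion cannot be recovered along this route regardless of how the one-variable optimization is carried out.

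The standard proof does not decompose $\ket v$ at all. From $H_A\succeq\gamma(H_A)(I-\Pi_S)$ and $H_B\succeq\gamma(H_B)(I-\Pi_B)$ one gets, for any unit $\ket v$,
\begin{equation*}
  \bra v H\ket v \ge \min\{\gamma(H_A),\gamma(H_B)\}\bigl(2 - \bra v\Pi_S\ket v - \bra v\Pi_B\ket v\bigr),
\end{equation*}
and the missing ingredient is Jordan's lemma (the CS decomposition of a pair of projectors): since $\Null(H_A)\cap\Null(H_B)=\{0\}$, the space decomposes into one- and two-dimensional subspaces jointly invariant under $\Pi_S,\Pi_B$, on which $\Pi_S+\Pi_B$ has eigenvalues $\{0,1\}$ or $1\pm\cos\theta_i$ with $\cos^2\theta_i\le c$; hence $\norm{\Pi_S+\Pi_B}\le 1+\sqrt c$, and $\bra v(\Pi_S+\Pi_B)\ket v\le 1+\sqrt c$ gives the claim directly. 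The content you are missing is exactly this sharper spectral bound on $\Pi_S+\Pi_B$, which two separate Cauchy--Schwarz/triangle estimates cannot recover.
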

\begin{corollary}[{\cite[Corollary 1]{GN13}}]\label{cor:GN:cor1}
  Let $H = H_A + H_B$ where $H_A\succeq0$ and $H_B\succeq0$ each have nonempty nullspaces.
  Let $\Gamma$ be the subspace of states in $\Null(H_A)$ that are orthogonal to $\Null(H)$, and let $\Pi_B$ be the projector onto $\Null(H_B)$. Then
  \begin{equation}
    \gamma(H)\ge\min\{\gamma(H_A),\gamma(H_B)\}\cdot\left(1-\sqrt d\right),
  \end{equation}
  where
  \begin{equation}
    d = \norm{\Pi_B|_\Gamma} = \max_{\ket v\in\Gamma:\braket vv=1}\bra v\Pi_B\ket v.
  \end{equation}
\end{corollary}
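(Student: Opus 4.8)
The plan is to deduce \Cref{cor:GN:cor1} from \Cref{lem:geometric} by passing to the orthogonal complement of $\Null(H)$, where $H$ has trivial nullspace and the Geometric Lemma applies verbatim. Write $W:=\Null(H)^\perp$ for the ambient space of the restricted problem.

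First I would record the structural facts. Since $H_A,H_B\succeq 0$ and $H=H_A+H_B$, any unit $\ket v$ with $\bra v H\ket v=0$ satisfies $\bra v H_A\ket v=\bra v H_B\ket v=0$, hence $H_A\ket v=H_B\ket v=0$; combined with the converse this gives $\Null(H)=\Null(H_A)\cap\Null(H_B)$, and in particular $\Null(H)\subseteq\Null(H_A)$ and $\Null(H)\subseteq\Null(H_B)$. A positive semidefinite operator preserves the orthogonal complement of its nullspace, and because $\Null(H)\subseteq\Null(H_A)$ we may split $W=\bigl(\Null(H_A)\cap W\bigr)\oplus\Null(H_A)^\perp$, on each summand of which $H_A$ maps into $W$; the same holds for $H_B$. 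Thus $W$ is invariant under both $H_A$ and $H_B$, so the restrictions $\tilde H_A:=H_A|_W$, $\tilde H_B:=H_B|_W$, $\tilde H:=H|_W$ (in the sense $\,\cdot\,|_W=\Pi_W(\cdot)\Pi_W$) are bona fide Hermitian PSD operators on $W$ with $\tilde H=\tilde H_A+\tilde H_B$. Moreover $\Null(\tilde H)=\Null(\tilde H_A)\cap\Null(\tilde H_B)=(\Null(H_A)\cap\Null(H_B))\cap W=\Null(H)\cap W=\{0\}$, while $\Null(\tilde H_A)=\Null(H_A)\cap W=\Gamma$ by definition of $\Gamma$, and $\Null(\tilde H_B)=\Null(H_B)\cap W\subseteq\Null(H_B)$.

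Next I would apply \Cref{lem:geometric} to $\tilde H=\tilde H_A+\tilde H_B$ on the Hilbert space $W$, taking $S=\Null(\tilde H_A)=\Gamma$ and letting $\tilde\Pi_B$ be the projector onto $\Null(\tilde H_B)$ within $W$; since $\Null(\tilde H)=\{0\}$ this gives
\begin{equation}
  \lmin(\tilde H)\ \ge\ \min\{\gamma(\tilde H_A),\gamma(\tilde H_B)\}\cdot(1-\sqrt c),\qquad c=\max_{\ket v\in\Gamma:\braket vv=1}\bra v\tilde\Pi_B\ket v .
\end{equation}
Three comparisons then close the argument: (i) $\gamma(H)=\lmin(\tilde H)$, since the spectrum of $H$ is that of $\tilde H$ together with the eigenvalue $0$ on $\Null(H)$ and $\tilde H$ has no zero eigenvalue; (ii) $\gamma(\tilde H_A)\ge\gamma(H_A)$ and $\gamma(\tilde H_B)\ge\gamma(H_B)$, because each $\tilde H_\bullet$ is the restriction of $H_\bullet$ to an invariant subspace, so its spectrum is contained in that of $H_\bullet$ and its least nonzero eigenvalue is no smaller; (iii) $c\le d$, because $\Null(\tilde H_B)\subseteq\Null(H_B)$ implies $\tilde\Pi_B\preceq\Pi_B$, hence $\bra v\tilde\Pi_B\ket v\le\bra v\Pi_B\ket v$ for all $\ket v$, and maximizing over unit $\ket v\in\Gamma$ gives the bound. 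As $0\le d\le 1$ we have $1-\sqrt c\ge 1-\sqrt d\ge 0$, so
\begin{equation}
  \gamma(H)=\lmin(\tilde H)\ \ge\ \min\{\gamma(\tilde H_A),\gamma(\tilde H_B)\}\,(1-\sqrt c)\ \ge\ \min\{\gamma(H_A),\gamma(H_B)\}\,(1-\sqrt d),
\end{equation}
which is the claim.

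The main thing to watch is not a deep obstacle but the bookkeeping in the second step: one must genuinely verify that $W=\Null(H)^\perp$ is invariant under $H_A$ and $H_B$ (so that $H_A|_W$, $H_B|_W$ are honest operators rather than mere compressions, and their nonzero spectra really do sit inside those of $H_A,H_B$), and one should note that \Cref{lem:geometric}, though phrased "on the whole space", applies equally with $W$ as the ambient Hilbert space. Finally, one should dispatch the degenerate corner cases: if $H_A$ (resp. $H_B$) annihilates all of $W$ then $\gamma(\tilde H_A)$ (resp. $\gamma(\tilde H_B)$) is vacuous and the bound collapses to $\gamma(H)=\gamma(H_B)$ (resp. $\gamma(H_A)$) under the convention $\gamma(0)=+\infty$, and if $H=0$ the statement is empty.
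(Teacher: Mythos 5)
The paper states this corollary as a direct citation to \cite[Corollary 1]{GN13} and does not reprove it, so there is no in-paper argument to compare against. Your proof is correct: restricting $H$, $H_A$, $H_B$ to the invariant subspace $W=\Null(H)^\perp$, applying Kitaev's Geometric Lemma (\Cref{lem:geometric}) there where the nullspace is trivial, and translating the spectral quantities back via $\gamma(H)=\lmin(\tilde H)$, $\gamma(\tilde H_\bullet)\ge\gamma(H_\bullet)$, and $\tilde\Pi_B\preceq\Pi_B$ (hence $c\le d$) is the natural derivation of this corollary and all the intermediate claims, including the invariance of $W$ and the identifications $\Null(\tilde H_A)=\Gamma$ and $\Null(\tilde H_B)=\Null(H_B)\cap W$, are correctly justified.
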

We give a slightly tighter statement of \cite[Corollary 2]{GN13}\footnote{The only difference is that we have $\norm{H_B\vert_S}$ instead of $\norm{H_B}$ in the denominator.}:
\begin{corollary}\label{cor:GN:cor2}
  Let $H = H_A + H_B$ where $H_A\succeq0$ and $H_B\succeq0$ each have nonempty nullspaces.
  Let $S = \Null(H_A)$ and suppose $H_B|_S$ is not the zero matrix. Then
  \begin{equation}
    \gamma(H)\ge\min\{\gamma(H_A),\gamma(H_B)\}\cdot\frac{\gamma(H_B|_S)}{2\norm{H_B\vert_S}}.
  \end{equation}
\end{corollary}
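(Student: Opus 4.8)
\textbf{Proof plan for \Cref{cor:GN:cor2}.} The plan is to deduce this from \Cref{cor:GN:cor1} by choosing the coarse spectral estimates $\gamma(H_A),\gamma(H_B)$ on the one hand and controlling the overlap parameter $d=\norm{\Pi_B|_\Gamma}$ on the other. Recall $S=\Null(H_A)$ and $\Gamma$ is the subspace of $S$ orthogonal to $\Null(H)$; note $\Gamma\subseteq S$. The key observation is that for a unit vector $\ket v\in S$ we have $\bra v H\ket v=\bra v H_B\ket v=\bra v(H_B|_S)\ket v$, since $\Pi_S\ket v=\ket v$. So the restriction $H|_S$ equals $H_B|_S$ on $S$, and $\Null(H)\cap S=\Null(H_B|_S)$ (as $H|_S\succeq 0$). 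Consequently $\Gamma$ is exactly the orthogonal complement, within $S$, of $\Null(H_B|_S)$, i.e. $\Gamma=S\ominus\Null(H_B|_S)$.

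First I would bound $d$. Let $\ket v\in\Gamma$ be a unit vector. Since $\ket v\perp\Null(H_B|_S)$, we have $\bra v(H_B|_S)\ket v\ge\gamma(H_B|_S)$, i.e. $\bra v H_B\ket v\ge\gamma(H_B|_S)$. On the other hand, writing $\Pi_B$ for the projector onto $\Null(H_B)$ and $\Pi_B^\perp=I-\Pi_B$, we have $\bra v H_B\ket v=\bra v H_B\Pi_B^\perp\ket v\le\norm{H_B|_S}\cdot\braket{\Pi_B^\perp v}{\Pi_B^\perp v}$; here the bound $\norm{H_B|_S}$ (rather than $\norm{H_B}$) is valid because $\ket v\in S$ and $H_B|_S=\Pi_S H_B\Pi_S$, so $\bra v H_B\ket v=\bra v(H_B|_S)\ket v\le\norm{H_B|_S}$, and more carefully $\bra v H_B\ket v=\bra v H_B^{1/2}H_B^{1/2}\ket v$ with $H_B^{1/2}\ket v$ orthogonal to $\Null(H_B)$ — actually the cleanest route is $\bra v H_B\ket v=\bra{w}H_B\ket{w}$ where... rather than belabor this, the point is the standard estimate $\bra v H_B\ket v\le\norm{H_B|_S}\,\bra v\Pi_B^\perp\ket v$ (this is exactly the inequality used in the proof of \cite[Corollary 2]{GN13}, with the sharper operator norm). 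Combining, $\bra v\Pi_B^\perp\ket v\ge\gamma(H_B|_S)/\norm{H_B|_S}$, hence $\bra v\Pi_B\ket v\le 1-\gamma(H_B|_S)/\norm{H_B|_S}=:1-\delta$. Taking the max over unit $\ket v\in\Gamma$ gives $d\le 1-\delta$.

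Next I would plug this into \Cref{cor:GN:cor1}: $\gamma(H)\ge\min\{\gamma(H_A),\gamma(H_B)\}(1-\sqrt d)\ge\min\{\gamma(H_A),\gamma(H_B)\}(1-\sqrt{1-\delta})$. Finally I would apply the elementary inequality $1-\sqrt{1-\delta}\ge\delta/2$ valid for $\delta\in[0,1]$ (immediate from $\sqrt{1-\delta}\le 1-\delta/2$, which follows by squaring), yielding $\gamma(H)\ge\min\{\gamma(H_A),\gamma(H_B)\}\cdot\frac{\gamma(H_B|_S)}{2\norm{H_B|_S}}$, as claimed. I should also check the hypotheses of \Cref{cor:GN:cor1} are met: $H_A,H_B\succeq 0$ with nonempty nullspaces is given, and the hypothesis that $H_B|_S$ is not the zero matrix ensures $\gamma(H_B|_S)$ and $\norm{H_B|_S}$ are well-defined positive quantities, so the statement is non-vacuous (if $\Null(H)\neq\{0\}$ both sides are consistent since $\gamma$ refers to the smallest \emph{non-zero} eigenvalue; if $\Null(H)=\{0\}$ then $\Gamma=S$ and the argument is unchanged).

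\textbf{Main obstacle.} The only subtle point is justifying the sharpened denominator $\norm{H_B|_S}$ in place of $\norm{H_B}$: one must verify that in the chain of inequalities bounding $\bra v H_B\ket v$ from above, every vector to which $H_B$ (or its spectral projections) is applied can be taken to lie in $S$, so that only the norm of $H_B$ restricted to $S$ enters. This is where the footnote's claim lives, and it requires being careful that $\bra v H_B\ket v = \bra v\Pi_S H_B\Pi_S\ket v$ for $\ket v\in S$ — true since $\Pi_S\ket v=\ket v$ — and then that $\bra v (H_B|_S)\ket v\le\norm{H_B|_S}\,\bra v\Pi_B^\perp\ket v$ still holds; the latter is the content of the original Geometric-Lemma-style bound and carries over verbatim with $H_B$ replaced by $H_B|_S$ in the relevant places. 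Everything else is routine.
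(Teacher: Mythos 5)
Your plan follows the paper's proof essentially line for line: reduce to \Cref{cor:GN:cor1}, sandwich $\bra{v}H_B\ket{v}$ for unit $\ket v\in\Gamma$ between $\gamma(H_B|_S)$ from below and $\norm{H_B|_S}\,\bra{v}(\I-\Pi_B)\ket{v}$ from above to get $d\le 1-\gamma(H_B|_S)/\norm{H_B|_S}$, and finish with $1-\sqrt{1-x}\ge x/2$. The lower bound (via $\Gamma=S\ominus\Null(H_B|_S)$) is fine. But the step you yourself flag as the ``main obstacle'' is a genuine gap, and your justification does not close it: writing $\ket{v_1}=(\I-\Pi_B)\ket v$, one has $\bra vH_B\ket v=\bra{v_1}H_B\ket{v_1}\le\norm{H_B}\braket{v_1}{v_1}$, and there is no way to replace $\norm{H_B}$ by $\norm{H_B|_S}$ here, because $\ket{v_1}$ need not lie in $S$ even though $\ket v$ does; the identity $\bra vH_B\ket v=\bra v(H_B|_S)\ket v$ only bounds the left side by $\norm{H_B|_S}$, without the crucial factor $\bra v(\I-\Pi_B)\ket v$.

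Moreover, the gap cannot be filled: the sharpened inequality, and with it the corollary as stated, fails in general. Take $H_A=\ketbra{1}{1}$ and $H_B=\ketbra{\phi}{\phi}$ on $\CC^2$ with $\ket\phi=\sqrt{\epsilon}\ket0+\sqrt{1-\epsilon}\ket1$ and small $\epsilon>0$. Then $S=\Span\{\ket0\}$, $H_B|_S=\epsilon\,\ketbra{0}{0}$, so $\gamma(H_B|_S)=\norm{H_B|_S}=\epsilon$ and the claimed bound gives $\gamma(H)\ge\tfrac12\min\{\gamma(H_A),\gamma(H_B)\}=\tfrac12$; yet $H=H_A+H_B$ has trace $2$ and determinant $\epsilon$, so $\gamma(H)=1-\sqrt{1-\epsilon}\approx\epsilon/2$. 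The unit vector $\ket v=\ket0\in\Gamma$ also violates the intermediate inequality directly: $\bra0H_B\ket0=\epsilon$ while $\norm{H_B|_S}\bra0(\I-\Pi_B)\ket0=\epsilon^2$. The version with $\norm{H_B}$ in the denominator (\cite[Corollary 2]{GN13}) is exactly tight in this example and is the statement to fall back on (alternatively one must add a hypothesis forcing $(\I-\Pi_B)\Gamma\subseteq S$, e.g.\ $[\Pi_S,H_B]=0$). To be fair, the paper's own proof asserts the identical inequality $\bra vH_B\ket v\le\bra v(\I-\Pi_B)\ket v\,\norm{H_B|_S}$ without justification, so you have faithfully reproduced its route --- but the obstacle you identified is real, and the ``tightened'' denominator cannot be justified as stated.
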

\begin{proof}
  As stated in \cite{GN13}, $\gamma(H_B\vert_S) = \min_{v\in\Gamma:\braket vv =1}\braketb{v}{H_B}$ with $\Gamma$ as defined in \Cref{cor:GN:cor1}.
  Hence for all unit $\ket v \in\Gamma\subseteq S$,
  \begin{equation}
    \gamma(H_B\vert_S)\le \braketb{v}{H_B}\le \braketb{v}{(\I-\Pi_B)}\norm{H_B\vert_S}
  \end{equation}
  and thus $d\le 1-\gamma(H_B\vert_S)/\norm{H_B\vert_S}$.
  The statement then follows from $1-\sqrt{1-x}\ge \frac x2$ for $x\in[0,1]$.
\end{proof}
\begin{corollary}\label{cor:geometric3}
  Let $H = H_A + H_B$ where $H_A\succeq0$ and $H_B\succeq0$ each have nonempty nullspaces.
  Let $S = \Null(H_A)$ and suppose $H_B|_S$ is not the zero matrix. Then
  \begin{equation}
    \gamma(H)\ge\min\{\gamma(H_A),\gamma(H_B)\}\cdot \min_{\ket v\in\Gamma:\braket vv = 1}\braketb{v}{(\I-\Pi_B)}/2.
  \end{equation}
\end{corollary}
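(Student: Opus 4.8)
The plan is to proceed exactly as in the proof of \Cref{cor:GN:cor2}, but stopping one step earlier. Recall from that argument that \Cref{cor:GN:cor1} reduces matters to bounding $d = \norm{\Pi_B|_\Gamma} = \max_{\ket v\in\Gamma:\braket vv=1}\bra v\Pi_B\ket v$, where $\Gamma$ is the subspace of $S=\Null(H_A)$ orthogonal to $\Null(H)$, and that $\gamma(H)\ge\min\{\gamma(H_A),\gamma(H_B)\}\cdot(1-\sqrt d)$. So first I would invoke \Cref{cor:GN:cor1} verbatim to obtain this inequality.

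Next, I would rewrite $d$ in complementary form. Since $\Pi_B$ is a projector, for any unit $\ket v$ we have $\bra v\Pi_B\ket v = 1 - \bra v(\I-\Pi_B)\ket v$, so
\begin{equation}
  d = \max_{\ket v\in\Gamma:\braket vv=1}\bigl(1-\braketb{v}{(\I-\Pi_B)}\bigr) = 1 - \min_{\ket v\in\Gamma:\braket vv=1}\braketb{v}{(\I-\Pi_B)}.
\end{equation}
Writing $x := \min_{\ket v\in\Gamma:\braket vv=1}\braketb{v}{(\I-\Pi_B)}$, we have $d = 1-x$ and $x\in[0,1]$ (non-negativity because $\I-\Pi_B\succeq0$; the upper bound because $\I-\Pi_B\preceq\I$). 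Then I would apply the elementary inequality $1-\sqrt{1-x}\ge x/2$ for $x\in[0,1]$ — the same one used in \Cref{cor:GN:cor2} — to get $1-\sqrt d = 1-\sqrt{1-x}\ge x/2$. Substituting back yields
\begin{equation}
  \gamma(H)\ge\min\{\gamma(H_A),\gamma(H_B)\}\cdot\frac x2 = \min\{\gamma(H_A),\gamma(H_B)\}\cdot\min_{\ket v\in\Gamma:\braket vv=1}\braketb{v}{(\I-\Pi_B)}/2,
\end{equation}
which is the claim.

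There is essentially no obstacle here; this is a routine corollary that just unwinds one fewer step of the estimate in \Cref{cor:GN:cor2}. The only point requiring a moment's care is the hypothesis ``$H_B|_S$ is not the zero matrix'': this guarantees $\Gamma\neq\{0\}$ (equivalently $\Null(H)\neq S$), so that the minimum over unit vectors $\ket v\in\Gamma$ is over a nonempty set and is well-defined, and it is exactly what \Cref{cor:GN:cor1} needs. If $\Gamma$ were empty the bound would be vacuous (one could read the minimum as $+\infty$), but under the stated hypothesis this does not arise. The statement is strictly more general than \Cref{cor:GN:cor2}, since bounding each term $\braketb{v}{(\I-\Pi_B)}$ from below by $\gamma(H_B|_S)/\norm{H_B|_S}$ (as done there) recovers that corollary; the point of stating it in this form is that in the 2D-Hamiltonian analysis one wants to keep the more flexible quantity $\min_{\ket v\in\Gamma}\braketb{v}{(\I-\Pi_B)}$ rather than immediately weakening it.
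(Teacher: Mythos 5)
Your proposal is correct and follows exactly the same route as the paper's proof, which simply cites \Cref{cor:GN:cor1} together with the inequality $1-\sqrt{1-x}\ge x/2$ for $x\in[0,1]$. You have merely spelled out the intermediate step $d = 1 - \min_{\ket v\in\Gamma}\braketb{v}{(\I-\Pi_B)}$ and the role of the hypothesis ``$H_B|_S\neq 0$'' (ensuring $\Gamma\neq\{0\}$), both of which the paper leaves implicit.
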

\begin{proof}
  Follows from \Cref{cor:GN:cor1} and the fact $1-\sqrt{1-x}\ge \frac x2$ for $x\in[0,1]$.
\end{proof}
\begin{corollary} \label{lem:redundant}
  Let $H_A \succeq 0, H_B\succeq0$ be Hamiltonians with $\Null(H_A)\subseteq\Null(H_B)$.
  Then $\gamma(H_A+H_B)\ge\min\{\gamma(H_A),\gamma(H_B)\}$.
\end{corollary}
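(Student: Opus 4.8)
The plan is to reduce everything to the single observation that the hypothesis $\Null(H_A)\subseteq\Null(H_B)$ forces $\Null(H)=\Null(H_A)$, where $H:=H_A+H_B$. First I would record the elementary fact that for positive semidefinite operators, $(H_A+H_B)\ket v=0$ if and only if $H_A\ket v=0$ and $H_B\ket v=0$: indeed $0=\bra v(H_A+H_B)\ket v=\bra v H_A\ket v+\bra v H_B\ket v$ is a sum of nonnegative numbers, so each summand vanishes, and a PSD operator annihilates precisely the vectors on which its quadratic form vanishes. Hence $\Null(H)=\Null(H_A)\cap\Null(H_B)$, which by hypothesis equals $\Null(H_A)$.

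Next I would invoke the variational characterization of the smallest nonzero eigenvalue. Since $H\succeq0$ is Hermitian, $\Null(H)$ and its orthogonal complement are both $H$-invariant, the restriction of $H$ to $\Null(H)^\perp$ is positive definite, and $\gamma(H)=\min\{\bra vH\ket v:\ket v\in\Null(H)^\perp,\ \braket vv=1\}$; the same holds for $H_A$ with $\Null(H_A)^\perp$. Now fix any unit $\ket v\perp\Null(H)$. By the nullspace identity, $\ket v\perp\Null(H_A)$, so $\bra vH_A\ket v\ge\gamma(H_A)$, while $\bra vH_B\ket v\ge0$ because $H_B\succeq0$. Adding, $\bra vH\ket v\ge\gamma(H_A)\ge\min\{\gamma(H_A),\gamma(H_B)\}$, and minimizing over $\ket v$ gives $\gamma(H)\ge\gamma(H_A)\ge\min\{\gamma(H_A),\gamma(H_B)\}$, which is in fact slightly stronger than stated. (If one prefers to stay inside the machinery already developed, the same conclusion falls out of \Cref{cor:GN:cor1} with $S=\Null(H_A)$: there the subspace $\Gamma$ of states in $S$ orthogonal to $\Null(H)=S$ is $\{0\}$, so $d=0$ and the corollary yields the bound immediately.)

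There is no genuine obstacle here; the only points needing (minimal) care are justifying the nullspace identity from positivity and citing the Rayleigh characterization of $\gamma$ on the orthogonal complement of the nullspace. One should also note that the statement tacitly assumes $H_A,H_B$ are nonzero so that $\gamma(H_A),\gamma(H_B)$ are defined (the degenerate cases being either vacuous or absorbed by the convention that a minimum over the empty set is $+\infty$).
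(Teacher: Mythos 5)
Your proposal is correct. The paper proves the corollary by invoking \Cref{cor:GN:cor1} directly: since $\Null(H_A+H_B)=\Null(H_A)$ under the hypothesis, the subspace $\Gamma=\Null(H_A)\cap\Null(H_A+H_B)^\perp$ is $\{0\}$, so $d=0$ and the bound follows immediately — this is precisely the parenthetical alternative you offer at the end. Your primary argument takes a genuinely different, more elementary route: you first establish $\Null(H_A+H_B)=\Null(H_A)$ from positivity of the quadratic forms, and then apply the Rayleigh characterization of $\gamma$ directly, getting $\bra v(H_A+H_B)\ket v\ge\bra vH_A\ket v\ge\gamma(H_A)$ for any unit $\ket v\perp\Null(H_A)$. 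This avoids the Geometric Lemma machinery entirely and yields the strictly sharper conclusion $\gamma(H_A+H_B)\ge\gamma(H_A)$ (which dominates the stated $\min\{\gamma(H_A),\gamma(H_B)\}$), while also making transparent that $H_B$ contributes nothing beyond being PSD with a nullspace containing $\Null(H_A)$. The paper's route is shorter given the surrounding corollaries and stylistically consistent with how the other $\gamma$-bounds in the section are derived, but your direct variational argument is self-contained and gives a slightly stronger result; both are sound.
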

\begin{proof}
  In \Cref{cor:GN:cor1}, we have $\Gamma = \Null(H_A)\cap \Null(H_A+H_B)^\bot = \Null(H_A)\cap \Null(H_A)^\bot = \{0\}$.
  Thus $d = \norm{\Pi_B\vert_\Gamma} = 0$ and $\gamma(H_A+H_B)\ge \min\{\gamma(H_A), \gamma(H_B)\}$.
\end{proof}

\section{2D Hamiltonian}\label{sec:2D-Hamiltonian}

The following theorem extracts the 2D Hamiltonian construction central to \cite{GN13} so that we can use it in conjunction with our own clock construction.
We give a complete proof with a simplified Hamiltonian construction and improved analysis that gives soundness $\Omega(\gamma(\Hclock^{(N)})/N^2)$, as opposed to $\Omega(\gamma(\Hclock^{(N)})/N^6)$ from \cite{GN13}.
Our proof is fully analytic, improving on the partially numeric analysis of ~\cite{GN13}.

Note, our construction is structurally almost the same as \cite{GN13}, which would also work in conjunction with our clock (see \Cref{sec:clock}), requiring only a slight modification to the $\Hinit$ and $\Hend$ terms.
Also, the application of \Cref{lem:connect} to their Hamiltonian is not as straightforward because there is no separate gadget for $1$-local gates (see \Cref{fig:HU}).
So, our contribution to the following theorem is a simplified proof and improved soundness bounds.

\begin{theorem}\label{thm:generic-clock}
  Suppose we are given Hamiltonian terms as follows:
  \begin{enumerate}[label=(\arabic*)]
    \item Clock Hamiltonian $\HclockN$ whose nullspace is spanned by \emph{clock states} $\ket{C_i}$ for $i=1,\dots,N$.
    \item Projectors $h_{i,i+1}(U)$ acting on $\HHcomp\otimes \HHclock$ such that
    \begin{equation}\label{eq:Piclock hii+1 Piclock}
    \begin{aligned}
      \left(\I\otimes \PiclockN\right) h_{i,i+1}(U) \left(\I\otimes \PiclockN\right) = c_1 \bigl(&(\I\otimes \ketbrab{C_i} + \I\otimes \ketbrab{C_{i+1}}) \\- &(U^\dagger\otimes\ketbra{C_i}{C_{i+1}} + U\otimes \ketbra{C_{i+1}}{C_i})\bigr)
    \end{aligned}
    \end{equation}
    for constant\footnote{It suffices for our purposes to have the same $c$ for all $i$. In principle, however, one can also allow different constants depending on the choice of index $i$.} $c_1$.
    We write $h_{i,i+1} := h_{i,i+1}(\I)$.
    \item Projectors $C_{\ge i}, C_{\le i}$ acting on $\HHclock$ such that
    \begin{align}
      \PiclockN C_{\ge i} \PiclockN &= \sum_{j=i}^N c_{2,i,j}\ketbra{C_j}{C_j},\label{eq:Piclock Cge Piclock}\\
      \PiclockN C_{\le i} \PiclockN &= \sum_{j=1}^i c_{3,i,j}\ketbra{C_j}{C_j},\label{eq:Piclock Cle Piclock}
    \end{align}
    and  $c_{2,i,j},c_{3,i,j}\ge c_2$ for all $i,j\in[N]$ for some constant $c_2$.
    \item All above projectors ($h_{i,i+1}(U), C_{\ge i}, C_{\le i}$) pairwise commute, besides $h_{i,i+1}(U)$ and $h_{i,i+1}(U')$ for non-commuting $U,U'$.
    For all $i\ne j$, $h_{i,i+1}(U)h_{j,j+1}(U')=0$.
    \item If $\Pi_1,\dots,\Pi_k$ are projectors of the form $C_{\le i},C_{\ge i}$, then $\bra{C_{j_1}}\Pi_{1}\dotsm\Pi_{k}\ket{C_{j_2}}=0$ for $j_1 \ne j_2$.%
    \footnote{(4) and (5) are only used to prove the soundness lower bound and may be dropped, decreasing soundness by a polynomial factor.}
  \end{enumerate}
  Then any problem in $\QMAo$ can be reduced to \QSAT with Hamiltonians $H$ acting on $\HHcomp \otimes \HHclocki \otimes \HHclockii$ (labeled $Z,X,Y$) with terms
  $\splitatcommas{(\HclockN)_X,(\HclockN)_Y,\Pi_Z\otimes (h_{i,i+1})_A, (C_{\sim j})_A\otimes (h_{i,i+1})_B,(h_{i,i+1}(U))_{Z,A}, (C_{\le i})_A\otimes (C_{\ge j})_B}$, where $A,B\in \{X,Y\}, A\ne B$, ${\sim}\in \{{\le},{\ge}\}$\footnote{Here we use `$\sim$' as a placeholder for a relation `$\le$' or `$\ge$'.}, $i,j\in[N]$, $\Pi_Z \in \{\ketbra00,\ketbra11\}$ is a single-qubit projector acting on $\HHcomp$, and $U$ is either a $1$-local gate from the $\QMAo$ circuit or $U\in \{\sigma^Z, B\}$.
  The soundness is $\Omega(\gamma(\Hclock^{(N)})/N^2)$, where $N=\Theta(g)$ and $g$ is the number of gates used by the $\QMAo$ verifier.
\end{theorem}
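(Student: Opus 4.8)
The plan is to follow the standard Kitaev circuit-to-Hamiltonian template, but with the clock Hamiltonian and propagation terms abstracted through the hypotheses (1)--(5), and to use the Nullspace Connection Lemma (\Cref{lem:connect}, proven in \Cref{sec:connection-lemma}) to glue together the pieces while retaining $\Omega(1/N^2)$ soundness. The first step is to \emph{define the instance}: given a $\QMAo$ verifier circuit with $g$ gates, pad it to a canonical form and set $N=\Theta(g)$ so that the clock states $\ket{C_1},\dots,\ket{C_N}$ track the computation, with designated input, propagation, and output timesteps. The Hamiltonian $H$ acts on $\HHcomp\otimes\HHclocki\otimes\HHclockii$ and consists of: the two clock Hamiltonians $(\HclockN)_X$ and $(\HclockN)_Y$ (forcing each of the $X,Y$ registers into the clock subspace); propagation terms $(h_{i,i+1}(U))_{Z,A}$ for the $1$-local gates $U$; $\CNOT$-implementing terms built from $(C_{\le i})_A\otimes(C_{\ge j})_B$ and the ``routing'' trick of \cite{ER08,GN13} realized via $U\in\{\sigma^Z,B\}$ and the single-qubit projectors $\Pi_Z\in\{\ketbra00,\ketbra11\}$; ``synchronization'' terms $(C_{\sim j})_A\otimes(h_{i,i+1})_B$ forcing the two clock registers to advance together; and input/output penalty terms. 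The point is that equations~(\ref{eq:Piclock hii+1 Piclock})--(\ref{eq:Piclock Cle Piclock}) guarantee these terms, restricted to the clock subspace, behave exactly like Kitaev-style propagation and clock-comparison projectors, so the null space is spanned by the usual history states — but now the ``time'' runs along the 2D geometry of the routed paths rather than linearly.

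The second step is \emph{completeness}: if $x\in\ayes$, take Merlin's witness $\ket{\xi}$ and build the history state $\ket{\eta}=\sum_i U_i\cdots U_1\ket{\xi}_Z\otimes\ket{C_i}_X\otimes\ket{C_i}_Y$ (appropriately normalized, with the routing reflected in which clock states appear in which register). One checks term by term, using~(\ref{eq:Piclock hii+1 Piclock}), that every propagation and synchronization term annihilates $\ket\eta$; the clock Hamiltonians annihilate it since each register is in a clock state; and the input/output terms vanish because the circuit accepts with probability $1$ (here perfect completeness of $\QMAo$ and the exact form of the $h_{i,i+1}(U)$ projectors are essential). The third step is \emph{soundness}, which is the technical heart. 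I would stratify $H$ into a ``cheap'' part $H_A$ = (clock Hamiltonians + the $C_{\le i}\otimes C_{\ge j}$ comparison terms + synchronization terms) and a ``propagation'' part $H_B$, and apply \Cref{cor:GN:cor1} / \Cref{cor:geometric3}. The null space of $H_A$ is the span of the ``valid clock configurations'' (both registers in the same, geometrically consistent clock state), and on this subspace $H_B$ acts as the Laplacian of the propagation graph. Rather than diagonalizing that Laplacian globally (as \cite{GN13} did partly numerically), I would invoke the Nullspace Connection Lemma: decompose the propagation graph into small gadgets (one per gate / routing triangle), each of whose local null space is computed \emph{analytically} via the unitary-labelled-graph machinery of \cite{BCO17} (the relevant ULGs are simple, so their null spaces are exactly the history states by Def.~\ref{def:simple}), then connect the gadgets with the remaining transitions. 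Properties~(4) and~(5) — pairwise commutation except for genuinely non-commuting gates, orthogonality $h_{i,i+1}(U)h_{j,j+1}(U')=0$ for $i\ne j$, and the clock-comparison orthogonality $\bra{C_{j_1}}\Pi_1\cdots\Pi_k\ket{C_{j_2}}=0$ — are exactly what the Connection Lemma needs to certify that the glued null space is spanned by superpositions of gadget history states with a spectral gap of $\Omega(\gamma(\HclockN)/N^2)$. Finally, the standard argument (à la \cite{KSV02}) shows that any state with small energy against $H_B$ restricted to $\Null(H_A)$ is close to a genuine history state, on which the input/output penalties force acceptance probability $\ge 1-1/\poly$, contradicting $x\in\ano$; combining via the Geometric Lemma gives the claimed $\Omega(\gamma(\HclockN)/N^2)$.

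The main obstacle I anticipate is the soundness step, specifically applying the Connection Lemma cleanly in the presence of the \emph{non-commuting} $1$-local gates: the hypotheses explicitly exempt $h_{i,i+1}(U)$ and $h_{i,i+1}(U')$ for non-commuting $U,U'$ from the commutation requirement in~(4), so the gadget decomposition must be arranged so that within any single gadget only commuting terms (or a single gate per timestep) appear, with non-commuting gates separated into different gadgets connected by identity-labelled transitions. Verifying that such a decomposition exists for the 2D routed geometry — and that the resulting ULG for each gadget is genuinely simple, so that \cite{BCO17} pins down its null space without numerics — is where the real care is needed. A secondary subtlety is bookkeeping the constants $c_1,c_2$ and the factor $\gamma(\HclockN)$ through the nested applications of the Geometric Lemma corollaries so that nothing worse than a single $1/N^2$ is lost; \Cref{cor:GN:cor2} and \Cref{cor:geometric3} are stated precisely to make this accounting tight.
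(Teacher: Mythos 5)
Your overall plan tracks the paper's: implement the circuit via $H_V$/$H_U$ gadgets placed on a 2D diagonal, characterize each gadget's nullspace analytically via unitary labelled graphs (\Cref{def:ULG}), glue them with the Nullspace Connection Lemma~(\Cref{lem:connect}), and nest applications of the Geometric Lemma corollaries to fold in input/output penalties. But several details in your write-up would not survive contact with the actual construction.

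First, your split $H_A = \text{clock} + C_{\le i}\otimes C_{\ge j}\text{-comparison} + \text{synchronization}$, $H_B = \text{propagation}$ does not give a clean base step. The synchronization terms $(C_{\sim j})_A\otimes(h_{i,i+1})_B$ are not separable from the gadgets: inside $H_V$ (see \Cref{fig:HV}) the identity transitions, the conditional $\ketbra00_c / \ketbra11_c$ transitions, and the $B/\sigma^Z$ gate transitions together form the graph whose connected structure defines the gadget history states. If you move the identity transitions into $H_A$, $\Null(H_A)$ is no longer ``the span of valid clock configurations'' — it is a span of superpositions over partial connected components, which you would then have to analyze anyway. The paper instead puts the \emph{entire} gadget Hamiltonian $\Hgate$ (sync + conditional + gate edges) into one piece, so each gadget is a self-contained ULG whose nullspace is precisely the local history states $\ket{\psi_{G_i}(\alpha)}$ (Lemmas~\ref{lem:conditionHV},~\ref{lem:conditionHU}); the Connection Lemma is then applied to $\Hgate|_{S_1} + \Hlink'$ with $\Hlink'$ the zigzag identity edges between gadgets, inside the nullspace $S_1$ of the commuting part $\Hclock'+\Hdiag$. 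Your split would force you to re-derive the gadget nullspaces from a more awkward starting point.

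Second, your explicit history state $\ket{\eta}=\sum_i U_i\cdots U_1\ket\xi_Z\otimes\ket{C_i}_X\otimes\ket{C_i}_Y$ is the 1D-clock form and is not annihilated by the 2D Hamiltonian. The correct null state is $\sum_{i=1}^{T}\ket{\psi_{G_i}(G_{i-1}\cdots G_1\ket\alpha)}$, where each $\ket{\psi_{G_i}(\cdot)}$ is a uniform superposition over the full 2D clock region of gadget $i$ (including the off-diagonal $(x,y)$ pairs traced out in \Cref{fig:HV-decomp}), not just diagonal $\ket{C_i}_X\ket{C_i}_Y$ pairs. You hedge toward this, but the formula as stated is wrong, and it matters in the completeness check.

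Third, you omit the idling trick: the paper pads the circuit so that the first and last $T/3$ gates are identity. Without this, the overlap bounds for $\Hin$ and $\Hout$ degrade from $\Theta(1)$ to $O(1/T)$, and you would only get $\Omega(\gamma(\HclockN)/T^3)$, not the claimed $\Omega(\gamma(\HclockN)/T^2)$. This is a necessary ingredient, not a cosmetic one, for the stated soundness.

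Finally, your anticipated obstacle — handling non-commuting $1$-local gates by assigning one gate per gadget and linking via identity edges — is exactly right, and is what the paper's gadget layout and $\Hlink$ terms achieve.
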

\noindent Note, this theorem is not explicitly stated in \cite{GN13}, but is implicitly proven, albeit with a soundness of $\Omega(\gamma(\HclockN/N^6))$.
Thus, as an immediate first consequence we can (using the clock Hamiltonian of \cite{GN13}) recover $\QMAo$-hardness of $3$-QSAT, but with improved soundness:
\thmThree*
%\noindent In the following, we give a simplified proof and improved soundness analysis.\\

\subsection{Two-qubit gates}
The main challenge in building a circuit-Hamiltonian using only the gadgets listed in \Cref{thm:generic-clock} is constructing $2$-qubit gates.
\begin{figure}[t]
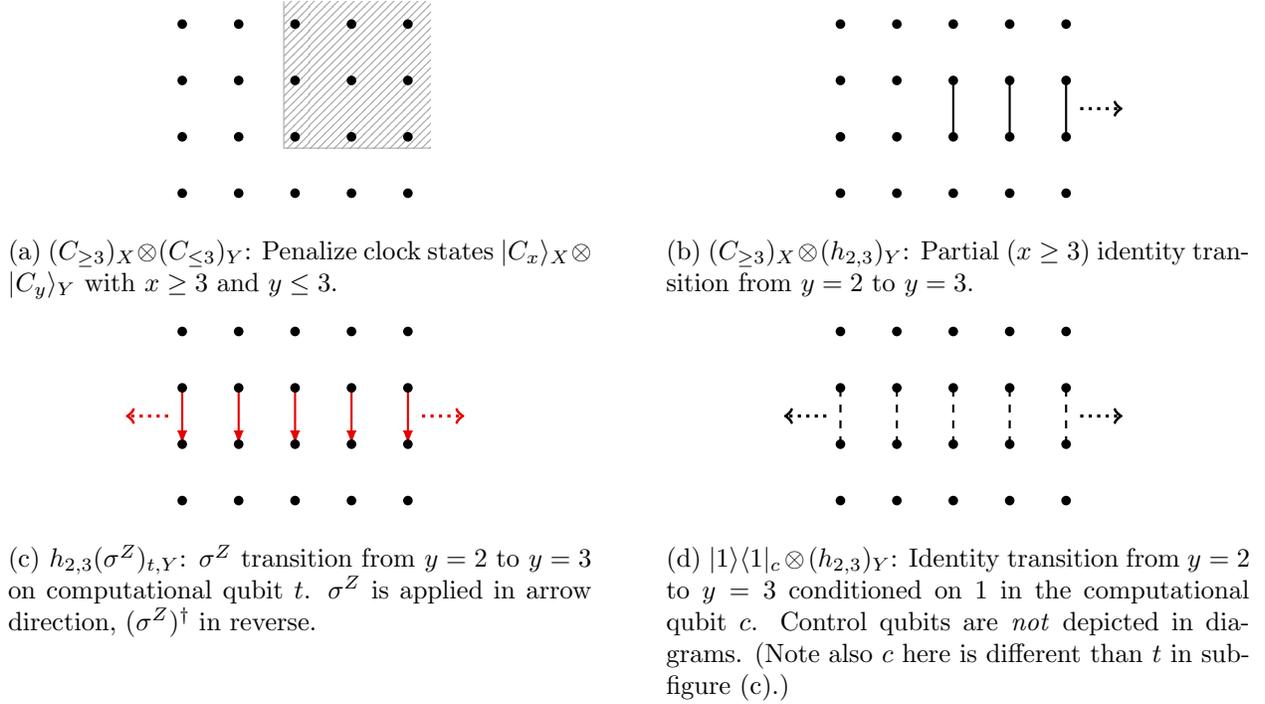

  \begin{subfigure}[t]{0.47\textwidth}
  \ifx\figcommon\undefined\input{common.tex}\fi

\begin{tikzpicture}%[x=1cm,y=1cm]
\node (v1) at (2.8,1.8) {};
\node (v2) at (5.4,4.4) {};
\path[exclude]  (v1) rectangle (v2);

\baseGrid

\draw[color=colExclude] (2.8,4.4) -- (v1.center) -- (5.4,1.8);
\end{tikzpicture}
  \centering
  \caption{$(C_{\ge3})_X\otimes (C_{\le 3})_Y$: Penalize clock states $\ket{C_x}_X\otimes\ket{C_y}_Y$ with $x\ge3$ and $y\le3$.}
  \label{fig:gadgets:quadrant}
  \end{subfigure}
  \hfill
  \begin{subfigure}[t]{0.47\textwidth}
  \centering
  \ifx\figcommon\undefined\input{common.tex}\fi

\begin{tikzpicture}

\foreach \i in {3,4,5} {
	\path[transition] (\i,2) edge (\i,3);
}

\baseGrid

\draw[arrow cont] (5.25,2.5) -- (6,2.5);
\end{tikzpicture}
  \caption{$(C_{\ge 3})_X\otimes (h_{2,3})_Y$: Partial ($x\ge3$) identity transition from $y=2$ to $y=3$.}
  \end{subfigure}
  \begin{subfigure}[t]{0.47\textwidth}
  \ifx\figcommon\undefined\input{common.tex}\fi

\begin{tikzpicture}

\foreach \i in {1,2,...,\baseGridW} {
	\draw[unitary,outer sep] (\i,3) -- (\i,2);
}

\baseGrid

\draw[arrow cont unitary] (0.75,2.5) -- (0,2.5);
\draw[arrow cont unitary] (5.25,2.5) -- (6,2.5);
\end{tikzpicture}
  \centering
  \caption{$h_{2,3}(\sigma^Z)_{t,Y}$: $\sigma^Z$ transition from $y=2$ to $y=3$ on computational qubit $t$. $\sigma^Z$ is applied in arrow direction, $(\sigma^Z)^\dagger$ in reverse.}
  \end{subfigure}
  \hfill
  \begin{subfigure}[t]{0.47\textwidth}
  \ifx\figcommon\undefined\input{common.tex}\fi

\begin{tikzpicture}
\foreach \i in {1,2,...,\baseGridW} {
	\path[conditional] (\i,2) edge (\i,3);
}

\baseGrid

\draw[arrow cont] (0.75,2.5) -- (0,2.5);
\draw[arrow cont] (5.25,2.5) -- (6,2.5);
\end{tikzpicture}
  \centering
  \caption{$\ketbra11_{c}\otimes (h_{2,3})_Y$: Identity transition from $y=2$ to $y=3$ conditioned on $1$ in the computational qubit $c$. Control qubits are \emph{not} depicted in diagrams. (Note also $c$ here is different than $t$ in subfigure (c).)}
  \end{subfigure}

  \caption{Graphical representations of constraints from the circuit Hamiltonian of \Cref{thm:generic-clock}.
  The $X$-axis goes from left to right, the $Y$-axis from top to bottom; both start counting at index $1$.
  Dots: Each dot represents a clock state (here a $5\times 4$ subspace of clock states is depicted).
  Edges: Transitions between two clock states.
  Red edges: Unitary transitions.
  Dashed edges: Conditional transitions (with control $\ketbra00$ horizontally, $\ketbra11$ vertically), i.e. transitions which only take place if the computation/control register reads $0$ (respectively, $1$).
  Dashed arrows: Indicate that edges continue in that direction on a larger clock space when combining gadgets.
  Hatched/shaded area: Penalized clock states, which are thus not in the Hamiltonian's nullspace.}
  \label{fig:gadgets}
\end{figure}
Gosset and Nagaj \cite{GN13} solve this by using the 2D clock to split the computation conditioned on a computational qubit by routing the states with $\ket0_c$ and $\ket1_c$ in the control register $c$ along different paths: Path 1 first applies $B=\frac1{\sqrt2}\begin{psmallmatrix}1&i\\i&1\end{psmallmatrix}$ and then $\sigma^Z=\begin{psmallmatrix}1&0\\0&-1\end{psmallmatrix}$ to the $\ket0_c$ states, and Path 2 first $\sigma^Z$ and then $B$ to $\ket1_c$.
This implements the gate $V = \ketbra 00 \otimes \sigma^ZB + \ketbra 11\otimes B\sigma^Z$, with $(T^2\otimes T^6 \hat H T^2)V = \CNOT$, where recall $\hat H$ is Hadamard (to avoid confusion with $H$ used elsewhere for ``Hamiltonian'').

\Cref{fig:gadgets} depicts the fundamental gadgets used to implement $V$.
$V$ is implemented by the Hamiltonian $H_V$ depicted in \Cref{fig:HV}.
When considering $H_V$ on its own in the $7\times 7$ grid, the arrows indicating continuations of the edges are not relevant.
However, to implement a full circuit we need to chain many copies of $H_V$ together, connecting the upper left of one copy to the lower right of the previous, which we formally justify via the ``Nullspace Connection Lemma'' \ref{lem:connect}.
In doing so, we must take care that the continued edges do not connect to another $H_V$ instance.
\begin{figure}[t]
  \centering
  \ifx\figcommon\undefined\input{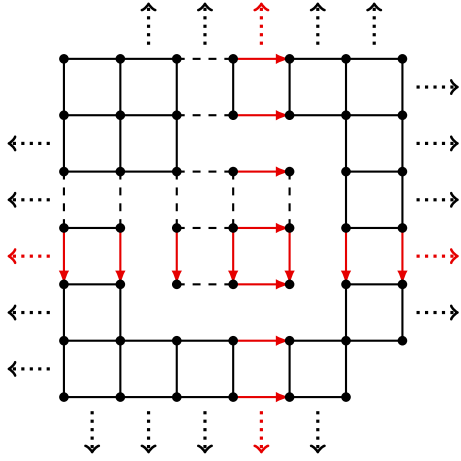}\fi

\begin{tikzpicture}

\makeEdgesDownXY{1}{1}{7}{transition}{arrow cont}

\makeEdgesDownXY{2}{6}{7}{transition}{arrow cont}
\makeEdgesUpXY{2}{1}{3}{transition}{arrow cont}

\makeEdgesDownXY{3}{6}{7}{transition}{arrow cont}
\makeEdgesUpXY{3}{1}{5}{conditional}{arrow cont}

\makeEdgesUpDownXY{4}{1}{7}{unitary}{arrow cont unitary}

\makeEdgesDownXY{5}{6}{7}{transition}{arrow cont}
\makeEdgesUpXY{5}{1}{2}{transition}{arrow cont}

\makeEdgesUpXY{6}{1}{6}{transition}{arrow cont}

\makeGridBR{7}{7}

\end{tikzpicture}
  \caption{Graphical representation of the Hamiltonian $H_V$ implementing the $V$ gate with control $c$ and target $t$. Dashed edges: $\ketbra00_{c}\otimes (h_{3,4})_X + \ketbra11_{c}\otimes (h_{3,4})_Y$. Note that technically there are also dashed edges $(3,4)$ in rows $6$ and $7$, but these are not drawn, since the identity transitions on the same edges (drawn as solid edges) also act here. Red edges: $h_{4,5}(B)_{t,X} + h_{4,5}(\sigma^Z)_{t,Y}$, i.e. $B$ applied on red edges along the $X$-axis/horizontally, $\sigma^Z$ on red edges on the $Y$-axis/vertically.}
  \label{fig:HV}
\end{figure}
Next, we analyze the nullspace of $H_V$ and prove that it has the structure required by \Cref{lem:connect}.
This analysis is in the clock space and disregards the constants of \Cref{eq:Piclock hii+1 Piclock,eq:Piclock Cge Piclock,eq:Piclock Cle Piclock} as they do not affect the nullspace.
The definition of $H_V$ is evident from \Cref{fig:HV}, but for completeness we state it formally (with control $c$ and target $t$ from the compute register $Z$):
\begin{equation}
  \begin{aligned}
    H_V &= h_{1,2}\otimes C_{\ge 1} + h_{2,3}\otimes(C_{\le3} + C_{\ge 6}) + \ketbra00_{c}\otimes (h_{3,4})_{X} + h_{3,4}\otimes C_{\ge 6} \\
    &\qquad+ h_{4,5}(B)_{t,X} + h_{5,6}\otimes(C_{\le2} + C_{\ge 6}) + h_{6,7}\otimes C_{\le 6}\\
    &+C_{\ge 1} \otimes h_{1,2} + (C_{\le3} + C_{\ge 6})\otimes h_{2,3} + \ketbra11_{c}\otimes(h_{3,4})_{Y} +  C_{\ge 6}\otimes h_{3,4} \\
    &\qquad+ h_{4,5}(\sigma^Z)_{t,Y} + (C_{\le2} + C_{\ge 6})\otimes h_{5,6} +  C_{\le 6}\otimes h_{6,7}
  \end{aligned}
\end{equation}
Note that $H_V$ is almost symmetric with respect to the $X$- and $Y$-axes, the only difference being $B$ in the $X$-axis and $\sigma^Z$ in the $Y$-axis.
To derive the nullspace of $\Null(H_V)$, we deviate from \cite{GN13} and give an analytical characterization (versus numeric) via unitary labelled graphs (ULG) (\Cref{def:ULG})\cite{BCO17}. To set this up, we decompose $H_V$ so that we can handle the conditional edges in a later step:
$H_V = \ketbra00_c \otimes H_{V,0} + \ketbra11_c \otimes H_{V,1}$ with $H_{V,0},H_{V,1}$ as depicted in \Cref{fig:HV-decomp}.
The graphical representation makes the two different paths for the $\ket{0}_c$ and $\ket{1}_c$ states evident, and one can see that they traverse the $B$ ($X$-axis) and $\sigma_Z$ ($Y$-axis) edges in a different order.
\begin{figure}[t]
  \begin{subfigure}[t]{0.47\textwidth}
    \centering
    \ifx\figcommon\undefined\input{common.tex}\fi

\begin{tikzpicture}%[x=1cm,y=1cm]

\draw[component] (0.8,7.2) -- (4.2,7.2) -- (4.2,4.8) -- (0.8,4.8) -- cycle;
\draw[component] (2.2,3.2) -- (2.2,2.2) -- (4.2,2.2) -- (4.2,0.8) -- (0.8,0.8) -- (0.8,3.2) -- cycle;
\draw[component] (4.8,2.2) -- (4.8,0.8) -- (6.2,0.8) -- (6.2,1.8) -- (7.2,1.8) -- (7.2,3.2) -- (5.8,3.2) -- (5.8,2.2) -- cycle;
\draw[component] (4.8,7.2) -- (7.2,7.2) -- (7.2,3.8) -- (5.8,3.8) -- (5.8,4.8) -- (4.8,4.8) -- cycle;
\draw[component] (0.8,4.2) -- (0.8,3.8) -- (2.2,3.8) -- (2.2,4.2) -- cycle;

\makeEdgesXY{1}{1}{7}{transition}

\makeEdgesXY{2}{6}{7}{transition}
\makeEdgesXY{2}{1}{3}{transition}

\makeEdgesXY{3}{6}{7}{transition}

\makeEdgesX{3}{1}{5}{transition}

\makeEdgesXY{4}{1}{7}{unitary}

\makeEdgesXY{5}{6}{7}{transition}
\makeEdgesXY{5}{1}{2}{transition}

\makeEdgesXY{6}{1}{6}{transition}

\makeGridBR{7}{7}

\node[anchor=north east] at (0.8,7.2) {$\TL_0$};
\node[anchor=north west] at (7.2,7.2) {$\TR_0$};
\node[anchor=south east] at (0.8,0.8) {$\BL_0$};
\node[anchor=south west] at (7.2,1.8) {$\BR_0$};

\node[anchor=east] at (0.8,4) {$\M_0$};
\end{tikzpicture}
    \caption{$H_{V,0}$}
  \end{subfigure}
  \hfill
  \begin{subfigure}[t]{0.47\textwidth}
    \centering
    \ifx\figcommon\undefined\input{common.tex}\fi

\begin{tikzpicture}%[x=1cm,y=1cm]

\draw[component] (0.8,7.2) -- (3.2,7.2) -- (3.2,3.8) -- (0.8,3.8) -- cycle;
\draw[component] (3.2,3.2) -- (3.2,2.2) -- (4.2,2.2) -- (4.2,0.8) -- (0.8,0.8) -- (0.8,3.2) -- cycle;

\draw[component] (4.8,2.2) -- (4.8,0.8) -- (6.2,0.8) -- (6.2,1.8) -- (7.2,1.8) -- (7.2,3.2) -- (5.8,3.2) -- (5.8,2.2) -- cycle;
\draw[component] (4.8,7.2) -- (7.2,7.2) -- (7.2,3.8) -- (5.8,3.8) -- (5.8,5.8) -- (4.8,5.8) -- cycle;
\draw[component] (3.8,7.2) -- (3.8,5.8) -- (4.2,5.8) -- (4.2,7.2) -- cycle;

\makeEdgesXY{1}{1}{7}{transition}

\makeEdgesXY{2}{6}{7}{transition}
\makeEdgesXY{2}{1}{3}{transition}

\makeEdgesXY{3}{6}{7}{transition}

\makeEdgesY{3}{1}{5}{transition}

\makeEdgesXY{4}{1}{7}{unitary}

\makeEdgesXY{5}{6}{7}{transition}
\makeEdgesXY{5}{1}{2}{transition}

\makeEdgesXY{6}{1}{6}{transition}

\makeGridBR{7}{7}

% \node[anchor=north east] at (0.8,7.2) {$\TL_1$};
% \node[anchor=north west] at (7.2,7.2) {$\TR_1$};
% \node[anchor=south east] at (0.8,0.8) {$\BL_1$};
% \node[anchor=south west] at (7.2,1.8) {$\BR_1$};
% \node[anchor=south] at (4,7.2) {$\M_1$};
\end{tikzpicture}
    \caption{$H_{V,1}$}
  \end{subfigure}
  \caption{Decomposition of $H_V = \ketbra00_c \otimes H_{V,0} + \ketbra11_c \otimes H_{V,1}$.}
  \label{fig:HV-decomp}
\end{figure}
% TODO reference (2)
\begin{lemma}\label{lem:conditionHV}
  $\Pi_K H_V\Pi_K$ satisfies (2) from \Cref{lem:connect} with $U_i=V_{c,t}$, $\ket{u_i} = \ket{1,1}_{XY}$, and $\ket{v_i} = \ket{7,6}_{XY}$, where $\Pi_K$ is the projector onto the clock space\footnote{For clarity, $K$ is the label set for the vertices in \Cref{fig:HV}. $(7,7)$ is omitted from $K$, as depicted by the missing vertex in the bottom right corner of \Cref{fig:HV}.} $K:=[7]^2\setminus\{(7,7)\}$.
\end{lemma}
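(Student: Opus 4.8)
The plan is to reduce the analysis of $\Null(\Pi_K H_V\Pi_K)$ to that of \emph{simple} unitary labelled graphs by two successive decompositions, and then to read off the required history states. \textbf{Step 1 (split over the control qubit).} Start from $H_V=\ketbra00_c\otimes H_{V,0}+\ketbra11_c\otimes H_{V,1}$ (\Cref{fig:HV-decomp}), where $H_{V,b}$ acts trivially on the control qubit $c$. Since $\ketbra00_c,\ketbra11_c$ are orthogonal projectors on $c$ and both summands are positive semidefinite, $\Pi_K H_V\Pi_K$ is block diagonal in $c$, so a clock-space nullvector of $\Pi_K H_V\Pi_K$ is exactly a state $\ket0_c\ket{\Psi_0}+\ket1_c\ket{\Psi_1}$ with $\ket{\Psi_b}$ a clock-space nullvector of $\Pi_K H_{V,b}\Pi_K$. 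It therefore suffices to characterise the latter for $b\in\{0,1\}$.

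\textbf{Step 2 (reduce each branch to a simple ULG).} Using the clock-space identity \eqref{eq:Piclock hii+1 Piclock} for each transition projector, $\Pi_K H_{V,b}\Pi_K$ is a positive multiple of the propagation Hamiltonian of a ULG $\Gamma_b$ on vertex set $K$ whose labelled edges are exactly those of \Cref{fig:HV-decomp}: solid edges carry the identity, the step-$4$ red horizontal edge carries $B$ on the target qubit $t$, and the step-$4$ red vertical edge carries $\sigma^Z$ on $t$. I would then exhibit $\Gamma_b$ as the sub-regions $\TL_b,\M_b,\BL_b,\BR_b,\TR_b$ of \Cref{fig:HV-decomp} glued along connecting transitions, observe that each sub-region is a \emph{simple} ULG (internally all-identity, hence trivially simple, with propagation nullspace a single history state per vector by \cite{BCO17}), and apply the Nullspace Connection Lemma (\Cref{lem:connect}) \emph{again}, now to $H_{V,b}$ in this glued form. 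This yields that every clock-space nullvector of $\Pi_K H_{V,b}\Pi_K$ is a history state $\ket{\eta_b(\psi)}$ supported along $\Gamma_b$, equal to $\ket\psi$ at the clock vertex $(1,1)$ and to $W_b\ket\psi$ at the clock vertex $(7,6)$, where $\psi$ ranges over the compute register other than $c$; following the two routes shown in \Cref{fig:HV-decomp}, the $\ket0_c$ branch applies $B$ then $\sigma^Z$ on $t$ while the $\ket1_c$ branch applies $\sigma^Z$ then $B$, so $W_0=\sigma^Z B$ and $W_1=B\sigma^Z$ (acting on $t$, identity elsewhere).

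\textbf{Step 3 (recombine).} By Steps 1 and 2, the clock-space nullvectors of $\Pi_K H_V\Pi_K$ are exactly the states $\ket{\eta(\phi)}:=\ket0_c\ket{\eta_0(\phi_0)}+\ket1_c\ket{\eta_1(\phi_1)}$, parametrised by $\ket\phi=\ket0_c\ket{\phi_0}+\ket1_c\ket{\phi_1}$ in the compute register. Such a $\ket{\eta(\phi)}$ has coefficient $\ket\phi$ at the clock vertex $(1,1)$ and coefficient $\ket0_c\,\sigma^Z B\ket{\phi_0}+\ket1_c\,B\sigma^Z\ket{\phi_1}$ at the clock vertex $(7,6)$; since $V=\ketbra00\otimes\sigma^Z B+\ketbra11\otimes B\sigma^Z$, this coefficient equals $V_{c,t}\ket\phi$. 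Thus the $\ket{\eta(\phi)}$ are precisely the history states required by condition (2) of \Cref{lem:connect} for this gadget, with $U_i=V_{c,t}$, $\ket{u_i}=\ket{1,1}_{XY}$, and $\ket{v_i}=\ket{7,6}_{XY}$.

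\textbf{The main obstacle} is Step 2: verifying that $\Gamma_b$ really is the glued simple-ULG system of \Cref{fig:HV-decomp}, that this gluing meets the hypotheses of \Cref{lem:connect} so the nested application is legitimate, and that no spurious zero-energy states survive. Concretely, one must show that the connected components of $\Gamma_b$ not reached from $(1,1)$ are non-simple and hence contribute nothing --- e.g.\ the four step-$4$ red edges around the square $(4,4),(5,4),(5,5),(4,5)$ have holonomy $\sigma^Z B^\dagger\sigma^Z B=-i\,\sigma^X$, which has no nonzero fixed vector, so that component's propagation nullspace is $\{0\}$ --- while the component that \emph{is} reached must be checked to carry no hidden holonomy and to realise the transport $W_b$ from $(1,1)$ to $(7,6)$. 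This graph-theoretic bookkeeping is exactly what replaces the partially numeric gadget analysis of \cite{GN13}.
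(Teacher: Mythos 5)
Your Steps 1 and 3 match the paper exactly: decompose $H_V = \ketbra00_c\otimes H_{V,0} + \ketbra11_c\otimes H_{V,1}$, show the centre square of each branch is a frustrated ULG via a non-identity cycle holonomy (minor slip: $\sigma^Z B^\dagger\sigma^Z B = B^2 = i\sigma^X$, not $-i\sigma^X$, but either way it has no eigenvalue $1$, so the frustration argument stands), and reconstruct $V_{c,t}$ by recombining the two branches.

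Step 2, however, diverges from the paper and is where the gap lies. You propose to view the outer component of each $H_{V,b}$ as the sub-regions $\TL_b,\TR_b,\BR_b,\BL_b,\M_b$ ``glued along connecting transitions'' and to apply \Cref{lem:connect} \emph{again} in nested fashion. But \Cref{lem:connect} as stated requires the connecting Hamiltonian $H_2$ to be a sum of exactly one $h_{v_i,u_{i+1}}(V_i)$ per consecutive pair $(K_i,K_{i+1})$, with a designated single entry vertex $u_{i+1}$ and exit vertex $v_i$. The sub-regions in \Cref{fig:HV-decomp} are instead joined by \emph{several parallel} red edges (e.g.\ $\TL_0$ and $\TR_0$ are connected by $B$-edges on rows $1$, $2$ and $3$ simultaneously), so the hypothesis of the lemma simply fails and the nested application is not legitimate as written. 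The lemma also cannot be straightforwardly salvaged by declaring the extra parallel edges ``redundant'': they are not redundant relative to the sub-region null spaces, they are precisely what forces the history-state coefficients to be consistent across the block boundary. The paper sidesteps this issue entirely by not invoking \Cref{lem:connect} at this level: it treats the whole outer component as a single \emph{simple} ULG and invokes \cite[Lemma 42]{BCO17}, which characterises the null space of a simple-ULG propagation Hamiltonian directly, without any chain or single-edge assumption, producing the history state
\begin{equation*}
\ket{\psi_0(\alpha'_j)} = \ket{\alpha'_j}\ket{\TL_0} + B_t\ket{\alpha'_j}\ket{\TR_0}+ (\sigma^ZB)_t\ket{\alpha'_j}\ket{\BR_0} +(B^\dagger\sigma^ZB)_{t}\ket{\alpha'_j}\ket{\BL_0} +(\sigma^ZB^\dagger\sigma^ZB)_{t}\ket{\alpha'_j}\ket{\M_0}
\end{equation*}
in one stroke, from which conditions (2a)--(2f) are read off (the five sub-region labels there are only used as a bookkeeping device for the uniform superpositions, not as separate gadgets fed into \Cref{lem:connect}). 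So you have correctly identified all the moving parts, but you should replace the nested Connection-Lemma step with a direct appeal to \cite[Lemma 42]{BCO17} on the simple outer component, or else formulate and prove a strengthened Connection Lemma that tolerates multiple parallel connecting edges between consecutive blocks.
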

\begin{proof}
  In the following, we apply the projector onto the clock space $\Pi_K$ implicitly.
  Then $H_{V,0}$ and $H_{V,1}$ are unitary labelled graphs (ULG, \Cref{def:ULG}).
  Their respective center components are frustrated because they contain (by flipping two edges and inverting their unitaries using \cite[Proposition 39]{BCO17}) a directed cycle $B\sigma^ZB^\dagger\sigma^Z\ne\I$.
  Hence, their nullspaces have only support on the outer connected component, which is \emph{simple} (\Cref{def:simple}, i.e. the product of unitaries on any cycle is $\I$).
  To see that a directed cycle $\ne\I$ is necessarily frustrated, remove an edge so that the resulting ULG is simple.\footnote{Properties of frustrated ULGs are also discussed in \cite{BC18}.}
  Then the nullspace of that ULG \cite[Lemma 42]{BCO17} is not in the nullspace of the removed edge.

  Let $Z'$ be the computational space $Z$ without the control qubit $c$, $\{\ket{\alpha'_j}\}_{j\in [d_Z/2]}$ an orthonormal basis of $Z'$ ($d_Z$ the dimension of $Z$). Define for $H_{V,0}$ the states $\ket{\mathcal{\TL}_0},\ket{\TR_0},\ket{\BL_0},\ket{\BR_0},\ket{\M_0}$ (denoting, e.g., $\TL$ for ``top-left'') and for $H_{V,1}$ states $\ket{\mathcal{\TL}_1},\ket{\TR_1},\ket{\BL_1},\ket{\BR_1},\ket{\M_1}$  as the non-normalized uniform superpositions\footnote{For example, $\ket{\mathcal{TL}_0}$ is an equal superpositions over all vertices/clock states in \Cref{fig:HV-decomp}(a)'s top-left grey-shaded block.} over the corresponding clock states in \Cref{fig:HV-decomp}.
  By \cite[Lemma 42]{BCO17}, $\Null(H_{V,0})$ and $\Null(H_{V,1})$ are respectively spanned by vectors given by, for $j\in[d_Z/2]$:
  \begin{subequations}
  \begin{align}
    \ket{\psi_0(\alpha'_j)} &= \ket{\alpha'_j}\ket{\TL_0} + B_t\ket{\alpha'_j}\ket{\TR_0}+ (\sigma^ZB)_t\ket{\alpha'_j}\ket{\BR_0} \\
    &\quad+(B^\dagger\sigma^ZB)_{t}\ket{\alpha'_j}\ket{\BL_0} +(\sigma^ZB^\dagger\sigma^ZB)_{t}\ket{\alpha'_j}\ket{\M_0},\nonumber\\
    \ket{\psi_1(\alpha'_j)} &= \ket{\alpha'_j}\ket{\TL_1} + \sigma^Z_t\ket{\alpha'_j}\ket{\BL_1}+ (B\sigma^Z)_t\ket{\alpha'_j}\ket{\BR_1} \\
    &\quad+(\sigma^ZB\sigma^Z)_{t}\ket{\alpha'_j}\ket{\TR_1} +(B^\dagger\sigma^ZB\sigma^Z)_{t}\ket{\alpha'_j}\ket{\M_1},\nonumber
  \end{align}
  \end{subequations}
  Thus $\Null(H_V)$ is spanned by
  \begin{equation}
    \ket{\psi(\alpha_{b,j})} = \ket{b}_c\otimes \ket{\psi_b(\alpha'_j)}_{Z'XY}
  \end{equation}
  for $b\in\bin$, $j\in[d_Z/2]$, where $\{\ket{\alpha_{b,j}}\}_{b\in\bin,j\in [d_Z/2]}$ with $\ket{\alpha_{b,j}} = \ket{b}_{c}\ket{\alpha'_{j}}_{Z'}$ is an orthonormal basis of $Z$.
  Hence, $\Null(H_V)$ satisfies the property (2a) of \Cref{lem:connect}.

  Next, the function $L$ with $L\ket{\alpha_{b,j}} = \ket{\psi(\alpha_{b,j})}$ is clearly linear and $L^\dagger L = \lambda\I$ for $\lambda\in O(1)$, which implies (2b).
  Property (2c) holds by definition, and (2d) is trivial.

  Finally, $\bra{1,1}_{XY}\ket{\psi(\alpha_{b,j})} = \ket{b}_c\ket{\alpha'_j}_{Z'} = \ket{\alpha_{b,j}}_Z$ and
  \begin{equation}
    \bra{7,6}_{XY}\ket{\psi(\alpha_{b,j})} = \left\{\!\begin{aligned}
      &\ket{0}_c\otimes (\sigma^ZB)_t\ket{\alpha'_j}_{Z'},& b=0\\[1ex]
      &\ket{1}_c\otimes (B\sigma^Z)_t\ket{\alpha'_j}_{Z'},& b=1
    \end{aligned}\right\}
    = V_{c,t}\ket{\alpha_{b,j}},
  \end{equation}
  implying (2e) and (2f). This completes the proof.
\end{proof}

\subsection{Single-qubit gates}
Thus far, we have discussed the implementation of {two}-qubit unitaries. Next, {single}-qubit unitaries $U$ are implemented with a simpler gadget $H_{U}$, depicted in \Cref{fig:HU}. $H_U$ also has the necessary properties for \Cref{lem:connect}; this is stated in \Cref{lem:conditionHU}. Its proof is analogous to \Cref{lem:conditionHV}, and thus omitted.
\begin{figure}
  \centering
  \ifx\figcommon\undefined\input{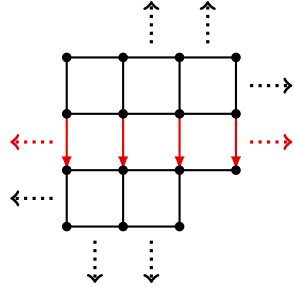}\fi
{
\renewcommand\dC{5}
\begin{tikzpicture}%[x=1cm,y=1cm]

\makeEdgesDownXY{1}{1}{4}{transition}{arrow cont}
\makeEdgesUpXY{3}{1}{3}{transition}{arrow cont}

\makeEdgesX{2}{1}{4}{transition}
\makeArrowUpX{2}{1}{arrow cont}
\makeArrowDownX{2}{4}{arrow cont}

\makeEdgesY{2}{1}{4}{unitary}
\makeArrowUpY{2}{1}{arrow cont unitary}
\makeArrowDownY{2}{4}{arrow cont unitary}

\makeGridBR{4}{4}

\end{tikzpicture}
}
  \caption{Graphical representation of the Hamiltonian $H_U$ implementing a single-qubit unitary $U$ on computational qubit $z$.}
  \label{fig:HU}
\end{figure}
\begin{equation}
\begin{aligned}
  H_U &= (h_{1,2})\otimes C_{\ge 1} + (h_{2,3})_X + (h_{3,4})\otimes C_{\le 3}\\
    &\,+ C_{\ge 1}\otimes (h_{1,2}) + h_{2,3}(U)_{z,Y} + C_{\le 3}\otimes(h_{3,4})
\end{aligned}
\end{equation}
\begin{lemma}\label{lem:conditionHU}
  $\Pi_K H_{U}\Pi_K$ satisfies (2) from \Cref{lem:connect} with $U_i=U_z$, $\ket{u_i} = \ket{1,1}_{XY}$, and $\ket{v_i} = \ket{4,3}_{XY}$, where $\Pi_K$ is the projector onto the clock space $K:=[4]^2\setminus\{(4,4)\}$.
\end{lemma}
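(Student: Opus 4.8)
The plan is to follow the proof of \Cref{lem:conditionHV} almost verbatim, the key simplification being that a single-qubit gate has no control register, so there is no case split $H_U=\ketbra00_c\otimes H_{U,0}+\ketbra11_c\otimes H_{U,1}$ and only one ``layer'' of non-identity edges to handle. First I would pass to the clock space (the constants in \Cref{eq:Piclock hii+1 Piclock,eq:Piclock Cge Piclock,eq:Piclock Cle Piclock} do not affect the nullspace) and observe that $\Pi_K H_U\Pi_K$ is directly a unitary labelled graph (\Cref{def:ULG}) on the vertex set $K=[4]^2\setminus\{(4,4)\}$, with the compute register $Z$ as the Hilbert space at each vertex and the only non-identity labels being the $U_z$ carried by the four $h_{2,3}(U)_{z,Y}$ edges $(x,2)\leftrightarrow(x,3)$, $x\in[4]$.

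Next I would check that this ULG is \emph{simple} (\Cref{def:simple}). Reading off \Cref{fig:HU} and the displayed formula for $H_U$, the identity sub-ULG has exactly two connected components: $\mathcal A$, the $y\in\{1,2\}$ rows (containing $(1,1)$), and $\mathcal B$, the $y=3$ row together with the length-three $y=4$ row (containing $(4,3)$), joined only by the four $U_z$-edges running from $\mathcal A$ into $\mathcal B$. Since all of these bridging edges carry the \emph{same} unitary, the crossings along any directed cycle alternate $\mathcal A\to\mathcal B$, $\mathcal B\to\mathcal A$, so the product of labels around any cycle telescopes to $\I$ via $U_zU_z^\dagger=\I$ --- in contrast to the frustrated center components in \Cref{lem:conditionHV}, here there is no frustrated cycle and no need to invoke \cite[Proposition 39]{BCO17}. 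Then \cite[Lemma 42]{BCO17} yields, for any orthonormal basis $\{\ket{\alpha_j}\}_j$ of $Z$ (and taking $(1,1)$ as reference vertex),
\[
  \Null(\Pi_K H_U\Pi_K)=\Span\{\ket{\psi(\alpha_j)}\}_j,\qquad \ket{\psi(\alpha_j)}=\ket{\alpha_j}\ket{\mathcal A}+U_z\ket{\alpha_j}\ket{\mathcal B},
\]
with $\ket{\mathcal A},\ket{\mathcal B}$ the unnormalised uniform superpositions over the clock states of the two components; this is property (2a) of \Cref{lem:connect}. The map $L\colon\ket{\alpha_j}\mapsto\ket{\psi(\alpha_j)}$ is linear with $L^\dagger L=(\abs{\mathcal A}+\abs{\mathcal B})\I$, a constant multiple of $\I$, giving (2b) with $\lambda\in O(1)$; (2c) and (2d) follow exactly as in \Cref{lem:conditionHV}. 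Finally, $(1,1)\in\mathcal A$ has coefficient $\I$ and $(4,3)\in\mathcal B$ has coefficient $U_z$ in $\ket{\psi(\alpha_j)}$, so $\bra{1,1}_{XY}\ket{\psi(\alpha_j)}=\ket{\alpha_j}$ and $\bra{4,3}_{XY}\ket{\psi(\alpha_j)}=U_z\ket{\alpha_j}$, which are precisely (2e)--(2f) with $U_i=U_z$, $\ket{u_i}=\ket{1,1}_{XY}$, $\ket{v_i}=\ket{4,3}_{XY}$.

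The one step I would double-check carefully is the connected-component bookkeeping in the simplicity argument: one must verify that the edge restrictions (the $h_{3,4}$ transitions being active only when the opposite clock register is $\le 3$), together with the missing corner vertex $(4,4)$, neither split $\mathcal B$ further nor open a second identity path between $\mathcal A$ and $\mathcal B$, since either failure would change which of the sub-properties (2a)--(2f) hold. This is routine to confirm directly from \Cref{fig:HU} --- presumably why the authors omit it --- and otherwise the proof is strictly easier than that of \Cref{lem:conditionHV}.
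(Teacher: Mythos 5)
Your proof is correct and supplies exactly the argument that the paper declares ``analogous to \Cref{lem:conditionHV}'' and omits: with no control qubit there is a single ULG whose identity subgraph has precisely the two components $\mathcal{A}$ (rows $y\in\{1,2\}$, $8$ vertices) and $\mathcal{B}$ (row $y=3$ plus $(1,4),(2,4),(3,4)$, $7$ vertices), bridged only by the four $U$-edges, so simplicity holds trivially and \cite[Lemma 42]{BCO17} yields the two-block history state with $L^\dagger L = 15\,\I$ and the correct boundary conditions at $(1,1)$ and $(4,3)$. The bookkeeping caveat you flag at the end is indeed the only thing to verify, and it checks out directly from the $C_{\le 3}$ restrictions on the $h_{3,4}$ transitions and the absent corner vertex.
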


\subsection{Combining gadgets to build the full Hamiltonian}
\begin{figure}[t]
  \centering
  % \resizebox{0.65\textwidth}{!}{\input{fig/H.tex}}
  \includegraphics[width=0.7\textwidth]{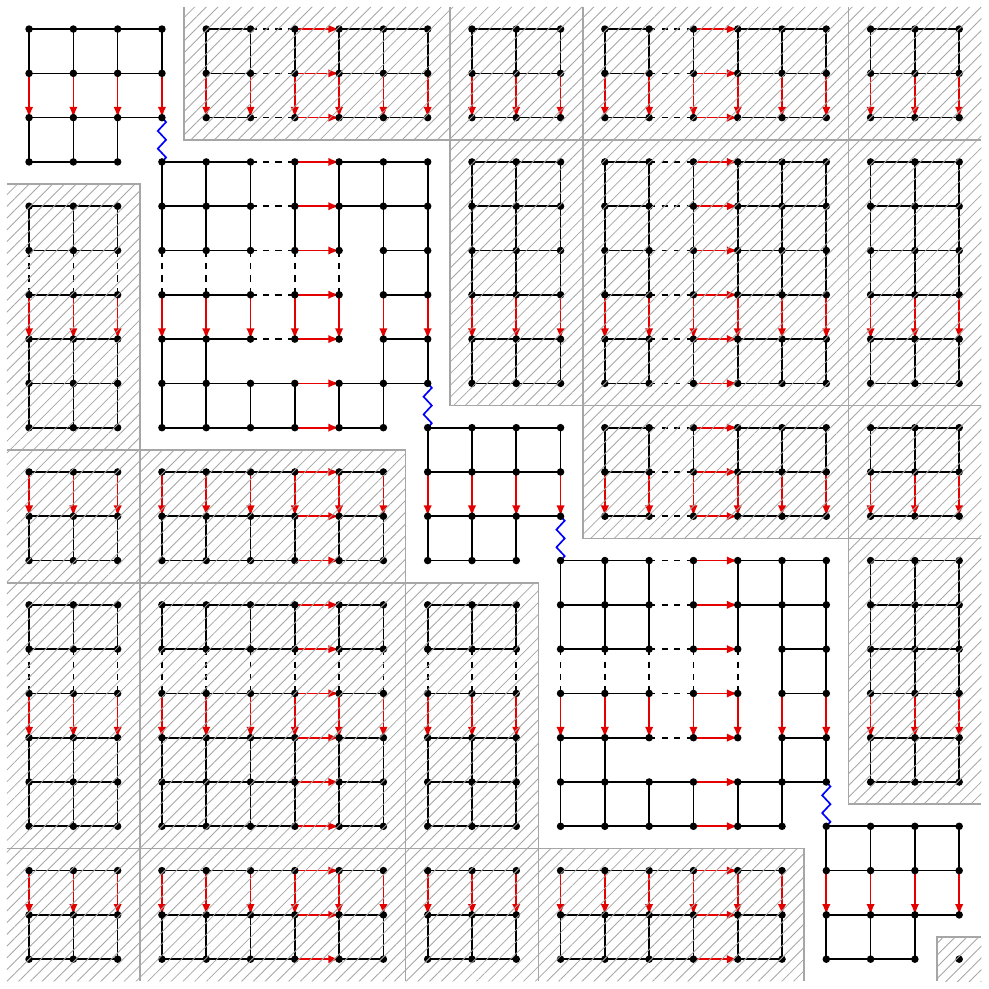}
  \caption{Graphical representation of $\Hdiag + \Hgate+\Hlink$ from \Cref{eq:Hdiag,eq:Htrans1,eq:Htrans2} for $T=2$. Blue zigzag edges represent the new edges from $\Hlink$ not present in $\Hgate$.}
  \label{fig:H}
\end{figure}
To build the full circuit Hamiltonian, we place multiple $H_V$ and $H_U$ gadgets on a diagonal and constraints to penalize the clock states outside these gadgets, as depicted in \Cref{fig:H}.
Let the full circuit consist of gates\footnote{Note we are using $T$ for the number of gates here, whereas \Cref{thm:generic-clock} used $g$ for the number of gates of the $\QMAo$ verifier. This is because in converting the $\QMAo$ verifier into the gadget framework here, we will have $T>g$, i.e. a poly-size blowup in gate count.} $G_1,\dots,G_T$ such that each $G_i$ is either a $V$ gate (in which case we set a counter $s_i:=6$) or a one-qubit unitary (set counter $s_i:=3$).
$V$ acts on $n=n_a+n_p$ qubits, $n_a$ ancilla and $n_p$ proof qubits.
The classical input is assumed to already be embedded inside $V$.
Define the \emph{offset} of each gadget (from the top-left of \Cref{fig:H}) as $t_1:=1$ and
\begin{equation}
  t_i = \sum_{j=1}^{i-1}s_i.
\end{equation}
The first $T/3$ and the last $T/3$ gates shall be single-qubit identity gates, so that we can achieve $\Omega(1/T^2)$ soundness via the ``idling trick'' where the input and output state are repeated on a constant fraction of timesteps.
The full Hamiltonian is then defined as $H = \Hclock' + \Hdiag + \Hgate + \Hlink + \Hin + \Hout$ (see \Cref{fig:H}) with
\begin{subequations}
  {\allowdisplaybreaks
  \begin{align}
    \Hclock' &=(\Hclock)_X + (\Hclock)_Y\label{eq:Hclock'}\\
    \Hdiag &= \sum_{i=2}^T (C_{\ge t_i + 2}\otimes C_{\le t_i} + C_{\le t_i}\otimes C_{\ge t_i+2}) \label{eq:Hdiag} + C_{\ge t_{T+1}}\otimes C_{\ge t_{T+1}}\\
    \Hgate &= \sum_{i=1}^{T} H_{G_i}^{(t_i)} \label{eq:Htrans1}\\
    \Hlink &= \sum_{i=2}^{T} \left(h_{t_i,t_i+1}\otimes C_{\le t_i+1}\right)\label{eq:Htrans2}\\
    \Hin &= \sum_{i=1}^{n_a} \ketbra11_{Z_{i}}\otimes (C_{\le t_{T/3}})_Y\\
    \Hout &= \ketbra00_{Z_{1}}\otimes (C_{\ge t_{2T/3}+1})_Y\label{eq:Hout}
  \end{align}}
\end{subequations}
where the notation $H_{V}^{(j)}$ indicates that the $H_V$ gadget is shifted by $j$ in both $X$- and $Y$-direction.
From $\Hlink$, we only require the terms $h_{t_i,t_i+1}\otimes\ketbrab{C_{t_i+1}}$.
However, these terms cannot be directly implemented using just the terms allowed in \Cref{thm:generic-clock}.
Therefore, $\Hlink{}$ adds many edges already present in $\Hlink$, which can be disregarded in subsequent analysis due to \Cref{lem:redundant}.
Further, the edges outside the squares on the diagonal correspond to the arrows from \Cref{fig:HV,fig:HU} indicating that the edges continue on arrow direction when embedded in a larger clock space.
As can be seen in \Cref{fig:H}, there are no edges between the hatched areas and the squares corresponding to $H_V,H_U$ gadgets, so the continued edges cause no issues.

\begin{proof}[Proof of \Cref{thm:generic-clock}]
  $\gamma(\Hclock') = \gamma(\Hclock)$ and the other Hamiltonians \eqref{eq:Hdiag} to \eqref{eq:Hout} have $\gamma(\cdot) = \Theta(1)$ because they are sums of commuting projectors.

  Let $H_1 = \Hclock' + \Hdiag$.
  We apply \Cref{cor:GN:cor1} to show $\gamma(H_1)\ge \Omega(\gamma(\Hclock))$.
  Define
  \begin{equation}
    S_1:=\Null(H_1) = \HHcomp \otimes \bigoplus_{i=1}^T\KK_{i},
  \end{equation}
  which is the span of the clock states of the $H_U$ and $H_V$ gadgets (i.e. the dots in \Cref{fig:H} outside the hatched areas) with $\KK_{G_i} = \Span\{\ket{C_x}\ket{C_y} \mid (x,y) \in K_{i}\}$ and
  \begin{equation}
    K_{i} = \{t_i + 1,...,t_i+s_i+1\}^2\setminus\{(t_i+s_i+1,t_i+s_i+1)\}.
  \end{equation}
  Then we have $\Gamma = \Span\{\ket{C_x}\ket{C_y}\mid (x,y)\in \overline K\}$, where $\overline K = (\bigcup_{i=1}^TK_i)^C$ is the set of penalized clock states.
  Since the summands of $\Hdiag$ are commuting projectors, we can write $\Pidiag = \prod_{i=1}^k(\I-H_i)$, where $H_1,\dots,H_k$ denote the projectors of $\Hdiag$ as given in \Cref{eq:Hdiag}.
  Let $\ket{v}\in\Gamma$, which can be written as $\ket{v} = \sum_{(x,y)\in\overline K} c_{xy}\ket{C_{x},C_{y}}$.
  Then
  \begin{subequations}
  \begin{align}
    \braketb{v}{\Pidiag} &= \sum_{(x,y)\in\overline K}\abs{c_{xy}}^2\braketb{C_x,C_y}{\Pidiag}\label{eq:bound-Pidiag:a}\\
    &\le \sum_{(x,y)\in\overline K} \abs{c_{xy}}^2\braketb{C_x,C_y}{(\I - H_{i_{xy}})} \le 1-\Theta(1),\label{eq:bound-Pidiag:b}
  \end{align}
  \end{subequations}
  where \eqref{eq:bound-Pidiag:a} holds because  $\bra{C_{x},C_{y}}\Pidiag\ket{C_{x'},C_{y'}} = 0$ if $(x,y) \ne (x',y')$ by condition (5) of \Cref{thm:generic-clock} and \eqref{eq:bound-Pidiag:b} follows from the observation that for each $(x,y)\in\overline K$, there exists $i_{xy}\in [k]$ such that $\braketb{C_x,C_y}{H_{i_{xy}}}\ge \Theta(1)$.

  Let $H_2 = H_1 + \Hgate + \Hlink$.
  By \Cref{cor:GN:cor2}, we get
  \begin{equation}
    \gamma(H_2)\ge \Omega\left(\gamma(\Hclock)\cdot \frac{\gamma((\Hgate + \Hlink)\vert_{S_1})}{\norm{(\Hgate + \Hlink)\vert_{S_1}}}\right),
  \end{equation}
  where $\norm{\Hgate\vert_{S_1}} = \Theta(1)$ as $\Hgate\vert_{S_1}$ is block diagonal with constant-size blocks (squares on the diagonal of \Cref{fig:H} without the zigzag edges).
  $\norm{\Hlink\vert_{S_1}} = \Theta(1)$ as the $h_{i,i+1}$ terms of $\Hlink$ operate on separate rows.

  By \Cref{lem:conditionHV,lem:conditionHU}, $\Null(\Hgate\vert_{S_1}) = \Span\{\ket{\psi_{G_i}(\alpha)}\mid i\in [T], \ket\alpha\in\HHcomp\} =: S_2$, where the $\ket{\psi_{G_i}(\alpha)}$ are defined as $\ket{\psi_i(\alpha)}$ in (2) of \Cref{lem:connect} with Hamiltonians $H_{G_i}\vert_{S_1}$.

  Since $\Hlink\vert_{S_1}$ contains ``redundant terms'' that are already contained in $\Hgate\vert_{S_2}$, define $\Hlink'$ as $\Hlink\vert_{S_1}$ without these edges (i.e. only the zigzag edges in \Cref{fig:H}).
  Thus $\Hlink' = \sum_{i=2}^{T}h'_{t_i,t_i+1}\otimes \ketbrab{C_{t_i+1}}$, where $h'_{i,i+1}(U) = \I\otimes \ketbrab{C_i} + \I\otimes \ketbrab{C_{i+1}} - U^\dagger\otimes\ketbra{C_i}{C_{i+1}} - U\otimes \ketbra{C_{i+1}}{C_i}$.
  Then $\Htrans := \Hgate\vert_{S_1} + \Hlink\vert_{S_1}$ and $\Htrans':=\Hgate\vert_{S_1} + \Hlink'$ have the same nullspace and $\gamma(\cdot)$ (up to a constant factor).
  \Cref{lem:connect} then gives $\gamma(\Htrans') = \Omega(1/T^2)$ and $\Null(\Htrans') = \Span\{\ket{\psihist(\alpha)}\mid\ket\alpha\in\HHcomp\}$, with $\ket{\psihist(\alpha)} = \sum_{i=1}^{T}\ket*{\vphantom{\big(}\psi_{G_i}(G_{i-1}\dotsm G_1 \ket{\alpha})}$.
  Hence, $\gamma(H_2) = \Omega(\gamma(\Hclock)/T^2)$.

  Let $H_3 = H_2 + \Hin$.
  Then $\Null(H_3) = \Span\{\ket{\psihist(0^{n_a}x)}\mid x\in\bin^{n_p}\}$.
  Applying \Cref{cor:GN:cor1} we have $\Gamma = \Span\{\ket{\psihist(x_ax_p)}\mid x_a\in\bin^{n_a}\setminus\{0^{n_a}\},x_p\in\bin^{n_p}\}$.
  Then for any $\ket v\in\Gamma$, $\braketb{v}{\Piin} \le 1 - \Theta(1)$, where $\Piin = (\sum_{x\in\bin^{n_a}}\ketbra{x}{x}_{Z_{1\dots n_a}})\otimes (\I - C_{\le t_{T/3}})_Y$
  is the projector onto the nullspace of $\Hin$, because $\Piin\ket{\psi_{G_i}(\psi_i)}=0$ for $i\in[T/3]$, $\ket{\psi_i} = G_{i-1}\dotsm G_1\ket{x_a}\ket{\psi}$, and $x_a\ne0^{n_a}$.
  Hence $\gamma(H_3) = \Omega(\gamma(\Hclock)/T^2)$.

  Finally, we have $H = H_3 + \Hout$.
  In the YES-case, the circuit accepts some proof $\ket{\psi}\in\CC^{2^{n_p}}$ with probability $1$ and thus $\ket{\psihist(\psiin)}\in\Null(H)$ with $\ket{\psiin}=\ket{0^{n_a}}\ket{\psi}$.
  In the NO-case, the circuit rejects with probability $\ge 2/3$.
  Hence, $\Null(H) = \{0\}$ and we can apply the Geometric Lemma \ref{lem:geometric}.
  The output state is repeated in the last $T/3$ gates, and so $\braketb{\psihist(\psiin)}{\Piout} \le 1-\Theta(1)$ for any proof $\ket\psi$.
  Thus, $\lmin(H) = \Omega(\gamma(\Hclock)/T^2)$.
\end{proof}

\section{\texorpdfstring{(2,5)-QSAT is QMA$_{\text{1}}$}{(2,5)-QSAT is QMA1}-complete}\label{sec:clock}

In this section, we define the clock to prove the $\QMAo$-completeness of $(2,5)$-QSAT (\Cref{thm:25}) and $(4,3)$-QSAT (Theorem \ref{thm:34}) using \Cref{thm:generic-clock}.
The main difficulty was the construction of a suitable clock for \Cref{thm:generic-clock} using just $(2,5)$-projectors.
The hardness of $(3,4)$-QSAT is obtained almost for free as part of the proof.
Since our construction may seem somewhat arbitrary, we give an intuition in \Cref{sec:intuition} by sketching a proof for the $\QMAo$-hardness of $(2,7)$-QSAT.

\subsection{Clock Hamiltonian definition}

We begin by defining several logical states.
A $5$-dit and a qubit are combined as follows to construct a logical $4$-dit (normalization factors are omitted):
\newcommand{\LU}{\mathbf{U}}
\newcommand{\LA}{\mathbf{A}}
\newcommand{\LD}{\mathbf{D}}
\newcommand{\Hlogical}{{H_{\mathrm{logical}}}}
\begin{subequations}\label{eq:logical4}
  \begin{align}
    \ket{\LU} &= \ket{u,0} + \ket{u',1}\\
    \ket{\LA_1} &= \ket{a_1,0}\\
    \ket{\LA_2} &= \ket{a_2,0}\\
    \ket{\LD} &= \ket{d,0}
  \end{align}
\end{subequations}
Here the standard basis states of the $5$-qudit are labeled $u,a_1,a_2,d,u'$.
The labels $u,a,d$ maybe be interpreted as ``unborn'', ``alive'', and ``dead'', respectively, following the convention of \cite{ER08}.
\begin{restatable}{lemma}{lemmaHlogicalIV}\label{lem:HlogicalIV}
  We have $\splitatcommas{\Null(\Hlogical_4)=\Span\{\ket{\LU},\ket{\LA_1},\ket{\LA_2},\ket{\LD}\}}$ for
  \begin{equation}
    \begin{aligned}
      \Hlogical_4 =\; &\ketbrab{u,1} + \ketbrab{u',0} + \ketbrab{a_1,1} + \ketbrab{a_2,1} + \ketbrab{d,1} + \phantom{a} \\&\ketbraa{(\ket{u,0}-\ket{u',1})}/2
    \end{aligned}
  \end{equation}
\end{restatable}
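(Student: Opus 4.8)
The plan is to show the two inclusions $\Span\{\ket{\LU},\ket{\LA_1},\ket{\LA_2},\ket{\LD}\}\subseteq\Null(\Hlogical_4)$ and $\Null(\Hlogical_4)\subseteq\Span\{\ket{\LU},\ket{\LA_1},\ket{\LA_2},\ket{\LD}\}$ separately, exploiting that $\Hlogical_4$ is a sum of six orthogonal (hence commuting) rank-one projectors, so its nullspace is exactly the orthogonal complement of the span of the six vectors being projected onto. First I would verify the easy inclusion: each of $\ket{\LA_1}=\ket{a_1,0}$, $\ket{\LA_2}=\ket{a_2,0}$, $\ket{\LD}=\ket{d,0}$ is manifestly annihilated by all six terms (it is orthogonal to $\ket{u,1},\ket{u',0},\ket{a_1,1},\ket{a_2,1},\ket{d,1}$ since the $5$-dit label or the qubit value differs, and orthogonal to $\ket{u,0}-\ket{u',1}$ since its $5$-dit label is none of $u,u'$). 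For $\ket{\LU}=\ket{u,0}+\ket{u',1}$, it is orthogonal to $\ket{u,1},\ket{u',0},\ket{a_1,1},\ket{a_2,1},\ket{d,1}$ termwise, and $\braket{(u,0)-(u',1)}{\LU}=1-1=0$, so the last term also annihilates it. Hence all four basis states lie in $\Null(\Hlogical_4)$.

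For the reverse inclusion, I would observe that the $10$-dimensional space $\CC^5\otimes\CC^2$ decomposes into the computational basis $\{\ket{s,b}: s\in\{u,a_1,a_2,d,u'\}, b\in\{0,1\}\}$, and that the six projected-onto vectors $\ket{u,1},\ket{u',0},\ket{a_1,1},\ket{a_2,1},\ket{d,1},\ \tfrac{1}{\sqrt2}(\ket{u,0}-\ket{u',1})$ are pairwise orthogonal (the last one is supported on $\{(u,0),(u',1)\}$, disjoint from the supports of the first five singletons). Therefore $\Hlogical_4$ is (up to the harmless factor $1/2$ on the last term, which does not change its kernel) a projector onto the $6$-dimensional span of these vectors, and $\Null(\Hlogical_4)$ is exactly the orthogonal complement, of dimension $10-6=4$. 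Since the four states $\ket{\LU},\ket{\LA_1},\ket{\LA_2},\ket{\LD}$ are linearly independent and all lie in this $4$-dimensional nullspace, they span it. (The linear independence is immediate: their supports in the computational basis are $\{(u,0),(u',1)\},\{(a_1,0)\},\{(a_2,0)\},\{(d,0)\}$, which are pairwise disjoint, so any vanishing linear combination forces all coefficients to be zero.)

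There is essentially no hard step here; the only point requiring a moment's care is confirming the mutual orthogonality of the six projected-onto vectors (so that their sum really is a projector and a dimension count applies), and in particular that the rank-one term $\ketbraa{(\ket{u,0}-\ket{u',1})}/2$ has support disjoint from the five standard-basis penalty terms — which holds because $\ket{u,1}$ and $\ket{u',0}$, not $\ket{u,0}$ and $\ket{u',1}$, are the ones penalized by the singleton terms. Writing this out is a few lines of routine computation. I would present it as: (i) the four basis states lie in the kernel (direct check); (ii) the six penalty vectors are orthonormal after normalization, so $\dim\Null(\Hlogical_4)=4$; (iii) conclude by dimension count and linear independence.
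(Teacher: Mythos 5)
Your proof is correct, and it matches the approach the paper uses for the sibling result (\Cref{lem:HlogicalIII}, proved in \Cref{sec:proofs}): verify the claimed states are annihilated, exhibit an orthogonal set of rank-one penalty vectors to determine $\dim\Image(\Hlogical_4)$, and conclude by dimension count. The paper in fact omits an explicit proof of \Cref{lem:HlogicalIV} entirely, presumably treating it as the routine case where, unlike for $\Hlogical_3$, all six penalty vectors are already pairwise orthogonal, so the bookkeeping of redundant terms is unnecessary.
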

A $5$-dit and two qubits are combined to construct a logical qutrit:
\newcommand{\Lu}{\mathbf{u}}
\newcommand{\La}{\mathbf{a}}
\newcommand{\Ld}{\mathbf{d}}
\begin{subequations}\label{eq:logical3}
  \begin{align}
    \ket{\Lu} &= \ket{u,1,0} + \ket{x,0,0} \\
    \ket{\La} &= \ket{a,0,0} + \ket{x,1,0} \\
    \ket{\Ld} &= \ket{d,0,0} + \ket{d',0,1}
  \end{align}
\end{subequations}
Here the standard basis states of the $5$-qudit are labeled $u,a,d,x,d'$.
\begin{restatable}{lemma}{lemmaHlogicalIII}\label{lem:HlogicalIII}
We have $\splitatcommas{\Null(\Hlogical_3)=\Span(\ket{\Lu},\ket{\La},\ket{\Ld})}$ for
\begin{equation}
  \begin{aligned}
    \Hlogical_3 =\;
    &\ketbrab{u,0}_{\alpha\beta} + \ketbrab{u,1}_{\alpha\gamma} + \phantom{a}\\
    &\ketbrab{a,1}_{\alpha\beta} + \ketbrab{a,1}_{\alpha\gamma} + \phantom{a}\\
    &\ketbrab{d,1}_{\alpha\beta} + \ketbrab{d,1}_{\alpha\gamma} + \phantom{a}\\
    &\ketbrab{d',1}_{\alpha\beta} + \ketbrab{d',0}_{\alpha\gamma} + \phantom{a}\\
    &\ketbrab{x,1}_{\alpha\gamma} + \phantom{a}\\
    &\ketbraa{(\ket{x,0} - \ket{u,1})}_{\alpha\beta}/2 + \phantom{a}\\
    &\ketbraa{(\ket{x,1} - \ket{a,0})}_{\alpha\beta}/2 + \phantom{a}\\
    &\ketbraa{(\ket{d,0} - \ket{d',1})}_{\alpha\gamma}/2,
  \end{aligned}
\end{equation}
where the three qudits are labeled $\alpha,\beta,\gamma$.
\end{restatable}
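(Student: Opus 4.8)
The plan is to use the elementary fact that for a sum of positive semidefinite operators $\Hlogical_3 = \sum_k P_k$, a vector $\ket\phi$ lies in $\Null(\Hlogical_3)$ iff $P_k\ket\phi = 0$ for every $k$; hence the claim splits into the inclusion ``$\supseteq$'' and the dimension bound $\dim\Null(\Hlogical_3)\le 3$. For ``$\supseteq$'' I would simply check, term by term, that each of the twelve projectors annihilates each of $\ket{\Lu},\ket{\La},\ket{\Ld}$. The nine diagonal projectors $\ketbrab{s,b}_{\alpha\beta}$ and $\ketbrab{s,c}_{\alpha\gamma}$ vanish because no term of a logical state ever presents the forbidden symbol/indicator combination (for instance $\ket{\Lu}=\ket{u,1,0}+\ket{x,0,0}$ never has $\alpha\beta=\ket{u,0}$ or $\alpha\gamma=\ket{u,1}$, and similarly for the others). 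The three ``antisymmetric'' projectors $\ketbraa{(\ket{x,0}-\ket{u,1})}_{\alpha\beta}/2$, $\ketbraa{(\ket{x,1}-\ket{a,0})}_{\alpha\beta}/2$, $\ketbraa{(\ket{d,0}-\ket{d',1})}_{\alpha\gamma}/2$ vanish because, restricted to the relevant two-dimensional sector, each logical state is the \emph{symmetric} combination $\ket{x,0}+\ket{u,1}$ (resp.\ $\ket{x,1}+\ket{a,0}$, $\ket{d,0}+\ket{d',1}$), which is orthogonal to the penalized antisymmetric vector.

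For the dimension bound I would first intersect the nullspaces of the nine diagonal projectors alone. Among the $20$ standard basis states $\ket{s,b,c}$ with $s\in\{u,a,d,x,d'\}$ and $b,c\in\bin$, a short case check (one symbol $s$ at a time) shows that exactly six survive all diagonal penalties, namely $\ket{u,1,0},\ket{a,0,0},\ket{d,0,0},\ket{x,0,0},\ket{x,1,0},\ket{d',0,1}$; call their span $W$. It then remains to show the three antisymmetric projectors cut $W$ down to dimension $3$. Writing a general $\ket\phi\in W$ as $c_1\ket{u,1,0}+c_2\ket{a,0,0}+c_3\ket{d,0,0}+c_4\ket{x,0,0}+c_5\ket{x,1,0}+c_6\ket{d',0,1}$, orthogonality to $(\ket{x,0}-\ket{u,1})_{\alpha\beta}$ (which only constrains the $\gamma=0$ sector, since $\ket{x,0,1}$ and $\ket{u,1,1}$ are not in $W$) forces $c_4=c_1$; orthogonality to $(\ket{x,1}-\ket{a,0})_{\alpha\beta}$ forces $c_5=c_2$; and orthogonality to $(\ket{d,0}-\ket{d',1})_{\alpha\gamma}$ (constraining the $\beta=0$ sector, as $\ket{d,1,0}$ and $\ket{d',1,1}$ are not in $W$) forces $c_6=c_3$. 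Thus the common nullspace is exactly $\{c_1\ket{\Lu}+c_2\ket{\La}+c_3\ket{\Ld}\}$, which is $3$-dimensional and contained in $\Span(\ket{\Lu},\ket{\La},\ket{\Ld})$; combined with ``$\supseteq$'' this yields equality.

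I do not expect a genuine obstacle here: the entire content is bookkeeping, and the only thing one must be careful about is (i) the enumeration of which basis states survive the diagonal penalties and (ii) identifying, for each antisymmetric projector, the two-dimensional sector on which it acts nontrivially within $W$ (so that the constraints reduce to a single linear equation each). The same two-step template --- diagonal penalties pin down an explicit surviving basis, then the antisymmetric penalties glue pairs of basis vectors into the symmetric combination --- also proves \Cref{lem:HlogicalIV}: there the surviving basis is $\{\ket{u,0},\ket{u',1},\ket{a_1,0},\ket{a_2,0},\ket{d,0}\}$ and the single projector $\ketbraa{(\ket{u,0}-\ket{u',1})}/2$ imposes the one gluing constraint, leaving $\Span(\ket{\LU},\ket{\LA_1},\ket{\LA_2},\ket{\LD})$.
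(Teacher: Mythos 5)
Your proof is correct and is essentially the paper's argument, just dualized. The paper expands each projector in the standard basis to exhibit a $17$-vector orthogonal basis of $\Image(\Hlogical_3)$ and concludes $\dim\Null(\Hlogical_3)=20-17=3$; you instead intersect nullspaces directly, using the nine diagonal projectors to cut the $20$-dimensional space down to six surviving basis states and then the three antisymmetric projectors to impose one gluing constraint each, arriving at the same $3$-dimensional space.
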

\begin{proof}
  In \Cref{sec:proofs}.
\end{proof}
The main ``feature'' of these logical qudits is the fact that a $1$-local qubit projector suffices to identify the states $\ket{\LU}$, $\ket{\Ld}$, ``$\ket{\Lu}$ or $\ket{\La}$'', and implement a transition between $\ket{\Lu},\ket{\La}$.

We now define the clock states $\ket{C_i} := \ket{C_{i,1}} + \ket{C_{i,2}} + \ket{C_{i,3}} + \ket{C_{i,4}}$, where the states $\ket{C_{i,j}}$ are defined as follows:
\newcommand\ketarray[1]{\ket{\begin{rmatrix}[8mm]#1\end{rmatrix}}}
\begin{subequations}
  \begin{align}
    \ket{C_{1,1}} &= \ketarray{\Lu\LU & \Lu\LU & \dotsm & \Lu\LU & \Lu\LU}\\
    \ket{C_{1,2}} &= \ketarray{\La\LU & \Lu\LU & \dotsm & \Lu\LU & \Lu\LU}\\
    \ket{C_{1,3}} &= \ketarray{\Ld\LU & \Lu\LU & \dotsm & \Lu\LU & \Lu\LU}\\
    \ket{C_{1,4}} &= \ketarray{\Ld\LA_1 & \Lu\LU & \dotsm & \Lu\LU & \Lu\LU}\\[3mm]
    \ket{C_{2,1}} &= \ketarray{\Ld\LA_2 & \Lu\LU & \dotsm & \Lu\LU & \Lu\LU}\\
    \ket{C_{2,2}} &= \ketarray{\Ld\LD & \La\LU & \dotsm & \Lu\LU & \Lu\LU}\\
    \ket{C_{2,3}} &= \ketarray{\Ld\LD & \Ld\LU & \dotsm & \Lu\LU & \Lu\LU}\\
    \ket{C_{2,4}} &= \ketarray{\Ld\LD & \Ld\LA_1 & \dotsm & \Lu\LU & \Lu\LU}\\
    &\;\,\vdots\nonumber\\
    \ket{C_{N-1,1}} &= \ketarray{\Ld\LD & \Ld\LD & \dotsm & \Lu\LU & \Lu\LU}\\
    \ket{C_{N-1,2}} &= \ketarray{\Ld\LD & \Ld\LD & \dotsm & \La\LU & \Lu\LU}\\
    \ket{C_{N-1,3}} &= \ketarray{\Ld\LD & \Ld\LD & \dotsm & \Ld\LU & \Lu\LU}\\
    \ket{C_{N-1,4}} &= \ketarray{\Ld\LD & \Ld\LD & \dotsm & \Ld\LA_1 & \Lu\LU}\\[3mm]
    \ket{C_{N,1}} &= \ketarray{\Ld\LD & \Ld\LD & \dotsm & \Ld\LA_2 & \Lu\LU}\\
    \ket{C_{N,2}} &= \ketarray{\Ld\LD & \Ld\LD & \dotsm & \Ld\LD & \La\LU}\\
    \ket{C_{N,3}} &= \ketarray{\Ld\LD & \Ld\LD & \dotsm & \Ld\LD & \Ld\LU}\\
    \ket{C_{N,4}} &= \ketarray{\Ld\LD & \Ld\LD & \dotsm & \Ld\LD & \Ld\LA_1}
  \end{align}
\end{subequations}
Here, the qudits of the clock register are subdivided into groups of five, where the first three (labeled $\alpha,\beta,\gamma$) represent a logical qutrit and the the last two (labeled $\delta,\epsilon$) represent a logical $4$-qudit.
The logical space is enforced using
\begin{equation}
  \Hlogical = \sum_{i=0}^{N-1} \left((\Hlogical_3)_{\alpha_i\beta_i\gamma_i} + (\Hlogical_4)_{\delta_i\epsilon_i}\right).
\end{equation}
Let $\Ls := (\{\Lu,\La,\Ld\}\times\{\LU, \LA_1, \LA_2, \LD\})^{\times N}$ be the set of labels of logical states, and $\LS := \Span\{\ket{x}\mid x\in \Ls\}$ the vector space spanned by the logical states.
\begin{lemma}\label{lem:Hlogical}
  $\Null(\Hlogical) = \{\ket{a_1,B_1,\dots,a_N,B_N}\mid a_i\in\{\Lu,\La,\Ld\},B_i\in\{\LU,\LA_1,\LA_2,\LD\}\}$.
\end{lemma}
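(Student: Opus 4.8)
The plan is to read the claim off the two preceding lemmas together with the disjoint block structure of $\Hlogical$. Recall that $\Hlogical = \sum_{i}\bigl((\Hlogical_3)_{\alpha_i\beta_i\gamma_i} + (\Hlogical_4)_{\delta_i\epsilon_i}\bigr)$ is a sum of $2N$ positive semidefinite terms (each of $\Hlogical_3$, $\Hlogical_4$ is a sum of projectors, hence $\succeq 0$), where the $N$ qutrit blocks $\{\alpha_i,\beta_i,\gamma_i\}$ and the $N$ logical-$4$-dit blocks $\{\delta_i,\epsilon_i\}$ are pairwise disjoint and together exhaust the clock register. First I would invoke two elementary facts: (i) for positive semidefinite $M_1,\dots,M_k$, one has $\Null(\sum_j M_j) = \bigcap_j \Null(M_j)$; and (ii) for a Hermitian $M\succeq 0$ acting nontrivially only on a subsystem $R$, one has $\Null(M\otimes\I_{\bar R}) = \Null(M)\otimes\H_{\bar R}$.

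Applying (i) to $\Hlogical$ and then (ii) to each of its terms, $\Null(\Hlogical)$ is the intersection, over all $2N$ blocks, of the subspaces $\Null\bigl((\Hlogical_3)_{\alpha_i\beta_i\gamma_i}\bigr)\otimes\H_{\text{rest}}$ and $\Null\bigl((\Hlogical_4)_{\delta_i\epsilon_i}\bigr)\otimes\H_{\text{rest}}$. Since the $2N$ blocks are pairwise disjoint, after grouping the tensor factors by block this intersection collapses (by the standard fact that intersecting subspaces of the form $W\otimes(\text{everything else})$ supported on disjoint factors yields their tensor product) to the single tensor product
\[
  \Null(\Hlogical) \;=\; \Bigl(\bigotimes_{i} \Null\bigl((\Hlogical_3)_{\alpha_i\beta_i\gamma_i}\bigr)\Bigr)\;\otimes\;\Bigl(\bigotimes_{i} \Null\bigl((\Hlogical_4)_{\delta_i\epsilon_i}\bigr)\Bigr).
\]
Now I plug in \Cref{lem:HlogicalIII}, which gives $\Null(\Hlogical_3) = \Span\{\ket\Lu,\ket\La,\ket\Ld\}$, and \Cref{lem:HlogicalIV}, which gives $\Null(\Hlogical_4) = \Span\{\ket\LU,\ket{\LA_1},\ket{\LA_2},\ket\LD\}$. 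Expanding the tensor product into its product basis yields exactly the family $\bigl\{\ket{a_1,B_1,\dots,a_N,B_N} : a_i\in\{\Lu,\La,\Ld\},\ B_i\in\{\LU,\LA_1,\LA_2,\LD\}\bigr\}$ as a spanning set of $\Null(\Hlogical)$, which is the statement (understood, as usual, as ``$\Null(\Hlogical)$ is spanned by this family'').

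To finish cleanly I would also note that these $12^N$ product states are linearly independent: within each qutrit block the three states $\ket\Lu,\ket\La,\ket\Ld$ are pairwise orthogonal, and within each logical-$4$-dit block the four states $\ket\LU,\ket{\LA_1},\ket{\LA_2},\ket\LD$ are pairwise orthogonal (both facts read off directly from \Cref{eq:logical3} and \Cref{eq:logical4}), so the product states over all $2N$ blocks are pairwise orthogonal and hence a basis; this also matches $\dim\Null(\Hlogical) = 12^N = |\Ls|$. There is no real conceptual obstacle here: the lemma is an immediate corollary of \Cref{lem:HlogicalIII} and \Cref{lem:HlogicalIV} plus the disjointness of the blocks, with the only care needed being the bookkeeping of tensor factors in steps (ii)--(iii); the genuine work lies in those two sub-lemmas (with the proof of \Cref{lem:HlogicalIII} deferred to \Cref{sec:proofs}).
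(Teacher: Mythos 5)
Your proof is correct and takes exactly the same route as the paper's, which simply observes in one sentence that the summands of $\Hlogical$ act on pairwise disjoint subsystems and each restricts its subsystem to the relevant logical space by \Cref{lem:HlogicalIII,lem:HlogicalIV}; you have just spelled out the elementary facts about nullspaces of sums of PSD operators on disjoint tensor factors that the paper leaves implicit.
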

\begin{proof}
  Each summand of $\Hlogical$ acts on a unique subset of qubits $\alpha_i\beta_i\gamma_i$ or $\delta_i\epsilon_i$, and restricts that subset to the logical space by \Cref{lem:HlogicalIV,lem:HlogicalIII}.
\end{proof}
Next, we restrict the logical states to $\ket{C_{i,j}}$ by defining the Hamiltonian $\Hclockp{1}$:\footnote{The left side of this definition gives the logical interpretation of the Hamiltonian terms on the right side.}
\begin{subequations}\label{eq:Hclock1}
  \begin{align}
    % &\text{$\LU$ implies $\LU$ to the right:} &&\phantom{+}\sum_{i=1}^{N-1}\sum_{v\in\{a_1,a_2,d\}}\ketbrab{1,v}_{\epsilon_i\delta_{i+1}}\label{eq:HUU}\\
    &\text{$\LU$ implies $\Lu$ to the right:} &&\quad\sum_{i=1}^{N-1}\sum_{v\in\{a,d\}}\ketbrab{1,v}_{\epsilon_i\alpha_{i+1}} \label{eq:HUu} \\
    &\text{$\Ld$ implies $\Ld$ to the left:} &&+\sum_{i=1}^{N-1}\ketbrab{1,d}_{\beta_{i}\alpha_{i+1}} \label{eq:Hdd}\\
    &\text{$\Ld$ implies $\LD$ to the left:} &&+\sum_{i=1}^{N-1}\sum_{v\in\{u,a_1,a_2\}}\ketbrab{v,1}_{\epsilon_{i}\gamma_{i+1}}\label{eq:HDd}\\
    &\text{$\Lu,\La$ implies $\LU$ to the right:} &&+\sum_{i=1}^{N}\sum_{v\in\{a_1,a_2,d\}}\ketbrab{1,v}_{\beta_i\delta_i} \label{eq:HuaU}\\
    &\text{$\LU$ or $\LA_1$ is last:} &&+\ketbrab{a_2}_{\delta_N}+\ketbrab{d}_{\delta_N} \label{eq:UA1last}
  \end{align}
\end{subequations}
Note, the ``trick'' that makes the first implication \eqref{eq:HUu} work is that $\ket{\LU} = \ket{u,0} + \ket{u',1}$ can be identified using the qubit, so we only need $(2,5)$-projectors.
Further, the $1$-local projectors are not directly allowed in our setting, but they can be implemented with the help of an otherwise unused ancilla qubit.

Ideally, the nullspace of $H_1 := \Hlogical+\Hclockp{1}$ would be spanned by the clock states $\ket{C_{i,j}}$.
Unfortunately, it still contains states of the form $\ket{\dotsm \LA_j\La\dotsm}$ and $\ket{\dotsm\LD\Lu\dotsm}$.
So we obtain the following statement:
\begin{restatable}{lemma}{lemmaCE}\label{lem:CE}
   $\Null(H_1)=\Span(\{\ket{C_{i,j}}\mid i\in[N],j\in[4]\}\cup\{\ket{E_{i,j}}\mid i\in\{2,\dots,N\},j\in[4]\}):=\LS'$ with
  \begin{subequations}\label{eq:valid1}
    \begin{align}
      \ket{C_{1,1}} &= \ket{\Lu\LU}\otimes\ket{\Lu\LU}^{\otimes N-1}\\
      \ket{C_{1,2}} &= \ket{\La\LU}\otimes\ket{\Lu\LU}^{\otimes N-1}\\
      \ket{C_{1,3}} &= \ket{\Ld\LU}\otimes\ket{\Lu\LU}^{\otimes N-1}\\
      \ket{C_{1,4}} &= \ket{\Ld\LA_1}\otimes\ket{\Lu\LU}^{\otimes N-1}\\
      \ket{C_{i,1}} &= \ket{\Ld\LD}^{\otimes i-2}\otimes\ket{\Ld\LA_2,\Lu\LU}\otimes\ket{\Lu\LU}^{\otimes N-i}\\
      \ket{C_{i,2}} &= \ket{\Ld\LD}^{\otimes i-2}\otimes\ket{\Ld\LD,\La\LU}\otimes\ket{\Lu\LU}^{\otimes N-i}\\
      \ket{C_{i,3}} &= \ket{\Ld\LD}^{\otimes i-2}\otimes\ket{\Ld\LD,\Ld\LU}\otimes\ket{\Lu\LU}^{\otimes N-i}\\
      \ket{C_{i,4}} &= \ket{\Ld\LD}^{\otimes i-2}\otimes\ket{\Ld\LD,\Ld\LA_1}\otimes\ket{\Lu\LU}^{\otimes N-i}\\
      \ket{E_{i,1}} &= \ket{\Ld\LD}^{\otimes i-2}\otimes\ket{\Ld\LA_1,\La\LU}\otimes\ket{\Lu\LU}^{\otimes N-i}\\
      \ket{E_{i,2}} &= \ket{\Ld\LD}^{\otimes i-2}\otimes\ket{\Ld\LA_2,\La\LU}\otimes\ket{\Lu\LU}^{\otimes N-i}\\
      \ket{E_{i,3}} &= \ket{\Ld\LD}^{\otimes i-2}\otimes\ket{\Ld\LD,\Lu\LU}\otimes\ket{\Lu\LU}^{\otimes N-i}.
    \end{align}
  \end{subequations}
\end{restatable}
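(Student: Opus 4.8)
The plan is to reduce the claim to a finite combinatorial enumeration. Since $\Hlogical\succeq0$ and $\Hclockp{1}\succeq0$, we have $\Null(H_1)=\Null(\Hlogical)\cap\Null(\Hclockp{1})$, and by \Cref{lem:Hlogical} the first intersectand is $\LS=\bigotimes_{i=1}^{N}\LS_i$, where $\LS_i=\Span\{\ket a_i\ket B_i : a\in\{\Lu,\La,\Ld\},\ B\in\{\LU,\LA_1,\LA_2,\LD\}\}$ and the twelve states $\ket a_i\ket B_i$ are orthonormal. Write $\ket{\vec a,\vec B}:=\ket{a_1,B_1,\dots,a_N,B_N}$ for a logical product basis state of $\LS$. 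For $\ket v\in\LS$ we have $\Hclockp{1}\ket v=0\iff\langle v|\Hclockp{1}|v\rangle=0\iff\Pi_\LS\Hclockp{1}\Pi_\LS\ket v=0$ (using $\Hclockp{1}\succeq0$ and $\Pi_\LS\ket v=\ket v$), so it suffices to describe $\Pi_\LS\Hclockp{1}\Pi_\LS$.

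The key observation is that $\Pi_\LS\Hclockp{1}\Pi_\LS$ is \emph{diagonal} in the basis $\{\ket{\vec a,\vec B}\}$. After expanding the inner sums over $v$ in \Cref{eq:Hclock1}, every summand of $\Hclockp{1}$ is a product of single-qudit projectors $\ketbrab{b}$ on the registers $\beta_i,\gamma_i,\delta_i,\epsilon_i$ of at most two cells. Because the three logical qutrit states (resp.\ the four logical $4$-dit states) have pairwise disjoint standard-basis supports, and $\ketbrab{b}$ keeps each logical basis state inside its own support, conjugating any such single-qudit factor by the logical projector of the cell it acts on yields a diagonal operator --- for instance $\Pi_{\LS_i}\ketbrab{1}_{\beta_i}\Pi_{\LS_i}=\tfrac12(\ketbrab{\Lu}+\ketbrab{\La})$, $\Pi_{\LS_i}\ketbrab{1}_{\gamma_i}\Pi_{\LS_i}=\tfrac12\ketbrab{\Ld}$, $\Pi_{\LS_i}\ketbrab{1}_{\epsilon_i}\Pi_{\LS_i}=\tfrac12\ketbrab{\LU}$, $\Pi_{\LS_i}\ketbrab{a_1}_{\delta_i}\Pi_{\LS_i}=\ketbrab{\LA_1}$, and analogously for the other labels. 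Since $\Pi_\LS=\bigotimes_i\Pi_{\LS_i}$ and each summand factors across cells, $\Pi_\LS\Hclockp{1}\Pi_\LS$ is diagonal, hence $\Null(H_1)=\Span\{\ket{\vec a,\vec B} : \langle\vec a,\vec B|\Hclockp{1}|\vec a,\vec B\rangle=0\}$. Reading off the ``logical interpretations'' of \Cref{eq:HUu,eq:Hdd,eq:HDd,eq:HuaU,eq:UA1last}, the configuration $(\vec a,\vec B)$ survives iff, for every admissible index $i$: (i) $B_i=\LU\Rightarrow a_{i+1}=\Lu$; (ii) $a_{i+1}=\Ld\Rightarrow a_i=\Ld$; (iii) $a_{i+1}=\Ld\Rightarrow B_i=\LD$; (iv) $a_i\in\{\Lu,\La\}\Rightarrow B_i=\LU$; and (v) $B_N\in\{\LU,\LA_1\}$.

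It then remains to enumerate the configurations satisfying (i)--(v). By (ii) the set $\{i : a_i=\Ld\}$ is a prefix $\{1,\dots,k\}$ for some $0\le k\le N$, and then (iii) forces $B_1=\dots=B_{k-1}=\LD$. Beyond the prefix every $a_i\in\{\Lu,\La\}$, so (iv) gives $B_{k+1}=\dots=B_N=\LU$, and then (i) gives $a_{k+2}=\dots=a_N=\Lu$; thus the only remaining freedom is the pair $(B_k,a_{k+1})$, constrained only by (i), with $B_k$ omitted when $k=0$ and $a_{k+1}$ omitted when $k=N$ (where (v) pins $B_N$). Running through the (at most seven) admissible pairs for each $k$ --- $k=0$ gives $\ket{C_{1,1}},\ket{C_{1,2}}$; $B_k=\LU$ gives $\ket{C_{k,3}}$; $(B_k,a_{k+1})=(\LA_1,\Lu),(\LA_1,\La),(\LA_2,\Lu),(\LA_2,\La),(\LD,\Lu),(\LD,\La)$ give $\ket{C_{k,4}},\ket{E_{k+1,1}},\ket{C_{k+1,1}},\ket{E_{k+1,2}},\ket{E_{k+1,3}},\ket{C_{k+1,2}}$ respectively; and $k=N$ gives $\ket{C_{N,3}},\ket{C_{N,4}}$ --- reproduces precisely the list $\LS'$ of \Cref{eq:valid1}. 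As all of these are logical product basis states, they are mutually orthonormal, so $\Null(H_1)=\LS'$.

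I expect the main effort to lie in this last enumeration: the indexing of the $C$- and $E$-states in \Cref{eq:valid1} is slightly irregular at the endpoints $i=1$ and $i=N$, so matching the $(k,B_k,a_{k+1})$ parametrization to the stated labels --- and verifying that nothing is double-counted or omitted --- must be done with care. The diagonality claim of the second step, by contrast, is a short projector-by-projector computation, and the reduction of the first step is routine.
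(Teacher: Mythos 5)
Your proposal is correct and follows the same skeleton as the paper's proof: reduce to $\Null(\Pi_\LS\Hclockp{1}\Pi_\LS)\cap\LS$, observe that $\Pi_\LS\Hclockp{1}\Pi_\LS$ is diagonal in the logical product basis because $\Hclockp{1}$ is diagonal and the logical states have disjoint computational supports, read off the local constraints, and enumerate the surviving configurations. The only non-cosmetic difference is the organization of the final enumeration --- you parametrize by the length $k$ of the $\Ld$-prefix together with the free pair $(B_k,a_{k+1})$, which gives a flatter and somewhat easier-to-audit case structure than the paper's nested case distinction on the maximal $\Ld$-index and then on $Y_i$; both parametrizations are equivalent and reproduce the same list $\LS'$.
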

\begin{proof}
  In \Cref{sec:proofs}.
\end{proof}
The ``error terms'' $\ket{E_{i,1}},\ket{E_{i,2}},\ket{E_{i,3}}$ are penalized by the transition terms combining $\ket{C_{i,1}},\dots,\ket{C_{i,4}}$ to $\ket{C_i}$, which are implemented using $\Hclockp{2} = \sum_{i=1}^N \Hclockp{2,i}$, where $\Hclockp{2,i}$ is defined as follows:
\begin{subequations}\label{eq:Hclock2}
  \begin{align}
    &\text{$C_{i,1}\leftrightarrow C_{i,2}$ via $\ket{\LA_2,\Lu}\leftrightarrow \ket{\LD,\La}$:} &&\begin{cases}
      \ketbraa{(\ket{x,0}-\ket{x,1})}_{\alpha_1\beta_1}, &i=1 \\
      \ketbraa{(\ket{a_2,0}-\ket{d,1})}_{\delta_{i-1}\beta_{i}}, &i>1\\
    \end{cases}\label{eq:Ci12}\\
    &\text{$C_{i,2}\leftrightarrow C_{i,3}$ via $\ket{\La,\LU}\leftrightarrow \ket{\Ld,\LU}$:} &&+ \ketbraa{(\ket{a,1}-\ket{d,1})}_{\alpha_i\epsilon_{i}}\label{eq:Ci23}\\
    &\text{$C_{i,3}\leftrightarrow C_{i,4}$ via $\ket{\Ld,\LU}\leftrightarrow \ket{\Ld,\LA_1}$:} &&+ \ketbraa{(\sqrt2\ket{1,u}-\ket{1,a_1})}_{\gamma_i\delta_{i}}\label{eq:Ci34}
  \end{align}
\end{subequations}
In the first constraint \eqref{eq:Ci12}, the first time step is handled as a special case since the first logical qutrit has no preceding qudit.
The transition on the other time steps works because $\ket{a_2,0}_{\delta_i\beta_{i+1}}$ only overlaps with clock state $\ket{C_{i+1,1}}$ ($\LA_2$ is only followed by $\Lu$ and so $\ket{0}_\beta$ overlaps with the $\ket{x,0,0}_{\alpha\beta\gamma}$ term), and $\ket{d,1}_{\delta_i\beta_{i+1}}$ only overlaps with $\ket{C_{i,2}}$ ($\LD$ is only followed by $\La$ or $\Ld$, from which $\ket{1}_{\beta}$ selects $\La$).
Note, the transition $h_{i,i+1}$ from $\ket{C_i}$ to $\ket{C_{i+1}}$ can now be implemented on a single $5$-qudit since $\ket{C_{i,4}}$ and $\ket{C_{i+1,1}}$ only differ in the qudit $\delta_i$.
We define the clock Hamiltonian $\Hclock := \Hlogical + \Hclockp{1} + \Hclockp{2}$.
The transition and selection operators are then defined as follows, where we give two variants of the $C_{\sim i}$ projectors, one acting on a $5$-qudit and the other on a qubit.
\begin{subequations}
  \begin{align}
    h_{i,i+1}(U_Z)&=\frac12 \left(\I_Z \otimes \ketbrab{a_1}_{\delta_i}+\I_Z \otimes \ketbrab{a_2}_{\delta_i} - U_Z^\dagger \otimes \ketbra{a_1}{a_2}_{\delta_i}- U_Z\otimes \ketbra{a_2}{a_1}_{\delta_i}\right)\\
    C_{\le i}^{(5)} &= \ketbra uu_{\delta_i}\\
    C_{\le i}^{(2)} &= \ketbra 11_{\epsilon_i}\\
    C_{\ge i}^{(5)} &= \ketbra dd_{\alpha_i}\\
    C_{\ge i}^{(2)} &= \ketbra 11_{\gamma_i}
  \end{align}
\end{subequations}
Thus, the Hamiltonians from \Cref{thm:generic-clock} can all be implemented as a $(2,5)$-projector.

\subsection{Analysis of the clock Hamiltonian}

In the following, we prove that the nullspace of $\Hclock$ is spanned by the clock states (\Cref{lem:HclockNullSpace}) and $\gamma(\Hclock) = \Omega(1)$ (\Cref{lem:HclockGamma}).
\newcommand{\CS}{\mathcal{C}}
\begin{restatable}{lemma}{lemHclockNullSpace}\label{lem:HclockNullSpace}
  $\Null(\Hclock)=\Span\{\ket{C_1},\dots,\ket{C_N}\}=:\CS$.
\end{restatable}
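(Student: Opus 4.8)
The plan is to characterize $\Null(\Hclock)$ by first restricting to $\Null(H_1)$ with $H_1 := \Hlogical + \Hclockp{1}$, which \Cref{lem:CE} already identifies as $\LS'$, and then reading off the extra constraints imposed by $\Hclockp{2}$. Since $\Hclock = \Hlogical + \Hclockp{1} + \Hclockp{2}$ is a sum of positive semidefinite operators, $\ket\psi \in \Null(\Hclock)$ iff $\ket\psi$ is annihilated by each summand, so $\Null(\Hclock) = \Null(H_1) \cap \Null(\Hclockp{2}) = \LS' \cap \Null(\Hclockp{2})$. By \Cref{lem:CE}, $\LS'$ is spanned by the pairwise orthogonal states $\{\ket{C_{i,j}}\}_{i \in [N], j \in [4]} \cup \{\ket{E_{i,j}}\}_{i \in \{2,\dots,N\}, j \in [3]}$ of \eqref{eq:valid1} (distinct logical strings are orthogonal), so I would write a general element of $\LS'$ as $\ket\psi = \sum_{i,j} c_{i,j}\ket{C_{i,j}} + \sum_{i,j} e_{i,j}\ket{E_{i,j}}$ and determine which coefficient patterns satisfy $\Hclockp{2}\ket\psi = 0$.

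Each of the three terms of $\Hclockp{2,i}$ in \eqref{eq:Ci12}, \eqref{eq:Ci23}, \eqref{eq:Ci34} is proportional to $\ket{w}\bra{w}_S$ for a fixed $\ket{w}$ on a two-qudit subsystem $S \subseteq \{\alpha_i,\beta_i,\gamma_i,\delta_{i-1},\delta_i,\epsilon_i\}$, so the requirement $\ket{w}\bra{w}_S\ket\psi = 0$ unpacks into the vector equation $\bra{w}_S\ket\psi = 0$ on the complementary registers. The core computation is to evaluate $\bra{w}_S$ on each basis state of $\LS'$, using the unnormalized logical-state expansions of \eqref{eq:logical3} and \eqref{eq:logical4}: each such partial inner product is either $0$ or a positive multiple of a single basis state on the complement of $S$, and I expect the two basis states on the intended transition edge (e.g.\ $\ket{C_{i,1}}$ and $\ket{C_{i,2}}$ for the \eqref{eq:Ci12} term, which coincide once $\delta_{i-1}\beta_i$ is projected out) to land on a common residual with opposite signs, while every other basis state with nonzero overlap lands on a residual of its own. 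Collecting the scalar equations term by term should give: \eqref{eq:Ci12} (and its special $i=1$ form) forces $c_{i,1} = c_{i,2}$, $e_{i,2} = 0$, $e_{i,3} = 0$; \eqref{eq:Ci23} forces $c_{i,2} = c_{i,3}$ and $e_{i,1} = 0$; and \eqref{eq:Ci34} forces $c_{i,3} = c_{i,4}$ (and, incidentally, $e_{i+1,1} = 0$). The decisive structural point is that no equation relates $c$-coefficients of two distinct levels, so the levels decouple.

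Putting this together, for $\ket\psi \in \LS'$ one has $\Hclockp{2}\ket\psi = 0$ exactly when all $e_{i,j} = 0$ and $c_{i,1} = c_{i,2} = c_{i,3} = c_{i,4}$ for each $i$, i.e.\ $\ket\psi = \sum_{i=1}^N c_i\bigl(\ket{C_{i,1}} + \ket{C_{i,2}} + \ket{C_{i,3}} + \ket{C_{i,4}}\bigr) = \sum_{i=1}^N c_i\ket{C_i}$, so $\Null(\Hclock) \subseteq \CS$. The reverse inclusion is the complementary check that each $\ket{C_i}$ is annihilated by every term of $\Hclockp{2}$: this is precisely the balance designed into \eqref{eq:Hclock2} (the two ``alive'' residual contributions enter with equal magnitude and opposite sign, and error residuals get weight $0$ from $\ket{C_i}$), together with $\Hlogical\ket{C_i} = \Hclockp{1}\ket{C_i} = 0$ from \Cref{lem:CE}. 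I expect the bulk of the work to be the overlap bookkeeping of the second step: for each of the $3N$ rank-one terms, pinning down exactly which of the $\Theta(N)$ basis states of $\LS'$ overlap $\ket{w}$, checking these overlaps hit the claimed residuals so that the intended cancellations (and non-cancellations on the error states) occur, and treating the boundary levels $i=1$, $i=N$ and the special first-timestep term of \eqref{eq:Ci12} on their own.
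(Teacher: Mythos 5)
Your proposal matches the paper's proof in structure and substance: both restrict to $\Null(H_1)=\LS'$ via \Cref{lem:CE}, expand a general $\ket\psi\in\LS'$ in the basis $\{\ket{C_{i,j}},\ket{E_{i,j}}\}$, and then work through the three rank-one terms of $\Hclockp{2,i}$ to conclude $e_{i,j}=0$ and $c_{i,1}=\dots=c_{i,4}$ (with your attribution of which constraint kills which coefficient agreeing with the paper up to harmless redundancy in $e_{i,2}$). The only difference is that you state the residual-overlap bookkeeping at a high level rather than writing out the partial inner products explicitly, which is exactly where the paper spends its effort.
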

\begin{proof}
  We can write $\ket{C_i}$ as
  \begin{equation}
    \ket{C_{i}} = \begin{cases}
     (\ket{\Lu\LU} + \ket{\La\LU} + \ket{\Ld\LU} + \ket{\Ld\LA_1})\otimes \ket{\Lu\LU}^{\otimes N-1}, &i=1\\
    \ket{\Ld\LD}^{\otimes i-2}\otimes (\ket{\Ld\LA_2,\Lu\LU} + \ket{\Ld\LD,\La\LU} + \ket{\Ld\LD,\Ld\LU} + \ket{\Ld\LD,\Ld\LA_1})\otimes \ket{\Lu\LU}^{\otimes N-i}, &i>1
    \end{cases}.
  \end{equation}
  One can easily verify that $\Hclock\ket{C_i} =0 $ for all $i\in[N]$.
  Let $\LS' = \Null(\Hlogical + \Hclockp{1})$.
  Then, $\Null(\Hclock) = \LS'\cap\Null(\Hclockp{2})$.
  Let $\ket{\psi}\in\Null(\Hclock)$.
  Since $\ket\psi\in\LS'$, we may write
  \begin{equation}
    \ket\psi = \sum_{i=1}^N\sum_{j=1}^4c_{i,j}\ket{C_{i,j}} + \sum_{i=2}^N\sum_{j=1}^3e_{i,j}\ket{E_{i,j}}
  \end{equation}
  by \Cref{lem:CE}.
  Next, we argue that $e_{i,j} = 0$ for all $i,j$ and $c_{i,1} = c_{i,2} = c_{i,3} = c_{i,4}$ for all $i$.
  Let $i\in\{2,\dots,N\}$ and consider the projector $\Pi = \frac12\ketbraa{(\ket{a,1} - \ket{d,1})}_{\alpha_i\epsilon_i}$ from \Cref{eq:Ci23}.
  Then $\Pi\ket\psi =$
  \begin{subequations}
  \begin{align}
    &\frac{1}{4}(c_{i,2}-c_{i,3})\ket{\Ld\LD}^{\otimes i-1}\otimes (\ket{a00}-\ket{d00})_{\alpha_i\beta_i\gamma_i}\otimes\ket{u'1}_{\delta_i\epsilon_i}\otimes \ket{\Lu\LU}^{\otimes N-i}\label{eq:Pipsi:23}\\
    +&\frac14e_{i,1}\ket{\Ld\LD}^{\otimes i-2}\otimes\ket{\Ld\LA_1}\otimes (\ket{a00}-\ket{d00})_{\alpha_i\beta_i\gamma_i}\otimes\ket{u'1}_{\delta_i\epsilon_i}\otimes \ket{\Lu\LU}^{\otimes N-i}\label{eq:Pipsi:E1}\\
    +&\frac14e_{i,2}\ket{\Ld\LD}^{\otimes i-2}\otimes\ket{\Ld\LA_2}\otimes (\ket{a00}-\ket{d00})_{\alpha_i\beta_i\gamma_i}\otimes\ket{u'1}_{\delta_i\epsilon_i}\otimes \ket{\Lu\LU}^{\otimes N-i}\label{eq:Pipsi:E2}.
  \end{align}
  \end{subequations}
  Note, the three vectors \eqref{eq:Pipsi:23}, \eqref{eq:Pipsi:E1}, \eqref{eq:Pipsi:E2} are pairwise orthogonal and thus $\bra\psi\Pi\ket\psi = \frac1{8}(\abs{c_{i,2}-c_{i,3}}^2 + \abs{e_{i,1}}^2 + \abs{e_{i,2}}^2)$.
  Therefore, $c_{i,2}=c_{i,3}$ and $e_{i,1} = e_{i,2} = 0$.

  Similarly, consider the projector $\Pi = \frac12\ketbraa{(\ket{a_2,0}-\ket{d,1})}_{\delta_{i-1}\beta_i}$ from \Cref{eq:Ci12}. Then $\Pi\ket\psi = $
  \begin{subequations}
  \begin{align}
    &\frac{1}{2\sqrt2}(c_{i,1}-c_{i,2})\ket{\Ld\LD}^{\otimes i-2}\otimes\ket{\Ld}\otimes (\ket{a_20,x00}-\ket{d0,x10})_{\delta_{i-1}\epsilon_{i-1}\alpha_i\beta_i\gamma_i}\otimes\ket{\LU}\otimes \ket{\Lu\LU}^{\otimes N-i}\label{eq:Pipsi:12}\\
    +\, &\frac{1}{2\sqrt2}e_{i,2}\ket{\Ld\LD}^{\otimes i-2}\otimes\ket{\Ld}\otimes (\ket{a_20,a00}-\ket{d0,a10})_{\delta_{i-1}\epsilon_{i-1}\alpha_i\beta_i\gamma_i}\otimes\ket{\LU}\otimes \ket{\Lu\LU}^{\otimes N-i}\label{eq:Pipsi:Ei2}\\
    +\, &\frac{1}{2\sqrt2}e_{i,3}\ket{\Ld\LD}^{\otimes i-2}\otimes\ket{\Ld}\otimes (\ket{a_20,u00}-\ket{d0,u10})_{\delta_{i-1}\epsilon_{i-1}\alpha_i\beta_i\gamma_i}\otimes\ket{\LU}\otimes \ket{\Lu\LU}^{\otimes N-i}\label{eq:Pipsi:Ei3}.
  \end{align}
  \end{subequations}
  Thus $\bra\psi\Pi\ket\psi = \frac14(\abs{c_{i,1}-c_{i,2}}^2 + \abs{e_{i,2}}^2 + \abs{e_{i,3}}^2)$.
  We obtain the additional constraints $c_{i,1}=c_{i,2}$ and $e_{i,3} = 0$.

  Next, we consider the projector $\Pi = \frac13\ketbraa{(\sqrt2\ket{1,u}-\ket{1,a_1})}_{\gamma_i\delta_i}$ from \Cref{eq:Ci34}. Then $\Pi\ket\psi =$
  \begin{subequations}
    \begin{align}
      &\frac1{3\sqrt{2}}(c_{i,3}-c_{i,4})\ket{\Ld\LD}^{\otimes i-1}\otimes\ket{d'01}_{\alpha_i\beta_i\gamma_i}\otimes(\sqrt2\ket{u0}-\ket{a_10})_{\delta_i\epsilon_i}\otimes\ket{\Lu\LU}^{\otimes N-i}\\
      +\,&\frac1{3\sqrt{2}}e_{i+1,1}\ket{\Ld\LD}^{\otimes i-1}\otimes\ket{d'01}_{\alpha_i\beta_i\gamma_i}\otimes(\ket{a_10}-\sqrt2\ket{u0})_{\delta_i\epsilon_i}\otimes \ket{\La\LU}\otimes\ket{\Lu\LU}^{\otimes N-i-1}.
    \end{align}
  \end{subequations}
  Thus, $\bra\psi\Pi\ket\psi = \frac16 (\abs{c_{i,3}-c_{i,4}}^2 + \abs{e_{i+1,1}}^2)$.
  We obtain the additional constraints $c_{i,3} = c_{i,4}$.

  Finally, consider the case $i=1$. Only the case from \Cref{eq:Ci12} differs from above with $\Pi = \frac12\ketbraa{(\ket{x,0}-\ket{x,1})}_{\alpha_1\beta_1}$ and $\Pi\ket\psi=$
  \begin{equation}
    \frac{1}{2\sqrt2}(c_{1,1} - c_{1,2})(\ket{x00}-\ket{x10})_{\alpha_1\beta_1\gamma_1}\otimes\ket{\LU}\otimes\ket{\Lu\LU}^{\otimes N-1}.
  \end{equation}
  This gives constraint $c_{1,1} = c_{1,2}$.
  Hence, $\ket\psi\in\Span\{\ket{C_1},\dots,\ket{C_N}\}$.
\end{proof}

To prove $\gamma(\Hclock)=\Omega(1)$ (\Cref{lem:HclockGamma}), we various corollaries from \cite{GN13}, which follow from Kitaev's Geometric Lemma.

\begin{lemma}\label{lem:gamma(H_1)}
  Let $H_1 := \Hlogical + \Hclockp{1}$. Then $\gamma(H_1) = \Omega(1)$.
\end{lemma}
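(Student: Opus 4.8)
The plan is to apply Kitaev's Geometric Lemma in the frustration-free form of \Cref{cor:GN:cor1}, taking $H_A := \Hlogical$ and $H_B := \Hclockp{1}$. By \Cref{lem:Hlogical}, $S := \Null(\Hlogical) = \LS$, the logical subspace; by \Cref{lem:CE}, $\Null(H_1) = \LS'$, so the subspace $\Gamma$ of \Cref{cor:GN:cor1} is $\LS \cap (\LS')^\perp$, i.e. the span of the (normalized) logical tableaux that do not appear among the $\ket{C_{i,j}}$, $\ket{E_{i,j}}$. Two inputs are immediate. First, $\gamma(\Hlogical) = \Omega(1)$: $\Hlogical$ is a sum of pairwise-commuting terms $(\Hlogical_3)_{\alpha_i\beta_i\gamma_i} + (\Hlogical_4)_{\delta_i\epsilon_i}$ acting on disjoint groups of five qudits, each equal to one fixed finite-dimensional operator whose spectral gap is a positive constant (its nullspace is nonempty by \Cref{lem:HlogicalIII,lem:HlogicalIV}), so the gap of $\Hlogical$ equals this constant. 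Second, $\gamma(\Hclockp{1}) \ge 1$: every term of $\Hclockp{1}$ in \eqref{eq:Hclock1} is a projector diagonal in the standard basis, hence the terms pairwise commute and $\Hclockp{1}$ has nonnegative integer eigenvalues. With these, \Cref{cor:GN:cor1} reduces the lemma to the bound $d := \max_{\ket v \in \Gamma,\ \braket{v}{v} = 1} \bra{v}\Pi_B\ket{v} \le 1 - \Omega(1)$, where $\Pi_B$ projects onto $\Null(\Hclockp{1})$.

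To control $d$ I would exploit the following structural fact, verified directly from the definitions \eqref{eq:logical3} and \eqref{eq:logical4}: the twelve single-group logical states have pairwise disjoint support in the standard basis --- for instance $\ket{\Lu},\ket{\La},\ket{\Ld}$ live on $\{(u,1,0),(x,0,0)\}$, $\{(a,0,0),(x,1,0)\}$, $\{(d,0,0),(d',0,1)\}$, and $\ket{\LU},\ket{\LA_1},\ket{\LA_2},\ket{\LD}$ on $\{(u,0),(u',1)\}$, $\{(a_1,0)\}$, $\{(a_2,0)\}$, $\{(d,0)\}$. Any two distinct logical tableaux differ in at least one group and hence have disjoint standard-basis support, so $\bra{\ell}M\ket{\ell'} = 0$ for every operator $M$ diagonal in the standard basis, in particular for $M = \Pi_B$. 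Therefore $\Pi_\LS \Pi_B \Pi_\LS = \sum_{\ell} w_\ell \ketbrab{\ell}$ is diagonal in the orthonormal basis of logical tableaux, with $w_\ell = \norm{\Pi_B\ket{\ell}}^2 \in [0,1]$.

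It then suffices to show $w_\ell \le 3/4$ for every tableau $\ket\ell$ spanning $\Gamma$, since then $\bra{v}\Pi_B\ket{v} = \sum_\ell \abs{a_\ell}^2 w_\ell \le 3/4$ for every unit $\ket{v} = \sum_\ell a_\ell \ket{\ell} \in \Gamma$. Such an $\ell$ lies in $\LS$ but not $\LS'$, so $\Hclockp{1}\ket\ell \ne 0$ by \Cref{lem:CE}; hence $\bra{\ell}\Pi\ket{\ell} > 0$ for one of the (commuting, standard-basis) atomic projectors $\Pi$ making up $\Hclockp{1}$. Inspection of \eqref{eq:Hclock1} shows each such $\Pi$ is a product of at most two single-register projectors, and a logical state has weight $\tfrac12$ or $1$ (never strictly between $0$ and $\tfrac12$) on any standard-basis value of such a register; so whenever $\Pi$ is triggered, $\bra{\ell}\Pi\ket{\ell} \ge (\tfrac12)^2 = \tfrac14$. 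Since $\Pi \preceq \I - \Pi_B$ (the range of any atomic term of $\Hclockp{1}$ is orthogonal to $\Null(\Hclockp{1})$), we get $w_\ell = 1 - \bra{\ell}(\I - \Pi_B)\ket{\ell} \le 1 - \bra{\ell}\Pi\ket{\ell} \le \tfrac34$. Plugging $d \le 3/4$ into \Cref{cor:GN:cor1} then yields $\gamma(H_1) \ge \min\{\gamma(\Hlogical),\gamma(\Hclockp{1})\}\,(1 - \sqrt{3/4}) = \Omega(1)$.

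The delicate step is the bound $d \le 1 - \Omega(1)$, which is also why \Cref{cor:GN:cor2} cannot be used directly here: routing through $\norm{\Hclockp{1}|_{\LS}}$ is hopeless, since this norm is $\Omega(N)$ --- e.g. the logical tableau $\ket{\La\LU}^{\otimes N}$ is penalized at $\Omega(N)$ positions of \eqref{eq:HUu} --- which would introduce a spurious factor $1/N$. The disjoint-support observation is precisely the device that avoids this: it diagonalizes $\Pi_B$ on $\LS$, so the global overlap $d$ is governed by the worst single-tableau weight $w_\ell$, a purely local quantity bounded away from $1$.
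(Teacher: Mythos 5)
Your proof is correct and takes essentially the same route as the paper. The paper applies \Cref{cor:geometric3} (which is just \Cref{cor:GN:cor1} followed by the bound $1-\sqrt{1-x}\ge x/2$) with the same split $H_A=\Hlogical$, $H_B=\Hclockp{1}$, makes the same observation that $\Hclockp{1}$ restricted to $\LS$ is diagonal in the logical-tableau basis because distinct logical states have disjoint standard-basis support, and bounds the expectation of any violated atomic projector on a logical tableau by $\ge 1/4$; your write-up merely spells out more explicitly the ``two factors, each $\ge 1/2$ when nonzero'' reasoning behind the $1/4$ and adds the (apt, but tangential, since the paper does not in fact take that route) remark about why \Cref{cor:GN:cor2} would lose a factor of $N$.
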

\begin{proof}
  We apply \Cref{cor:geometric3} with $H_A = \Hlogical$ and $H_B=\Hclockp{1}$.
  $\Hlogical$ and $\Hclockp{1}$ are sums of commuting local terms and thus have $\gamma(\cdot)$ of $\Omega(1)$.
  $\Gamma=\LS\cap{\LS'}^\bot$ is the span of logical states violating one of the $\Hclockp{1}$ constraints (\Cref{eq:Hclock1}).
  It holds that \Cref{lem:CE}, $\bra{u}\Hclockp{1}\ket{v}=0$ for logical vectors $u\ne v$ because $\Hclockp{1}$ is diagonal and logical states can be written as a superposition of computational basis states, such that no computational basis state is used by two different logical states (see \Cref{eq:logical4,eq:logical3}).
  So it suffices to take the minimum over $\ket{v}$ for $v\in\Ls$ such that $v$ violates some projective constraint $\Pi$ from \Cref{eq:Hclock1}.
  Here, $\I-\Pi_B\succeq\Pi$ since $\Hclockp{1}$ is the sum of commuting projectors (including $\Pi$).
  Thus $\braketb{v}{(\I - \Pi_B)} \ge \braketb{v}{\Pi} \ge 1/4$.
  Hence, $\gamma(H_1) = \Omega(1)$.
\end{proof}
\newcommand\wtc{\widetilde{c}}
\newcommand\wtC{\widetilde{C}}

\begin{restatable}{lemma}{lemHclockGamma}\label{lem:HclockGamma}
  $\gamma(\Hclock)=\Omega(1)$.
\end{restatable}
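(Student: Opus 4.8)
The plan is to invoke Corollary~\ref{cor:GN:cor2} on the decomposition $\Hclock = H_1 + \Hclockp{2}$ with $H_1 := \Hlogical + \Hclockp{1}$. All hypotheses are in place: $H_1,\Hclockp{2}\succeq 0$; $\Null(H_1) = \LS'$ is nonempty by Lemma~\ref{lem:CE}; $\Null(\Hclock) = \CS$ is a nonzero subspace of both $\LS'$ and $\Null(\Hclockp{2})$ by Lemma~\ref{lem:HclockNullSpace} (the latter since $\Hclock$ is a sum of PSD terms); and $\Hclockp{2}|_{\LS'}$ is not the zero matrix, since e.g.\ $\ket{C_{i,1}}\in\LS'$ is penalized by the term from~\eqref{eq:Ci12} implementing $C_{i,1}\leftrightarrow C_{i,2}$. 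Hence
\[
  \gamma(\Hclock)\;\ge\;\min\{\gamma(H_1),\gamma(\Hclockp{2})\}\cdot\frac{\gamma(\Hclockp{2}|_{\LS'})}{2\norm{\Hclockp{2}|_{\LS'}}},
\]
and since $\gamma(H_1)=\Omega(1)$ by Lemma~\ref{lem:gamma(H_1)}, it suffices to prove $\gamma(\Hclockp{2})=\Omega(1)$, $\gamma(\Hclockp{2}|_{\LS'})=\Omega(1)$, and $\norm{\Hclockp{2}|_{\LS'}}=O(1)$.

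For $\gamma(\Hclockp{2})=\Omega(1)$ I would verify that the terms~\eqref{eq:Ci12}--\eqref{eq:Ci34} pairwise commute: any two of them either act on disjoint sets of qudits, or else (only for~\eqref{eq:Ci34} at index $i$ versus~\eqref{eq:Ci12} at index $i+1$) overlap in the single $5$-dit $\delta_i$, where their supports are the disjoint subsets $\{u,a_1\}$ and $\{a_2,d\}$ so that the projectors have orthogonal range. Each term being a fixed positive multiple of a rank-one projector, $\Hclockp{2}$ is a sum of pairwise commuting PSD operators each with smallest nonzero eigenvalue $\Omega(1)$, so its smallest nonzero eigenvalue is also $\Omega(1)$.

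For the other two bounds I would use the block structure of $\LS'$. Write $\LS'=\bigoplus_{i=1}^N\LS'_i$ with $\LS'_1:=\Span\{\ket{C_{1,j}}\}_{j\in[4]}$ and $\LS'_i:=\Span\bigl(\{\ket{C_{i,j}}\}_{j\in[4]}\cup\{\ket{E_{i,j}}\}_{j\in[3]}\bigr)$ for $i>1$. The computations already performed in the proof of Lemma~\ref{lem:HclockNullSpace} express $\bra\psi\Pi\ket\psi$, for each term $\Pi$ of $\Hclockp{2}$ and each $\ket\psi=\sum_{i,j}c_{i,j}\ket{C_{i,j}}+\sum_{i,j}e_{i,j}\ket{E_{i,j}}$, as a fixed sum of squared magnitudes of linear combinations of the $c_{i,j},e_{i,j}$ belonging to a single block. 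Consequently $\Hclockp{2}|_{\LS'}$ is block diagonal with respect to this decomposition, so $\gamma(\Hclockp{2}|_{\LS'})$ and $\norm{\Hclockp{2}|_{\LS'}}$ are the minimum and maximum over $i$ of the corresponding quantities for $M_i:=\Hclockp{2}|_{\LS'_i}$. Each $\LS'_i$ has dimension at most $7$ and an orthogonal basis whose squared norms lie within a factor $2$ of each other, and the quadratic form $\bra\psi M_i\ket\psi$ together with these norms depends only on whether $i=1$, $1<i<N$, or $i=N$; thus the Rayleigh quotient of $M_i$ is one of only $O(1)$ fixed functions of the coefficients. By Lemma~\ref{lem:HclockNullSpace} each $M_i$ is PSD with one-dimensional kernel $\Span\{\ket{C_i}\}$, so each of these finitely many cases has a fixed spectral gap $>0$ and a fixed finite norm. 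Hence $\gamma(\Hclockp{2}|_{\LS'})=\Omega(1)$ and $\norm{\Hclockp{2}|_{\LS'}}=O(1)$, and substituting the three bounds into the displayed inequality yields $\gamma(\Hclock)=\Omega(1)$.

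The main obstacle is the last step: confirming that $\Hclockp{2}$ restricted to $\LS'$ truly decouples across the $N$ index-blocks and that the per-block spectral gap is bounded below uniformly in $N$. This is a finite, mechanical verification rather than a conceptual one --- its core, the evaluation of $\bra\psi\Hclockp{2}\ket\psi$ on $\LS'$, is already carried out in the proof of Lemma~\ref{lem:HclockNullSpace} --- but it demands careful bookkeeping of which logical states each projector in~\eqref{eq:Hclock2} overlaps.
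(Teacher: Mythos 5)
Your proposal is essentially correct and follows the same overall strategy as the paper: decompose $\Hclock = H_1 + \Hclockp{2}$ with $H_1 := \Hlogical + \Hclockp{1}$, apply \Cref{cor:GN:cor2}, and reduce to showing $\gamma(H_1), \gamma(\Hclockp{2}), \gamma(\Hclockp{2}|_{\LS'})$ are $\Omega(1)$ and $\norm{\Hclockp{2}|_{\LS'}}=O(1)$. The first two are handled identically (commutativity of the constituent projectors). For the last pair, the paper does a direct calculation in the $\{\ket{\widetilde C_{i,j}},\ket{E_{i,j}}\}$ basis for $\Gamma$ followed by an elementary inequality, whereas you argue via block-diagonality of $\Hclockp{2}|_{\LS'}$ over the constant-dimensional subspaces $\LS'_i$, noting each block realizes one of only $O(1)$ spectral profiles. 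Structurally these are close --- both rest on the cancellation structure already worked out in the proof of \Cref{lem:HclockNullSpace} --- but your route packages the uniformity-in-$N$ claim a bit more transparently.

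One imprecision worth flagging: your justification that ``\,$\bra\psi\Pi\ket\psi$, for each term $\Pi$ of $\Hclockp{2}$, is a fixed sum of squared magnitudes of linear combinations of the $c_{i,j},e_{i,j}$ belonging to a single block'' is not literally true. The projector from \eqref{eq:Ci34} at index $i$ yields $\bra\psi\Pi\ket\psi \propto |c_{i,3}-c_{i,4}|^2 + |e_{i+1,1}|^2$, mixing block $i$ (via $c_{i,3},c_{i,4}$) with block $i+1$ (via $e_{i+1,1}$). Since there are no cross terms between the two groups, the operator is still block-diagonal across the $\LS'_i$, just with a single projector contributing additively to two adjacent blocks; so your conclusion survives, but the stated premise should be corrected. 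Also, when asserting $\Null(M_i)=\Span\{\ket{C_i}\}$, it is worth one line to observe that $\Null(M_i)=\LS'_i\cap\Null(\Hclockp{2})\subseteq\LS'\cap\Null(\Hclockp{2})=\CS$ by \Cref{lem:HclockNullSpace}, and $\CS\cap\LS'_i=\Span\{\ket{C_i}\}$.
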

\newcommand{\ES}{\mathcal{E}}
\begin{proof}
  Let $H_A = H_1$, $H_B = \Hclockp{2}$, and apply \Cref{cor:GN:cor2}.
  By \Cref{lem:gamma(H_1)}, we have $\gamma(H_A) = \Omega(1)$.
  Since all projectors of $\Hclockp{2}$ pairwise commute (see \Cref{eq:Hclock2}), $\gamma(H_B)\ge1$.
  The next step is to characterize $\Gamma = \LS'\cap\CS^\bot$.
  Recall $\LS' =\Span\{\ket{C_{i,j}},\ket{E_{i,j}}\}$ and $\CS=\Span\{\ket{C_i}\}$.
  Since $\CS\subseteq \LS'$, we have $\dim(\Gamma) = \dim(\LS') - \dim(\CS) = 6N - 3$.
  First, it is easy to verify $\ket{E_{i,j}}\in\CS^\bot$.
  We complete the $\{\ket{E_{i,j}}\}$ to a basis of $\Gamma$ with
  \begin{subequations}
  \begin{align}
    \ket{\wtC_{i,1}} &= \ket{C_{i,1}} - \ket{C_{i,2}} + \ket{C_{i,3}} - \ket{C_{i,4}} \\
    \ket{\wtC_{i,2}} &= \ket{C_{i,1}} + \ket{C_{i,2}} - \ket{C_{i,3}} - \ket{C_{i,4}} \\
    \ket{\wtC_{i,3}} &= \ket{C_{i,1}} - \ket{C_{i,2}} - \ket{C_{i,3}} + \ket{C_{i,4}}.
  \end{align}
  \end{subequations}
  Hence, $\Gamma = \Span\{\ket{E_{i,j}}, \ket{\wtC_{i,j}}\}$.
  Let $\ket\psi\in\Gamma$ and write
  \begin{subequations}
  \begin{align}
    \ket\psi &= \sum_{i=1}^N\sum_{j=1}^3\wtc_{i,j}\ket{\wtC_{i,j}} + \sum_{i=2}^N\sum_{j=1}^3e_{i,j}\ket{E_{i,j}}\\
    &= \sum_{i=1}^N\bigl((
       \underbrace{\wtc_{i,1}+\wtc_{i,2}+\wtc_{i,3}}_{c_{i,1}})\ket{C_{i,1}}
    + (\underbrace{-\wtc_{i,1}+\wtc_{i,2}-\wtc_{i,3}}_{c_{i,2}})\ket{C_{i,2}}\nonumber\\
    &\qquad+ (\underbrace{\wtc_{i,1}-\wtc_{i,2}-\wtc_{i,3}}_{c_{i,3}})\ket{C_{i,3}}
    + (\underbrace{-\wtc_{i,1}-\wtc_{i,2}+\wtc_{i,3}}_{c_{i,4}})\ket{C_{i,4}}  \bigr) +  \sum_{i=2}^N\sum_{j=1}^3e_{i,j}\ket{E_{i,j}}.
  \end{align}
  \end{subequations}
  By the same arguments used in the proof of \Cref{lem:HclockNullSpace}, we have
  \begin{subequations}
    \begin{align}
      8\bra\psi\Hclockp{2}\ket\psi &\ge \sum_{i=1}^N\bigl(\abs{c_{i,1} - c_{i,2}}^2 + \abs{c_{i,2} - c_{i,3}}^2 + \abs{c_{i,3} - c_{i,4}}^2\bigr) + \sum_{i=2}^N\sum_{j=1}^3\abs{e_{i,j}}^2\\
      &= 4\sum_{i=1}^N\bigl(\abs{\wtc_{i,1}+\wtc_{i,3}}^2 + \abs{\wtc_{i,1}-\wtc_{i,2}}^2 + \abs{\wtc_{i,1}-\wtc_{i,3}}^2\bigr) + \sum_{i=2}^N\sum_{j=1}^3\abs{e_{i,j}}^2\\
      &\ge \frac49\sum_{i=1}^N\sum_{j=1}^3\abs{\wtc_{i,j}}^2 + \sum_{i=2}^N\sum_{j=1}^3\abs{e_{i,j}}^2\ge \frac49\label{eq:psiHclock2psi:last}
    \end{align}
  \end{subequations}
  where \Cref{eq:psiHclock2psi:last} follows because for any $a,b,c\in\CC$, $\abs{a+b} + \abs{a-b}+\abs{a+c}\ge \max\{\abs a, \abs b, \abs c\}$, as $\abs{a+b}+\abs{a-b}\ge \max\{\abs{a},\abs{b}\}$ and thus $\abs{a+b} + \abs{a-b}+\abs{a+c}\ge \abs{a}+\abs{c}-\abs{a}=\abs{c}$.
  Hence $\abs{a+b}^2 + \abs{a-b}^2 + \abs{a+c}^2 \ge \frac13(\abs{a} + \abs{a+b} + \abs{a+c})^2 \ge \frac13\max\{\abs{a}^2,\abs{b}^2,\abs{c^2}\}\ge\frac19 (\abs a^2 + \abs b^2 + \abs c^2)$.

  Therefore, we have $\gamma(H_B|_{\LS'}) = \Omega(1)$.
  Similarly, we can bound $\braketb\psi{\Hclockp{2}} = O(1)$ and thus $\norm{\Hclockp{2}\vert_{\LS'}} = O(1)$, which gives $\gamma(\Hclock) = \Omega(1)$.
\end{proof}

\subsection{\texorpdfstring{$\QMAo$}{QMA1}-completeness}\label{sec:QMA1-completeness}

Our main contribution is certainly the $\QMA_1$-hardness (\Cref{thm:25}), but we still need to discuss containment in $\QMA_1$ briefly.

\begin{lemma}\label{lem:containment}
  The $(2,5)$-QSAT and $(3,4)$-QSAT instances constructed from $\QMA_1$-circuits are contained in $\QMA_1$.
\end{lemma}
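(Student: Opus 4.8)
The plan is to run the textbook \QSAT\ verifier and to check that, for the particular projectors our reduction outputs, each required measurement is \emph{exactly} implementable over the gate set $\gateset=\{\widehat H,T,\CNOT\}$ — which is what is needed to keep perfect completeness. Write $H=\sum_{i=1}^m\Pi_i$ with $m=\poly$. Arthur receives a witness $\ket\psi$, prepares $\lceil\log m\rceil$ fresh qubits in the uniform superposition over $i\in[m]$ with $\widehat H$'s, and, controlled on $i$, runs a subroutine that applies a unitary $V_i$, records (with $\CNOT$'s) into a fresh register whether the measurement of $\Pi_i$ ``fired'', and undoes $V_i$; he accepts iff every record reads ``not fired''. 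If $\lmin(H)=0$, pick $\ket\psi$ with $H\ket\psi=0$; then $\Pi_i\ket\psi=0$ for all $i$ (each $\Pi_i\succeq0$), so every branch reports ``not fired'' with certainty and Arthur accepts with probability $1$ — provided each $V_i$ is synthesizable exactly with $\gateset$. If $\lmin(H)\ge1/\poly$, then $\sum_i\langle\psi|\Pi_i|\psi\rangle=\langle\psi|H|\psi\rangle\ge1/\poly$ for every $\ket\psi$, so the ``fired'' probability averaged over $i$ is $\ge1/\poly$, and a constant-factor loss in a single measurement (below) is immaterial.

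To obtain the $V_i$'s, first encode each $5$-dit in $3$ qubits (resp.\ the qutrit and the $4$-dit in $2$ qubits each) and add, on every register, the diagonal projectors onto the $2^{\lceil\log d\rceil}-d$ illegal patterns: these are measured with $V_i=\I$, they leave the null space unchanged in the YES case, and — since no original constraint couples the legal and illegal subspaces — splitting a state into legal and illegal parts shows $\lmin$ of the encoded Hamiltonian stays $\Omega(1/\poly)$ in the NO case. Since $\gateset$ synthesizes exactly the unitaries over $R:=\ZZ[\tfrac1{\sqrt2},i]$ (Giles and Selinger~\cite{GS13}), it now suffices to give, for every encoded constraint, a unitary over $R$ realizing its measurement. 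Every constraint except the $C_{i,3}\leftrightarrow C_{i,4}$ transition \eqref{eq:Ci34} is of one of three kinds: (i) a projector onto a single encoded basis state (the diagonal terms of $\Hlogical$ and $\Hclockp{1}$, the selectors $C_{\le i},C_{\ge i}$, the quadrant terms $(C_{\le i})_A\otimes(C_{\ge j})_B$) — take $V_i=\I$; (ii) a rank-one projector onto $\ket z\otimes\tfrac1{\sqrt2}(\ket a-\ket b)$ for basis states $a\neq b$ and a computational conditioning register $z$ (the superposition terms of $\Hlogical_3,\Hlogical_4$, the transitions \eqref{eq:Ci12}, \eqref{eq:Ci23}, and $\Pi_Z\otimes(h_{i,i+1})_A$, $(C_{\sim j})_A\otimes(h_{i,i+1})_B$) — its range vector has entries in $\{0,\pm\tfrac1{\sqrt2}\}$, so a Hadamard-type rotation on the coordinate plane $\Span\{\ket a,\ket b\}$ (entries $0,\pm1,\pm\tfrac1{\sqrt2}\in R$) rotates it onto a basis state; (iii) the block projector $\tfrac12\begin{psmallmatrix}\I&-U^\dagger\\-U&\I\end{psmallmatrix}$ on $\HHcomp\otimes\Span\{\ket{a_1},\ket{a_2}\}$ (up to a diagonal conditioning register) coming from $(h_{i,i+1}(U))_{Z,A}$ with $U\in\{\widehat H,T,\CNOT,\sigma^Z,B\}$, where $B=\tfrac1{\sqrt2}\begin{psmallmatrix}1&i\\i&1\end{psmallmatrix}$ is over $R$ — here $W=\tfrac1{\sqrt2}\begin{psmallmatrix}\I&-U^\dagger\\U&\I\end{psmallmatrix}$, with entries $0$, $\pm\tfrac1{\sqrt2}$, or $\tfrac1{\sqrt2}$ times an entry of $U$ (all in $R$), maps the range onto a basis subspace.

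The one genuinely delicate constraint — and the main obstacle — is \eqref{eq:Ci34}, $\Pi=\ketbra11_{\gamma_i}\otimes\tfrac13\ketbraa{(\sqrt2\ket u-\ket{a_1})}_{\delta_i}$: its normalized range vector $\tfrac1{\sqrt3}(\sqrt2\ket u-\ket{a_1})$ has entries $\sqrt{2/3},\,1/\sqrt3\notin R$, so no unitary over $R$ diagonalizes $\Pi$ and it admits no projective measurement over $\gateset$. The fix is to replace the projective measurement by the two-outcome measurement with Kraus operators $A_0=\begin{psmallmatrix}1/\sqrt2&1/2\\0&1/\sqrt2\end{psmallmatrix}$ on $\Span\{\ket{1,u},\ket{1,a_1}\}$ (identity elsewhere) and $A_1=\tfrac12\,\ket{1,a_2}\bigl(\sqrt2\bra{1,u}-\bra{1,a_1}\bigr)$: these have entries in $R$, satisfy $A_0^\dagger A_0+A_1^\dagger A_1=\I$, and obey $A_1^\dagger A_1=\tfrac34\Pi$. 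Dilating $\{A_0,A_1\}$ with a fresh ancilla qubit gives an isometry over $R$, which extends to a unitary $V_i$ over $R$ (by exact synthesis of isometries over $R$, using a few further ancilla qubits if needed); the subroutine records that ancilla. Because $A_1^\dagger A_1$ is a positive multiple of $\Pi$, the history state (which annihilates $\Pi$) still reports ``fired'' with probability $0$, so perfect completeness is preserved, while the factor $\tfrac34$ costs only a constant in soundness. The $(3,4)$-QSAT instances are handled identically, encoding the $3$- and $4$-dits in two qubits each. The only step I expect to require real care is this last one: verifying that the Kraus decomposition, and its dilation, can be kept over $R$.
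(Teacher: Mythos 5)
Your proof is correct and takes essentially the same route as the paper's: embed the qudits into qubits (adding diagonal projectors for the illegal patterns), observe that every projector except the transition \eqref{eq:Ci34} has entries in $\ZZ[\tfrac1{\sqrt2},i]$ and hence admits an exact projective measurement via Giles--Selinger \cite{GS13}, and handle \eqref{eq:Ci34} by a non-projective two-outcome measurement whose ``fire'' POVM element is a positive scalar multiple of the projector, preserving perfect completeness at only a constant cost in soundness. The one difference is in how that last step is discharged: the paper cites a measurement algorithm in \cite[Appendix~A]{GN13} for a projector of analogous structure, while you construct the explicit ring-valued Kraus pair $\{A_0,A_1\}$ with $A_0^\dagger A_0 + A_1^\dagger A_1 = \I$ and $A_1^\dagger A_1 = \tfrac34\Pi$ (which I verified), giving a self-contained account of the same idea. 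You correctly flag the remaining detail --- that the isometry $\begin{psmallmatrix}A_0\\A_1\end{psmallmatrix}$ extends to a unitary over $\ZZ[\tfrac1{\sqrt2},i]$ on a power-of-two-dimensional space --- as the step needing care; this is precisely the content of the cited appendix, and it does go through.
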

\begin{proof}
  To evaluate $(k,l)$-QSAT instances with a $\QMA_1$-verifier, we embed each qudit into $\lceil\log d\rceil$ qubits, and add additional diagonal projectors to reduce the local dimension as necessary.
  The lemma then follows from the containment of QSAT in $\QMA_1$ \cite{GN13}, which requires projectors of a specific form.
  Besides \eqref{eq:Ci34}, all projectors used in \Cref{sec:clock} have entries in $\ZZ[\frac1{\sqrt2},i]$.
  Measurements with respect to such projectors are implemented with \cite{GS13}.
  Recall the projector from \eqref{eq:Ci34},
  \begin{equation}
    \Pi = \ketbraa{\left(\sqrt{\frac23}\ket{1,u}-\sqrt{\frac13}\ket{1,a_1}\right)}_{\gamma_i\delta_{i}}.
  \end{equation}
  Under the qubit embedding, there exists a permutation $P$ such that
  \begin{equation}
    P\Pi P^\dagger = \ketbraa{\left(\sqrt{\frac23}\ket{0000}-\sqrt{\frac13}\ket{0001}\right)}.
  \end{equation}
  A measurement algorithm with perfect completeness is given in \cite[Appendix A]{GN13} for a $3$-local projector of analogous structure, which can easily be extend to larger projectors.
\end{proof}

\thmTwoFive*
\begin{proof}
  Follows from \Cref{thm:generic-clock}, \Cref{lem:HclockNullSpace}, and \Cref{lem:HclockGamma}.
\end{proof}

\thmThreeFour*
\begin{proof}
  We use the same clock construction operating directly in logical space, though care needs to be taken that everything is realizable with $(3,4)$-constraints.
  To implement $\Hclockp{1}$ using only $(3,4)$-terms, we have to replace ``$\Ld$ implies $\Ld$ to the left'' with ``$\LD$ implies $\Ld$ to the left''.
  $\Hclockp{2}$ only uses $(3,4)$-transitions.
  The $C_{\sim i}$ projectors are implemented as $C^{(4)}_{\le i} = \ketbra{U}{U}_{\delta_i}, C^{(3)}_{\le i} = \ketbra uu_{\alpha_i},C^{(4)}_{\ge i}=\ketbra{D}{D}_{\delta_{i-1}},C^{(3)}_{\ge i}=\ketbra{d}{d}_{\alpha_{i}}$, where $\alpha_i,\delta_i$ denote the pairs of $(3,4)$-qudits as depicted in \Cref{eq:valid1}.
  The computational register is implemented on qutrits restricted to a $2$-dimensional subspace.
\end{proof}

\section{Nullspace Connection Lemma}\label{sec:connection-lemma}

The 2D-Hamiltonian, as depicted in \Cref{fig:H}, consists of a sequence of gadgets connected together via the blue zigzag edges.
If we remove the zigzag edges, we can analyze the nullspace and gap of the Hamiltonian easily, as gadgets have only constant size and act on orthogonal clock spaces.
The nullspace of each gadget is spanned by history states in their local clock spaces, as shown in \Cref{lem:conditionHV,lem:conditionHU}.
Each gadget just applies a single gate to the ancilla space, when viewed from the top left corner to the bottom right corner.
Logically, the zigzag edges are just ``identity transitions'' that connect the individual gadgets so that the entire Hamiltonian applies the full circuit from the top left to the bottom right.

When applied to the 2D-Hamiltonian, the clock spaces $K_1,\dots,K_m$ in \Cref{lem:connect} belong to the individual gadget Hamiltonians $H_{1,1},\dots,H_{1,m}$ (on the diagonal of \Cref{fig:H}) with $u_i$ the top left and $v_i$ the bottom right.
The zigzag edges are are modeled via $H_2$ with $V_1=\dotsm=V_{m-1}=\I$.

Although \Cref{lem:connect} is clearly tailored for the 2D-Hamiltonian, it also has broader applicability.
For example, we can also use it to prove the soundness of Kitaev's circuit Hamiltonian \cite{KSV02} without the transformation to the random walk matrix.

We denote the ancilla space by $A$ and the clock space by $C$.

\begin{lemma}[``Nullspace Connection Lemma'']\label{lem:connect}
  Let
  \begin{enumerate}[label=(\arabic*)]
    \item $K_1,\dots,K_m$ be a disjoint partition of the clock states with $u_i,v_i\in K_i,u_i\ne v_i$ for all $i\in[m]$.
    \item $H_{1} = \sum_{i=1}^m H_{1,i}$ be a Hamiltonian such that for all $i\in[m]$:
    \begin{enumerate}
      \item $\Null(H_{1,i}|_{\mathcal{K}_i}) = \Span\{\ket{\psi_i(\alpha_j)}\mid j\in[d]\}$, where $\mathcal{K}_i=\CC^{d}_A\otimes\Span\{\ket{v}_C\mid v\in K_i\}$, and $\ket{\alpha_1},\dots,\ket{\alpha_d}$ is an orthonormal basis of the ancilla space,
      \item $\exists$ linear map $L_i$ with $L_i\ket{\alpha} = \ket{\psi_i(\alpha)}$ and $L_i^\dagger L_i = \lambda_i \I$ for some constant $\lambda_i$,
      \item $H_i$ has support only on clock states $K_i$,
      \item $\norm{\ket{\psi_i(\alpha)}}^2 =: \delta_i \in [1, \Delta]$,
      \item $(\I_A\otimes\bra{u_i}_C)\ket{\psi_i(\alpha)} = \ket{\alpha}_A$,
      \item $(\I_A\otimes\bra{v_i}_C)\ket{\psi_i(\alpha)} = U_i\ket{\alpha}_A$ for some unitary $U_i$.
    \end{enumerate}
    \item $H_2 = \sum_{i=1}^{m-1} h_{v_i,u_{i+1}}(V_i)$ with $h_{v_i,u_{i+1}}(V_i) = \I\otimes\ketbrab{v_i} + \I\otimes\ketbrab{u_{i+1}} - V_i^\dagger \otimes \ketbra{v_i}{u_{i+1}} - V_i \otimes \ketbra{u_{i+1}}{v_i})$ for unitaries $V_i$.
    \item $\ket{\alpha_{ij}} = V_{i-1}U_{i-1}\dotsm V_1U_1\ket{\alpha_j}$.
  \end{enumerate}
  Then for $H=H_1+H_2$, $\Null(H) = \Span\{\sum_{i=1}^m \ket{\psi_i(\alpha_{ij})}\mid j\in[d]\}$
  and $\gamma(H) = \Omega(\gamma(H_1)/(m^2\Delta))$.
\end{lemma}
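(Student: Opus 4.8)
The plan is to pin down $\Null(H)$ and the spectral gap of $H=H_1+H_2$ by reducing everything to a weighted Poincar\'e inequality on the path graph, and then invoking the refined Geometric Lemma (\Cref{cor:GN:cor2}) with $H_A=H_1$, $H_B=H_2$ (the case $m=1$ being trivial, so assume $m\ge 2$). Since $K_1,\dots,K_m$ partition \emph{all} clock states, the Hilbert space splits as $\bigoplus_{i=1}^m\mathcal{K}_i$, and by (2c) each $H_{1,i}$ acts inside $\mathcal{K}_i$; hence $S:=\Null(H_1)=\bigoplus_i\Null(H_{1,i}|_{\mathcal{K}_i})=\bigoplus_i\{L_i\ket{\beta}\}$ by (2a)--(2b). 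Thus every $\ket\phi\in S$ is uniquely $\ket\phi=\sum_{i=1}^m\ket{\psi_i(\beta_i)}$ for ancilla vectors $\beta_1,\dots,\beta_m$, and orthogonality of the $\mathcal{K}_i$ together with $L_i^\dagger L_i=\lambda_i\I$ gives $\norm{\phi}^2=\sum_i\lambda_i\norm{\beta_i}^2$ with each $\lambda_i\in[1,\Delta]$ by (2d).

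Next I would compute $H_2$ on such $\ket\phi$. The link term $h_{v_i,u_{i+1}}(V_i)$ only touches the clock states $v_i\in K_i$ and $u_{i+1}\in K_{i+1}$, so its only inputs are the components $(\I\otimes\bra{v_i})\ket\phi=U_i\ket{\beta_i}$ and $(\I\otimes\bra{u_{i+1}})\ket\phi=\ket{\beta_{i+1}}$, furnished by $\ket{\psi_i(\beta_i)}$ and $\ket{\psi_{i+1}(\beta_{i+1})}$ via (2f) and (2e). Evaluating the two-level form $\I\otimes\ketbrab{v_i}+\I\otimes\ketbrab{u_{i+1}}-V_i^\dagger\otimes\ketbra{v_i}{u_{i+1}}-V_i\otimes\ketbra{u_{i+1}}{v_i}$ on $U_i\ket{\beta_i}\otimes\ket{v_i}+\ket{\beta_{i+1}}\otimes\ket{u_{i+1}}$ and using that $V_i$ is unitary gives $\bra\phi h_{v_i,u_{i+1}}(V_i)\ket\phi=\norm{V_iU_i\beta_i-\beta_{i+1}}^2$, so $\bra\phi H_2\ket\phi=\sum_{i=1}^{m-1}\norm{V_iU_i\beta_i-\beta_{i+1}}^2$. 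Passing to the ``interaction picture'' $\gamma_i:=W_{i-1}^\dagger\beta_i$, with $W_{i-1}:=V_{i-1}U_{i-1}\cdots V_1U_1$ (so $W_0=\I$), this telescopes to $\bra\phi H_2\ket\phi=\sum_{i=1}^{m-1}\norm{\gamma_i-\gamma_{i+1}}^2=:E$, while $\norm{\phi}^2=\sum_i\lambda_i\norm{\gamma_i}^2$ is unchanged. Hence $\bra\phi H_2\ket\phi=0$ iff all $\gamma_i$ coincide, i.e.\ $\beta_i=W_{i-1}\beta_1$; expanding $\beta_1$ in $\{\ket{\alpha_j}\}$ and recalling $\ket{\alpha_{ij}}=W_{i-1}\ket{\alpha_j}$ identifies $\Null(H)=\Null(H_1)\cap\Null(H_2)$ with $\Span\{\sum_{i=1}^m\ket{\psi_i(\alpha_{ij})}\mid j\in[d]\}$, and these are independent since $\I\otimes\bra{u_1}$ maps the $j$-th one to $\ket{\alpha_j}$. (That each such vector lies in $\Null(H)$ also follows at once from (2c)--(2f).)

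For the gap, \Cref{cor:GN:cor2} gives $\gamma(H)\ge\min\{\gamma(H_1),\gamma(H_2)\}\cdot\gamma(H_2|_S)/(2\norm{H_2|_S})$, so I need three estimates. First, the pairs $\{v_i,u_{i+1}\}$ are pairwise disjoint (the $K_i$ are disjoint and $u_i\ne v_i$), so the summands of $H_2$ commute, each satisfies $h^2=2h$, and $H_2$ is a direct sum of two-level blocks of spectrum $\{0,2\}$; thus $\gamma(H_2)=\norm{H_2}=2=\Theta(1)$. Second, $\norm{H_2|_S}\le 4$, since $E\le 4\sum_i\norm{\gamma_i}^2\le 4\norm{\phi}^2$. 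Third --- the crux --- $\gamma(H_2|_S)=\Omega(1/(m^2\Delta))$: for a unit $\ket\phi\in S$ orthogonal to $\Null(H)$, which in the rotated coordinates means $\sum_i\lambda_i\gamma_i=0$ and $\sum_i\lambda_i\norm{\gamma_i}^2=1$, one must lower-bound $E$. By Cauchy--Schwarz and telescoping, $\norm{\gamma_i-\gamma_j}^2\le(m-1)E$ for all $i,j$; and $\sum_i\lambda_i\gamma_i=0$ rearranges to $\gamma_1=-(\sum_i\lambda_i)^{-1}\sum_i\lambda_i(\gamma_i-\gamma_1)$, a negated convex combination of vectors of norm $\le\sqrt{(m-1)E}$, whence $\norm{\gamma_1}\le\sqrt{(m-1)E}$, then $\norm{\gamma_i}\le 2\sqrt{(m-1)E}$, so $\norm{\gamma_i}^2\le 4mE$ for every $i$. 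Therefore $1=\sum_i\lambda_i\norm{\gamma_i}^2\le\Delta\sum_i\norm{\gamma_i}^2\le 4\Delta m^2E$, i.e.\ $E\ge 1/(4\Delta m^2)$. Combining the three estimates, $\gamma(H)=\Omega\!\left(\min\{\gamma(H_1),\gamma(H_2)\}/(m^2\Delta)\right)=\Omega(\gamma(H_1)/(m^2\Delta))$, using $\gamma(H_2)=\Theta(1)$ (and, in the regime where the lemma is applied, $\gamma(H_1)=O(1)$).

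The main obstacle is extracting the $1/m^2$ and $1/\Delta$ factors of $\gamma(H_2|_S)$ simultaneously: the obvious route --- decompose $\ket\phi$ into its $\Null(H)$-component plus a remainder and invoke the ordinary $\Theta(1/m^2)$ spectral gap of the path graph --- leaks extra powers of $\Delta$ through the mismatch between the $\lambda$-weighted orthogonality constraint and the unweighted Laplacian. The crude telescoping estimate above sidesteps this, costing only a constant factor over the optimal path-graph gap while using the weight constraint verbatim. Everything else (the block structure of $H_1$, the two-level evaluation of each link term, and the disjoint-support structure of $H_2$) is routine bookkeeping.
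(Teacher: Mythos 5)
Your proof is correct and follows the paper's overall strategy: characterize $S=\Null(H_1)$, evaluate the quadratic form of $H_2$ on $S$ as a path-Laplacian energy, and feed the result into the refined Geometric Lemma (\Cref{cor:GN:cor2}). Two places differ in a way worth noting. First, the paper decomposes $S$ into orthogonal ancilla-basis blocks $S_j=\Span\{\ket{\psi_i(\alpha_{ij})}\mid i\in[m]\}$, reducing the problem to scalar coefficients $a_i$ on a path; you instead keep the ancilla vectors $\beta_i$ and pass to the ``interaction picture'' coordinates $\gamma_i=W_{i-1}^\dagger\beta_i$, obtaining the vector-valued identity $\bra\phi H_2\ket\phi=\sum_i\norm{\gamma_i-\gamma_{i+1}}^2$ directly, without ever invoking block-diagonality across the $S_j$. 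Second, and more substantively, the Poincar\'e-type lower bound is proved by a genuinely different argument: the paper locates $l$ with $\abs{a_l}^2\delta_l\ge1/m$, rotates a global phase to make $a_l$ real positive, uses $\sum_i a_i\delta_i=0$ to find $k$ with $\Re(a_k)<0$, and then runs the triangle inequality from $\abs{a_l-a_k}$ along the path; you telescope-and-Cauchy--Schwarz to get $\norm{\gamma_i-\gamma_j}^2\le(m-1)E$ uniformly, exploit $\sum_i\lambda_i\gamma_i=0$ as a weighted barycenter constraint to bound $\norm{\gamma_1}$ by convexity, and propagate to all $\norm{\gamma_i}$. Your version is phase-free, works at the vector level, and is arguably cleaner; both yield the same $\Omega(1/(m^2\Delta))$. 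Everything else --- the nullspace identification, $\gamma(H_2)=\norm{H_2}=\Theta(1)$ from disjoint supports, and $\norm{H_2|_S}=O(1)$ --- matches the paper, and your closing remark about needing $\gamma(H_1)=O(1)$ for the $\min$ to collapse to $\gamma(H_1)$ is a gap the paper's own proof also silently elides.
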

\begin{proof}
  Since the $H_{1,i}$ Hamiltonians act on different clock spaces, they commute and it holds $S := \Null(H_1) = \Span\{\ket{\psi_i(\alpha_j)}\mid i\in[m],j\in[d]\}$.
  Next, we derive $\Null(H)=S\cap \Null(H_2)$.
  Partition $S$ into orthogonal subspaces $S_1,\dots,S_d$, where $S_j = \Span\{\ket{\psi_i(\alpha_{ij})}\mid i\in[m]\}$.
  Orthogonality holds because $\braket{\psi_i(\alpha)}{\psi_i(\alpha')} = \bra{\alpha}L_i^\dagger L_i\ket{\alpha'} = \lambda_i\braket{\alpha}{\alpha'}$, and $\ket{\psi_i(\alpha)}\in \K_i$, where $\K_i\subseteq\K_{i'}^\bot$ for $i\ne i'$.
  As $H_2$ is block diagonal across $S_1,\dots,S_d$, it suffices to consider the $\Null(H_2\vert_{S_j})$ individually.

  Let $\ket\psi = \sum_{i=1}^m a_{i}\ket{\psi_i(\alpha_{ij})}\in S_j$.
  Then $\braketb{\psi}{H} = \braketb{\psi}{H_2} = \sum_{i=1}^{m-1} \braketb{\psi}{h_{v_i,u_{i+1}}(V_i)}$
  with
  \begin{equation}
    \begin{aligned}
      h_{v_i,u_{i+1}}(V_i)\ket{\psi} &= a_{i}U_i\ket{\alpha_{ij}}\ket{v_i} + a_{i+1}\ket{\alpha_{i+1,j}}\ket{u_{i+1}} - a_{i+1}V_i^\dagger \ket{\alpha_{i+1,j}}\ket{v_i} - a_{i}V_i U_i \ket{\alpha_{ij}}\ket{u_{i+1}}\\
      &=a_{i}U_i\ket{\alpha_{ij}}\ket{v_i} + a_{i+1}\ket{\alpha_{i+1,j}}\ket{u_{i+1}} - a_{i+1}U_i\ket{\alpha_{ij}}\ket{v_i} - a_{i}\ket{\alpha_{i+1,j}}\ket{u_{i+1}}.
    \end{aligned}
  \end{equation}
  Thus, $\bra{\psi}h_{v_i,u_{i+1}}(V_i)\ket{\psi} = a_i^*a_i + a_{i+1}^*a_{i+1} - a^*_{i}a_{i+1} - a^*_{i+1}a_i = \abs{a_i - a_{i+1}}^2$ and $\Null(H_2\vert_{S_j}) = \Span\{\allowbreak\sum_{i=1}^m \ket{\psi_i(\alpha_{ij})}\}$.

  Next, we bound $\gamma(H)$ using \Cref{cor:GN:cor2}, stating $\gamma(H)\ge\min\{\gamma(H_1),\gamma(H_2)\}\cdot \gamma(H_2\vert_S)/(2\norm{H_2})$.
  The terms $\norm{H_2},\gamma(H_2)$ are constant as the $h_{v_i,u_{i+1}}$ projectors act on distinct clock states.
  Hence, $\gamma(H) = \Omega(\gamma(H_1)\gamma(H_2\vert_S))$.
  Since $H_2\vert_S$ is block diagonal, we have $\gamma(H_2\vert_S)\ge\min_{j\in[d]}\gamma(H_2\vert_{S_j})$, where $\gamma(H_2\vert_{S_j}) = \min_{\ket{v}\in \Gamma_j:\braket vv = 1}\braketb{v}{H_2}$ and $\Gamma_j = S_j\cap \Null(H)^\bot$.

  Let $\ket\psi = \sum_{i=1}^m a_i\ket{\psi_i(\alpha_{ij})}\in \Gamma_j, \braketc{\psi}=1$ and $\ket\phi = \sum_{i=1}^m \ket{\psi_i(\alpha_{ij})}\in\Null(H)$.
  Then $0 = \braket{\phi}{\psi} = \sum_{i=1}^m a_i\delta_i$ and $\braketb{\psi}{H_2} = \sum_{i=1}^{m-1} \abs{a_i-a_{i+1}}^2\ge (\sum_{i=1}^{m-1}\abs{a_i-a_{i+1}})^2/m$.
  Also, there exists an $l$ such that $\abs{a_l}^2\delta_l\ge1/m$.
  Via a global rotation, we may assume without loss of generality $a_l>0$.
  Since $\sum_{i=1}^m a_i\delta_i = 0$, there must exist $k$ with $\Re(a_k)<0$.
  Thus $\abs{a_l-a_k}= \abs{a_l\delta_l-a_k\delta_l}/\delta_l>\abs{a_l\delta_l}/\delta_l\ge1/\sqrt{m\Delta}$.
  By the triangle inequality, $\braketb{\psi}{H_2}\ge 1/(m^2\Delta)$.
\end{proof}

\begin{figure}[t]
  \centering
  \ifx\figcommon\undefined\input{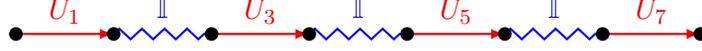}\fi
\renewcommand{\dotSize}{0.07}
\renewcommand{\baseGridW}{8}
\newcommand{\nsteps}{4}
\renewcommand{\baseGridH}{1}

\begin{tikzpicture}[x=1cm,y=1cm,scale=1.3]

\foreach \i in {1,3,...,\baseGridW} {
	\pgfmathsetmacro\ii{int(\i-1)}
	\draw[unitary] (\i,1) -- (\i+1,1) node [midway,above] {$U_\i$};
}
\foreach \i in {2,3,...,\nsteps} {
	\draw[connect] (2*\i-2,1) -- (2*\i-1,1) node [midway,above=2.2pt] {$\mathbb I$};
}

\makeGrid{\baseGridW}{\baseGridH}

\end{tikzpicture}
  \caption{Applying the Nullspace Connection Lemma to Kitaev's circuit Hamiltonian.}
  \label{fig:kitaev}
\end{figure}

\begin{remark}\label{rem:nullspace-kitaev}
  The Nullspace Connection Lemma is quite general and, for example, can also used to prove correctness of Kitaev's original circuit-to-Hamiltonian construction \cite{KSV02} directly without transforming $\hprop$ to a random walk.
  Recall
  \begin{align*}
    \hprop &= \sum_{j=1}^N \Hprop^j,\\
    \hprop^j &= -\frac12 (U_j)_A \otimes \ketbra{j}{j-1}_C - \frac12 (U_j^\dagger)_A\otimes + \frac12I_A\otimes (\ketbra jj + \ketbra{j-1}{j-1})_C,
  \end{align*}
  where $A$ denotes the ancilla space where the computation takes place, and $C$ the clock space.
  Assume that $U_j=\I$ for even $j$.
  \Cref{fig:kitaev} then depicts $\hprop$ for $N = 7$ in the style of \Cref{fig:H}.
  The graph is only a line since $\hprop$ uses an ordinary (single) clock.
  To apply \Cref{lem:connect}, let $K_1=\{0,1\}, K_2=\{2,3\}, K_3 = \{4,5\}, K_4=\{6,7\}$, $H_1 = \sum_{j\in\{1,3,5,7\}}\hprop^j$ (red edges in \Cref{fig:kitaev}), and $H_2 = \sum_{j\in\{2,4,6\}}\hprop^j$ (blue zig-zag edges in \Cref{fig:kitaev}).
  It is now easy to verify that all the conditions of \Cref{lem:connect} are satisfied:
  For example, $\Null(\hprop^1|_{\mathcal{K}_1}) = \Span\{\ket{\psi_1(\alpha_j)}\mid j\in [d]\}$, for $\ket{\psi_1(\alpha_j)} = \ket{\alpha_j}_A\ket{0}_C + U_1\ket{\alpha_j}_A\ket{1}_C$.
  It follows that $\Null(\hprop) = \Span\{\ket{\psihist(\alpha_j)}\mid j\in[d]\}$,
  where $\ket{\psihist(\alpha)} = \sum_{i=0}^N U_{i}\dotsm U_1\ket{\alpha}\ket{i}_C$.

  The Hamiltonian $\hin$ is then used to restrict the ancilla space to $\ket{0}$ on timestep $0$, so that $\Null(\hin+\hprop) = \Span\{\ket{\psihist(0)}\}$.
\end{remark}

\section{(3,d)-QSAT on a line}\label{sec:3d}

\thmThreeD*
To prove this theorem, we give a general construction to embed an arbitrary Hamiltonian $H=\sum_{i=1}^{n-1}H_{i,i+1}$ on $n$ qu$d$its with $d$ into a Hamiltonian $H'$ on an alternating line of $n+1$ qu$d'$its ($d'=(d'')^2$) and $n$ qutrits.
The qu$d'$its are treated as two qu$d''$its of dimension $d'' \in O(d^2)$.
Then each triple $(d'',3,d'')$ logically stores one qu$d$it, which is ``sent'' between the outer qu$d''$its with a history-state-like construction.
Conceptually, we think of this system as three bins with two kinds of balls (say \emph{red} and \emph{black}).
The outer bins (qu$d''$its) may contain up to $B=2d$ balls, and the middle bin only at most a single ball.
A valid configuration of the bins has $B+1$ balls, and the number of red balls is even and positive.
We use transition terms to enforce a superposition of all valid configurations that have the same number of red balls.
Hence, the $(d'',3,d'')$ system has a nullspace of dimension $d$.

\newcommand{\notstandalone}{}
\begin{figure}[t]
  \centering
  \ifx\notstandalone\undefined
\documentclass[tikz,border=1mm]{standalone}
\usepackage{mathtools}
\DeclarePairedDelimiter\ket{\lvert}{\rangle}
\newcommand\p{}
\begin{document}
\fi

\newcommand{\drawBuckets}[3]{
  \draw[bucket] \p{0}{2} -- \p{0}{0} -- \p{2}{0} -- \p{2}{2};
  \node at \p{1}{-0.6} {$\quad#1_{\alpha_\ell}$};
  \draw[] \p{2}{0.25} -- \p{3}{0.25};
  \node at \p{3.5}{-0.6} {$\;\;#2_\beta$};
  \draw[bucket] \p{3}{1} -- \p{3}{0} -- \p{4}{0} -- \p{4}{1};
  \draw[] \p{4}{0.25} -- \p{5}{0.25};
  \draw[bucket] \p{5}{2} -- \p{5}{0} -- \p{7}{0} -- \p{7}{2};
  \node at \p{6}{-0.6} {$\quad#3_{\alpha_r}$};
}

\newcommand{\ballR}[2]{
  \filldraw[red!60] \p{#1+0.5}{#2+0.5} circle (0.35);
}
\newcommand{\ballB}[2]{
  \filldraw \p{#1+0.5}{#2+0.5} circle (0.35);
}

\tikzset{bucket/.style={thick},x=1cm,y=1cm}
{
\newcommand{\x}{0}
\newcommand{\y}{0}
\renewcommand{\p}[2]{(\x+#1,\y+#2)}
\begin{tikzpicture}[scale=0.55]
  \drawBuckets{\ket{2,2}}{\ket2}{\ket{0,0}}
  \ballR{0}{1}
  \ballR{1}{1}
  \ballB{0}{0}
  \ballB{1}{0}
  \ballB{3}{0}

  \renewcommand{\x}{9}
  \drawBuckets{\ket{2,2}}{\ket0}{\ket{0,1}}
  \ballR{0}{1}
  \ballR{1}{1}
  \ballB{0}{0}
  \ballB{1}{0}
  \ballB{5}{0}

  \renewcommand{\x}{18}
  \drawBuckets{\ket{1,2}}{\ket1}{\ket{0,1}}
  \ballR{0}{1}
  \ballB{0}{0}
  \ballB{1}{0}
  \ballR{3}{0}
  \ballB{5}{0}

  \renewcommand{\x}{0}
  \renewcommand{\y}{-4}
  \drawBuckets{\ket{1,2}}{\ket0}{\ket{1,1}}
  \ballR{0}{1}
  \ballB{0}{0}
  \ballB{1}{0}
  \ballB{5}{0}
  \ballR{6}{0}

  \renewcommand{\x}{9}
  \drawBuckets{\ket{0,2}}{\ket1}{\ket{1,1}}
  \ballB{0}{0}
  \ballB{1}{0}
  \ballR{3}{0}
  \ballB{5}{0}
  \ballR{6}{0}

  \renewcommand{\x}{18}
  \drawBuckets{\ket{0,2}}{\ket0}{\ket{2,1}}
  \ballB{0}{0}
  \ballB{1}{0}
  \ballB{5}{0}
  \ballR{6}{0}
  \ballR{5}{1}

\end{tikzpicture}
}

\ifx\notstandalone\undefined
\end{document}
\fi
  \caption{Configurations of the balls and bins for $d=2$. Only configurations in $\C_{1}$ are depicted ($c_{1,1}=2$ red balls and $c_{1,2}+1=3$ black balls). The first two are in $\C_{1}^*$.
  The configurations evolve according to the transitions of \eqref{eq:Hball:trans:1} and \eqref{eq:Hball:trans:2}. There are $d''=15$ possible configurations for the large bins.}
  \label{fig:balls}
\end{figure}

The standard basis states of the qu$d''$it are written as $\ket{c_1,c_2}$ with $c_1,c_2\in\ZZ_{\ge0}$ and $c_1+c_2 \le B := 2d$.
Thus, we get $d''=\sum_{i=0}^{B}(B+1-i)=(B+1)(B+2)/2=(d+1)(2d+1)$.
Semantically, one may think of $c_1$ as ``number of red balls'' and $c_2$ as ``number of black balls'' (see \Cref{fig:balls}).
For $i\in [d]$ let $c_{i,1} = 2i, c_{i,2} = B - 2i$ and define the set of valid configurations corresponding to the state $\ket{i}\in\CC^d$ as
\begin{equation}\label{eq:Ci}
\C_i := \left\{(l_1,l_2,m, r_1,r_2) \;\middle|\;
\scalemath{0.85}{
\setlength{\jot}{0pt}
\begin{aligned}
  l_1,l_2,r_1,r_2&\in\{0,\dots,B\}\\m&\in\{0,1,2\}\\l_1+r_1+\delta_{m,1}&=c_{i,1}\\l_2+r_2+\delta_{m,2}&=c_{i,2}+1
\end{aligned}
}
\right\},
\end{equation}
where $\delta_{x,y}$ denotes the Kronecker delta.
The constraints in \eqref{eq:Ci} enforce that in total there are $c_{i,1}$ red balls and $c_{i,2}+1$ black balls (see also \Cref{fig:balls}).

We place the local terms of the original Hamiltonian into the dimensions corresponding to $\C_i^* :=\{(l_1,l_2,m,r_1,r_2)\in\C_i\mid (l_1,l_2)=c_i^*\vee (r_1,r_2)=c_i^*\}$, where $c_i^*:=(c_{i,1},c_{i,2})$.
These configurations can be identified locally, i.e., one can tell which $\C_i^*$ a configuration corresponds to by only looking at the left or the right bin.
Note that $\abs{\C_i^*} = 4$ and $\C_i^*\cap \C_j=\emptyset$ for $j\ne i$.
Thus, $\bra{c_i^*}_{\alpha_\ell}\ket{\psi_j} =0$ if $j\ne i$.

A logical $\ket{i}$ is then represented by
\begin{equation}\label{eq:psi_i}
  \ket{\psi_i} = \sum_{x=(l_1,l_2,m,r_1,r_2)\in \C_i}\sqrt{w_x}\ket{l_1,l_2}_{\alpha_\ell}\ket{m}_{\beta}\ket{r_1,r_2}_{\alpha_r},\qquad w_x = \begin{cases}
    \frac{2}{3}\cdot \frac{1}{\abs{C_{i}^*}}=:w_i^*, & x\in C_{i}^*\\
    \frac{1}{3}\cdot\frac{1}{\abs{\C_i\setminus \C_i^*}}=:\overline w_i, &x\notin C_{i}^*
  \end{cases},
\end{equation}
where $\alpha_\ell,\alpha_r$ denote the qu$d''$its and $\beta$ the qutrit.
Let
\begin{equation}\label{eqn:V}
    V = \sum_{i=1}^{d} \ketbra{\psi_i}{i}\in\CC^{3(d'')^2\times d}
\end{equation}
be the isometry that maps $\ket{i}$ to $\ket{\psi_i}$.
The weights $w_x$ in \eqref{eq:psi_i} ensure that the $\C_i^*$ always have the same amplitude ($\sqrt{1/6}$), as the $\C_i$ can have different sizes.
The reason for having the additional configurations $\C_i\setminus \C_i^*$ is so that we can use $2$-local transitions (see \eqref{eq:Hball:trans:1} and \eqref{eq:Hball:trans:2}) on the line to enforce a superposition between the $\C_i^*$ states.

Next, we construct a Hamiltonian whose nullspace is spanned by the logical states $\ket{\psi_1},\dots,\ket{\psi_d}$.
Let $\Hball = (\Hballhalf)_{\alpha_\ell\beta} + (\Hballhalf)_{\alpha_r\beta}$, where $\alpha_\ell,\alpha_r$ denote the qu$d''$its and $\beta$ the qutrit.
\begin{subequations}
  \begin{align}
    \Hballhalf &= P(\ket{0,0}\ket{0}) + P(\ket{0,B}\ket{2}) + P(\ket{B,0}\ket{1}) + \sum_{c_1+c_2=B,c_1\text{ odd}} P(\ket{c_1,c_2}\ket{2})\label{eq:Hball:P}\\
    &+\sum_{c_1>0,c_2} \bigl[T((c_1,c_2,0),(c_1-1,c_2,1)) + T((c_1,c_2,2),(c_1-1,c_2+1,1))\bigr]\label{eq:Hball:trans:1}\\
    &+\sum_{c_1,c_2>0} \bigl[T((c_1,c_2,0),(c_1,c_2-1,1)) + T((c_1,c_2,1),(c_1+1,c_2-1,2))\bigr]\label{eq:Hball:trans:2}
  \end{align},
\end{subequations}
where
\begin{subequations}
  \begin{align}
    P(\ket{\psi}) &:= \ketbrab{\psi},\\
    T((a_1,a_2,m_a),(b_1,b_2,m_b)) &:= P(\sqrt{w_{b}}\ket{a_1,a_2}\ket{m_a} - \sqrt{w_{a}}\ket{b_1,b_2}\ket{m_b}),\label{eq:3d-transition}\\
    (w_{a},w_{b}) &= \begin{cases}
      (w_i^*, \overline w_i), &(a_1,a_2) = c_i^*\\
      (\overline w_i, w_i^*), &(b_1,b_2) = c_i^*\\
      (1,1), &\text{otherwise}
    \end{cases}.\label{eq:weights}
  \end{align}
\end{subequations}
One may interpret $P(\ket\psi)$ as penalizing $\ket\psi$ and $T(\cdot)$ as a transition between the two given configurations, where the weights are chosen according to the weights in the $\ket{\psi_i}$ states.

\begin{lemma}
  $\Null(\Hball) = \Span\{\ket{\psi_i}\mid i\in[d]\} =: \Lball$.
\end{lemma}
\begin{proof}
  First, it is easy to see that $\Hball\ket{\psi_i} = 0$ for all $i\in[d]$.
  Next, observe that the transitions of \eqref{eq:Hball:trans:1} and \eqref{eq:Hball:trans:2} force the amplitudes of states corresponding to configurations with the same number of red and black balls to be equal (see the proof of \Cref{lem:HclockNullSpace}), up to scaling by the weights from \eqref{eq:weights}.
  It remains to argue that $\Null(\Hball)$ has no support on ``invalid configurations'' with fewer or more than $B+1$ balls, $0$ red balls, or an odd number of red balls.
  Let $(c_1,c_2)$ be the number of balls in an invalid configuration.
  Since the transition terms enforce a proportional superposition of all configurations of these balls, it suffices to show that there exists a configuration with $c_1$ red balls and $c_2$ black balls that is penalized by \eqref{eq:Hball:P}.
  We have the following cases:
  \begin{enumerate}
    \item $c_1+c_2\le B$. The configuration $\ket{c_1,c_2}\ket{0}\ket{0,0}$ is penalized by $\ket{0,0}\ket{0}$.
    \item $c_2>B$. The configuration $\ket{0,B}\ket{2}\ket{c_1,c_2-B-1}$ is penalized by $\ket{0,B}\ket{2}$.
    \item $c_1>B$. The configuration $\ket{B,0}\ket{1}\ket{B-c_1,c_2}$ is penalized by $\ket{B,0}\ket{1}$.
    \item $c_1\le B,c_2\le B,c_1+c_2\ge B+2$. Then $c_1\ge2$. Let $c_1' := c_1 - 1 + (c_1\bmod 2)$.
    Then $c_1' + c_2 \ge B+1$ and the configuration $\ket{c_1',B-c_1'}\ket{2}\ket{c_1-c_1',c_2-(B-c_1')-1}$ is penalized by the last term of \eqref{eq:Hball:P} because $c_1'$ is odd.
    \item $c_1+c_2=B+1$, $c_1\le B$ odd. Then $\ket{c_1,B-c_1}\ket{2}\ket{0,c_2-(B-c_1)-1}$ is penalized by the last term of \eqref{eq:Hball:P}.
  \end{enumerate}
\end{proof}

The logical states $\ket{\psi_1},\dots,\ket{\psi_d}$ are orthonormal and can be ``identified'' by projecting either qudit onto $\ket{c_i^*}$ as $\bra{c_{i}^*}_{\alpha_r}\ket{\psi_i} = \sqrt{1/6}(\ket{0,0}\ket{2}+\ket{0,1}\ket{0})_{\alpha_\ell\beta}=:\sqrt{1/3}\ket\eta$.
$\ket\eta$ is the residual state of $\ket{\psi_i}$ after projecting onto $\ket{c_i^*}$ and is the same for all $i\in[d]$.
In \Cref{fig:balls}, $\ket\eta_{\alpha_r\beta}$ is the superposition of the right halves of the first two configurations.

We define the isometry $L := \sum_{i=1}^d \ketbra{c_{i}^*}{i}\in\CC^{d''\times d}$ and finally the complete Hamiltonian $H'$ on qu$d'$its $\alpha_0,\dots,\alpha_n$ (each logically divided into two qu$d''$its $\gamma_i$ and $\delta_i$) and qubits $\beta_1,\dots,\beta_n$ is given by
\begin{subequations}
\begin{align}
  H' &= \Hlog + \Hsim\label{eqn:iso1}\\
  \Hlog &= \sum_{i=1}^n \bigl(\underbrace{(\Hballhalf)_{\delta_{i-1}\beta_{i}} + (\Hballhalf)_{\gamma_i\beta_{i}}}_{(\Hball)_{\delta_{i-1}\beta_i\gamma_{i}}}\bigr)\label{eqn:iso2}\\
  \Hsim &=\sum_{i=1}^{n-1} (L\otimes L)(H_{i,i+1})(L\otimes L)^\dagger_{\alpha_i},\label{eqn:iso3}
\end{align}
\end{subequations}
where $\Hlog$ contains the terms of $\Hball$ to enforce the logical subspace, and $\Hsim$ embeds the terms of the original Hamiltonian $H$.
\Cref{fig:H'} gives a graphical representation of $H'$.
Note that $H'$ acts as identity on the first half of the first qu$d'$it ($\gamma_0$) and the second half of the last qu$d'$it ($\delta_n$).
So we can write
\begin{equation}\label{eqn:H''}
    H' = \I_{\gamma_0}\otimes H''_{\delta_0\alpha_1\beta_1\dots\alpha_{n-1}\beta_n\gamma_{n}} \otimes \I_{\delta_n}.
\end{equation}
The next lemma shows that $H'$ and $H$ are equal inside the nullspace of $\Hlog$, up to an isometry.

\begin{figure}[t]
  \centering
  \input{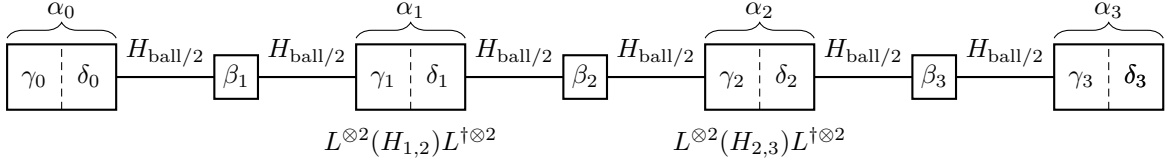}
  \caption{Graphical representation of $H'$ embedding an $n=3$ qudit Hamiltonian.
  The qu$d'$its $\alpha_0,\dots,\alpha_3$ are subdivided into qu$d''$its $\gamma_i,\delta_i$.
  $H'$ acts trivially on $\gamma_0,\delta_3$.}
  \label{fig:H'}
\end{figure}

\begin{lemma}\label{lem:H'L}
  $T^\dagger H'' T = \frac19 H$ with $T = V^{\otimes n}$, for $V$ defined in \Cref{eqn:V} and $H''$ in \Cref{eqn:H''}.
\end{lemma}
\begin{proof}
  It suffices to check equality for basis states, so let $x\in[d]^n$.
  Then $T\ket{x} = \ket{\psi_{x_1}}\dotsm\ket{\psi_{x_n}} =: \ket{\psi_x}$.
  Clearly, $\Hlog \ket{\psi_x} = 0$.
  Consider now the first summand of $\Hsim$, $(L\otimes L)(H_{1,2})(L\otimes L)^\dagger_{\alpha_i} =: M_1$.
  We have
  \begin{equation}
    V^{\dagger\otimes 2}M_1 \ket{\psi_{x_1}}_{\delta_0\beta_1\gamma_1}\ket{\psi_{x_2}}_{\delta_1\beta_2\gamma_2} = \frac13V^{\dagger\otimes 2}\ket{\eta}_{\delta_0\beta_1}\otimes L^{\otimes 2}H_{1,2}\ket{x_1}_{\epsilon_1}\ket{x_2}_{\epsilon_2}\otimes\ket{\eta}_{\gamma_2\beta_2}
    = \frac19H_{1,2}\ket{x_1x_2},
  \end{equation}
  where $\epsilon_1,\epsilon_2$ denote the qudits $H_{1,2}$ acts on, and the second equality follows from the fact that $V^\dagger(\ket\eta_{\alpha_\ell\beta}\otimes L_{\alpha_r}) = \I/\sqrt{3}$.
  Since $M_1$ only acts nontrivially on the first two logical qudits, we have
  $T^\dagger M_1 T \ket x = \frac19 H\ket{x}$.
  Applying the same argument to the other summands of $\Hsim$ yields $T^\dagger H''T \ket{x} = \frac19 H\ket{x}$.
\end{proof}

\begin{lemma}\label{lem:H'gamma}
  Let $H$ be a Hamiltonian on a line of $n$ qu$d$its with $H_{i,i+1}\succeq 0$ and $\gamma(H_{i,i+1})\in \Omega(1)$ for all $i\in[n-1]$.
  There exists an efficiently computable Hamiltonian $H'$ on an alternating chain of $n+1$ qu$d'$its and $n$ qutrits with $d' = ((d+1)(2d+1))^2\in O(d^4)$, such that $\lmin(H')=0$ iff $\lmin(H)=0$ and $\gamma(H')\in\Omega(\gamma(H)/\norm{H})$.
\end{lemma}
\begin{proof}
  If there exists $\ket\psi$ such that $H\ket\psi = 0$, then $H''T\ket{\psi}=0$ by \Cref{lem:H'L}.
  To prove $\gamma(H')\in\Omega(\gamma(H)/\norm{H})$, apply \Cref{cor:GN:cor2}.
  We have $\Null(\Hlog) = \Lball^{\otimes n} =: S$ and $\gamma(\Hlog),\gamma(\Hsim)\in\Omega(1)$ since $\Hlog,\Hsim$ are sums of commuting Hamiltonians.
  Since $T^{\dagger}\Pi_S = T^{\dagger}$, it follows that $T^\dagger H'' T=\frac19 H$ is equal to $H''\vert_S$ up to change of basis.
  Hence $\gamma(H'\vert_S) = \frac19\gamma(H)$ and $\norm{H'\vert_S} = \frac19\norm{H}$.
\end{proof}

\begin{proof}[Proof of \Cref{thm:3d}]
  2-QSAT one a line of qu$d$its with $d=11$ is $\QMAo$-complete \cite{nagajLocalHamiltoniansQuantum2008}.
  Using \Cref{lem:H'gamma}, we can embed this QSAT instance into a $(3,d')$-QSAT on a line.
\end{proof}

\begin{remark}
  Care needs to be taken regarding containment in $\QMAo$, since the transition terms of $\Hballhalf$ include irrational amplitudes (see \eqref{eq:psi_i} and \eqref{eq:3d-transition}) for which the techniques of \cite{GN13} (see \Cref{lem:containment}) do not directly apply.
  If we allow the $\QMAo$-verifier to use gates with entries from some algebraic field extension (as in \cite{bravyiEfficientAlgorithmQuantum2006}), we can easily verify $H'$.
  If we are restricted to the ``Clifford + T'' gate set as in \Cref{def:QMA1}, we can modify $\Hballhalf$ so that the sets $\C_i$ are of equal size.
  We can do this by adding additional dimensions with transitions to $\ket{c_i^*}$, which increases $d'$ to $O(d^6)$.
  Then we can set all weights in the transitions of $\Hball$ to $1$ so that the logical states $\ket{\psi_i}$ are just uniform superpositions over the $\C_i$ configurations.
  \Cref{lem:H'L} then still holds, albeit with a smaller factor that depends on $d$.
\end{remark}

\begin{remark}[Hamiltonian simulation]
  Our embedding of the Hamiltonian $H$ into $H'$ is related to the notion of Hamiltonian simulation \cite{BH17,cubittUniversalQuantumHamiltonians2018}.
  In a sense, our embedding is almost a \emph{perfect simulation} \cite[Definition 20]{cubittUniversalQuantumHamiltonians2018}, but then only the nullspace is really simulated perfectly since $\Hlog$ does not commute with $\Hsim$.
  Our construction takes the form $H' = \Hlog+ \Hsim$, such that $T^\dagger \Hsim T = cH$ for some constant $c$, where $T$ is an isometry with $TT^\dagger = \Pi_{\Null(\Hlog)}$.
  This notion of simulation may be helpful for future quantum SAT research.
\end{remark}

\newcommand\ST{ * {} {}}
\newcommand\bS{{\bar S}}
\section{Hamiltonian with unique entangled ground state on a (2,4)-Line}\label{sec:line}

We were only able to prove $\QMA_1$-hardness of quantum SAT on a $(3,d)$-line.
This raises the question whether hardness still holds with qubits instead of qutrits.
In this section, we show that it is at least possible to construct a $(2,4)$-QSAT instance on a line with a unique entangled null state.
Therefore, the $(2,d)$-QSAT problem on a line does not necessarily have a product state solution like $2$-QSAT.

{\renewcommand\footnote[1]{}\thmTwoFourLine*}

\begin{remark}
  Observe that besides the left and right boundary (projectors $L,R$), all $(4,2)$-projectors have the same form $A$ and all $(2,4)$-projectors have the same form $B$.
  Therefore, $H$ may be considered translation invariant, in a weaker sense.
  The Hamiltonian with a fully entangled ground state on a line of qutrits \cite{bravyiCriticalityFrustrationQuantum2012} also has additional projectors on the boundary.
\end{remark}

We begin by explicitly writing the unique ground state of this Hamiltonian on line of $6$ particles.
The dimensions of these particles are $(4,2,4,2,4,2)$, although for the first and second to last particle a qutrit would suffice as the the $\ket0/\ket{3}$ dimension is not used.
{
\addtolength{\jot}{-1mm}
\begin{equation}
\begin{aligned}
     &\ket{\texttt{ 1 0 | 0 0 | 0 0 }}\\
  +\;&\ket{\texttt{ 2 1 | 0 0 | 0 0 }}\\
  +\;&\ket{\texttt{ 2 0 | 1 0 | 0 0 }}\\
  +\;&\ket{\texttt{ 3 1 | 1 0 | 0 0 }}\\
  +\;&\ket{\texttt{ 2 0 | 2 1 | 0 0 }}\\
  +\;&\ket{\texttt{ 3 1 | 2 1 | 0 0 }}\\
  +\;&\ket{\texttt{ 2 0 | 2 0 | 1 0 }}\\
  +\;&\ket{\texttt{ 3 1 | 2 0 | 1 0 }}\\
  +\;&\ket{\texttt{ 2 0 | 3 1 | 1 0 }}\\
  +\;&\ket{\texttt{ 3 1 | 3 1 | 1 0 }}\\
  +\;&\ket{\texttt{ 2 0 | 2 0 | 2 1 }}\\
  +\;&\ket{\texttt{ 3 1 | 2 0 | 2 1 }}\\
  +\;&\ket{\texttt{ 2 0 | 3 1 | 2 1 }}\\
  +\;&\ket{\texttt{ 3 1 | 3 1 | 2 1 }}
\end{aligned}
\end{equation}
}
While this state might look complex at a first glance, it can easily be understood semantically.
We again think of particles as bins holding balls, but now there is only one kind of ball.
A state $\ket{c}$ means that the bin holds $c$ balls.
Thus, a qu$d$it can hold at most $d-1$ balls and we have \emph{large} bins of capacity $3$ , and \emph{small} bins of capacity $1$.
Initially, only the first bin contains a ball (first state in the superposition).
Then we evolve according to the following rules (also in the reverse):
\begin{itemize}
  \item If a large bin is not empty and the bin to the right is empty, we can add a ball to both bins ($\ket{c,0}\leftrightarrow \ket{c+1,1}$ for $c\in[1,d-2]$).
  \item If a small bin contains a ball and the large bin to the right is empty, we can move the ball from the small to the large bin ($\ket{1,0}\leftrightarrow \ket{0,1}$).
\end{itemize}
We can simplify the transitions by factoring $\ket{*} := \sqrt{1/2}(\ket{20} + \ket{31})$.
To obtain a uniform superposition, we also need to change the amplitudes of the transitions.
On $8$ particles, we obtain the following state (not normalized here).
{
\addtolength{\jot}{-1mm}
\begin{equation}\label{eq:ball-states}
\ket\phi  := \qquad
\begin{aligned}
     &\ket{\text{\texttt{ 1  0 | 0 0 | 0 0 | 0 0 }}}\\
  +\;&\ket{\texttt{ 2  1 | 0 0 | 0 0 | 0 0 }}\\
  +\;&\ket{\texttt{ \ST  | 1 0 | 0 0 | 0 0 }}\\
  +\;&\ket{\texttt{ \ST  | 2 1 | 0 0 | 0 0 }}\\
  +\;&\ket{\texttt{ \ST  | \ST | 1 0 | 0 0 }}\\
  +\;&\ket{\texttt{ \ST  | \ST | 2 1 | 0 0 }}\\
  +\;&\ket{\texttt{ \ST  | \ST | \ST | 1 0 }}\\
  +\;&\ket{\texttt{ \ST  | \ST | \ST | 2 1 }}
\end{aligned}
\end{equation}
}
Note, $\ket\phi$ has a quite similar structure to common clock constructions and can be extended to $2n$ particles.
We will show that it is the unique ground state of the following Hamiltonian:
\begin{align}
  H &= \ketbra00_1 + \ketbra33_{2n-1} + \sum_{i=1}^n A_{2i-1,2i} + \sum_{i=1}^{n-1} B_{2i,2i+1} \\
  A &= \bigl(\ket{10} - \ket{21}\bigr)\bigl(\bra{10}-\bra{21}\bigr) + \bigl(\ket{20} - \ket{31}\bigr)\bigl(\bra{20}-\bra{31}\bigr) + \ketbra{30}{30} + \ketbra{11}{11} \\
  B &= \bigl(\ket{10} - \sqrt{2}\ket{01}\bigr)\bigl(\bra{10}-\sqrt{2}\bra{01}\bigr)
\end{align}

\begin{lemma}\label{eq:phi}
  $\ket\phi$ is fully entangled, i.e. $\ket\phi \ne \ket{\phi_1}_A\otimes \ket{\phi_2}_B$ for all cuts $A/B$ and $\ket{\phi_1},\ket{\phi_2}$.
\end{lemma}
\begin{proof}
  Consider the random experiment of measuring $\ket\psi$ in standard basis.
  The outcome is denoted by the string $x$.
  Let $S\subset [2n]$ be a subset of particles.
  If $\ket\phi = \ket{\phi_S}\ket{\phi_\bS}$, then the random variables $x_S$ and $x_\bS$ (substrings of $x$ on particles $S$ and $\bS=[2n]\setminus S$, respectively) are independent.
  In the following, we argue that this is not the case.

  Note for an odd $i$, $P(x_i=3,x_{i+1}=0)=0$, but $P(x_i=3)P(x_{i+1}=0)>0$.
  Hence, if there exists odd $i$ such that $\abs{\{i,i+1\} \cap S}=1$, then
  $x_S, x_\bS$ are not independent.

  Otherwise, there exists an odd $i$ such that $\abs{\{i,i+2\}\cap S}=1$.
  Again, $x_S, x_\bS$ are not independent as $P(x_i=0,x_{i+2}=1)=0$, but $P(x_i=0)P(x_{i+2}=1)>0$.
\end{proof}

\begin{lemma}
  $\ket\phi$ is the unique ground state of $H$.
\end{lemma}
\begin{proof}
  It is easy to verify $H\ket\phi=0$.
  Now, assume $H\ket\psi=0$.
  If there exists a standard basis vector $\ket x$ such that $\braket x\psi \ne 0$, it corresponds to an illegal state of the ball game (terms of \eqref{eq:ball-states} are the legal states).
  Observe that the transition terms of $A$ and $B$ directly correspond to the allowed moves in the ball game.
  The illegal states that are not caught directly, are \texttt{|10|} or \texttt{|21|} not followed by all zeroes.
  By applying the transition rules, we can go to \texttt{|11|}, which is caught by $A$.
  Hence, $\ket\psi$ and $\ket\phi$ overlap the same standard basis vectors.
  The transition terms ensure the weights are such that $\ket\phi$ can be written as in \eqref{eq:phi}.
\end{proof}

\begin{remark}
  $\ket\phi$ has only constant entanglement entropy, whereas \cite{bravyiCriticalityFrustrationQuantum2012} achieves $\Omega(\log n)$.
  So it remains an open problem whether logarithmic entanglement entropy can be achieved on the $(2,4)$-line.
\end{remark}

% \subsection*{Acknowledgements}
% We thank Jonas Kamminga for proofreading, and Libor Caha for helpful discussions.
% SG was supported by
% the DFG under grant numbers 432788384 and 450041824, the BMBF within the funding program “Quantum Technologies - from Basic Research to Market” via project PhoQuant (grant
% number 13N16103), and the project “PhoQC” from the programme “Profilbildung 2020”, an initiative of the Ministry of Culture and Science of the State of North Rhine-Westphalia.
% DR was supported by the DFG under grant number 432788384.

\printbibliography

@article{BC18,
  doi       = {10.22331/q-2018-09-19-94},
  url       = {https://doi.org/10.22331%2Fq-2018-09-19-94},
  year      = 2018,
  month     = {sep},
  publisher = {Verein zur Forderung des Open Access Publizierens in den Quantenwissenschaften},
  volume    = {2},
  pages     = {94},
  author    = {Johannes Bausch and Elizabeth Crosson},
  title     = {Analysis and limitations of modified circuit-to-Hamiltonian constructions},
  journal   = {Quantum}
}

@article{BCO17,
  author   = {Bausch, Johannes
              and Cubitt, Toby
              and Ozols, Maris},
  title    = {The Complexity of Translationally Invariant Spin Chains with Low Local Dimension},
  journal  = {Annales Henri Poincar{\'e}},
  year     = {2017},
  month    = {Nov},
  day      = {01},
  volume   = {18},
  number   = {11},
  pages    = {3449--3513},
  abstract = {We prove that estimating the ground state energy of a translationally invariant, nearest-neighbour Hamiltonian on a 1D spin chain is {\$}{\$}{\backslash}textsf {\{}QMA{\}}{\_}{\{}{\{}{\backslash}textsf {\{}EXP{\}}{\}}{\}}{\$}{\$}-complete, even for systems of low local dimension ({\$}{\$}{\backslash}approx 40{\$}{\$}). This is an improvement over the best previously known result by several orders of magnitude, and it shows that spin-glass-like frustration can occur in translationally invariant quantum systems with a local dimension comparable to the smallest-known non-translationally invariant systems with similar behaviour. While previous constructions of such systems rely on standard models of quantum computation, we construct a new model that is particularly well-suited for encoding quantum computation into the ground state of a translationally invariant system. This allows us to shift the proof burden from optimizing the Hamiltonian encoding a standard computational model, to proving universality of a simple model. Previous techniques for encoding quantum computation into the ground state of a local Hamiltonian allow only a linear sequence of gates, hence only a linear (or nearly linear) path in the graph of all computational states. We extend these techniques by allowing significantly more general paths, including branching and cycles, thus enabling a highly efficient encoding of our computational model. However, this requires more sophisticated techniques for analysing the spectrum of the resulting Hamiltonian. To address this, we introduce a framework of graphs with unitary edge labels. After relating our Hamiltonian to the Laplacian of such a unitary labelled graph, we analyse its spectrum by combining matrix analysis and spectral graph theory techniques.},
  issn     = {1424-0661},
  doi      = {10.1007/s00023-017-0609-7},
  url      = {https://doi.org/10.1007/s00023-017-0609-7}
}

@article{BH17,
  title   = {On Complexity of the Quantum {{Ising}} Model},
  author  = {S. Bravyi and M. Hastings},
  year    = {2017},
  journal = {Communications in Mathematical Physics},
  volume  = {349},
  number  = {1},
  pages   = {1--45},
  doi     = {10.1007/s00220-016-2787-4}
}

@misc{Wat19,
      title={Detailed Analysis of Circuit-to-Hamiltonian Mappings}, 
      author={James D. Watson},
      year={2019},
      eprint={1910.01481},
      archivePrefix={arXiv},
      primaryClass={quant-ph}
}

@inproceedings{ER08,
  author    = {Eldar, Lior
               and Regev, Oded},
  editor    = {Aceto, Luca
               and Damg{\aa}rd, Ivan
               and Goldberg, Leslie Ann
               and Halld{\'o}rsson, Magn{\'u}s M.
               and Ing{\'o}lfsd{\'o}ttir, Anna
               and Walukiewicz, Igor},
  title     = {Quantum SAT for a Qutrit-Cinquit Pair Is QMA1-Complete},
  booktitle = {Automata, Languages and Programming},
  year      = {2008},
  publisher = {Springer Berlin Heidelberg},
  address   = {Berlin, Heidelberg},
  pages     = {881--892},
  isbn      = {978-3-540-70575-8}
}

@article{GN13,
  title       = {Quantum 3-SAT Is QMA1-Complete},
  doi         = {10.1109/focs.2013.86},
  journal     = {2013 IEEE 54th Annual Symposium on Foundations of Computer Science},
  publisher   = {IEEE},
  author      = {Gosset, David and Nagaj, Daniel},
  year        = {2013},
  month       = {Oct},
  eprint      = {1302.0290},
  eprintclass = {quant-ph},
  eprinttype  = {arxiv}
}

@article{GS13,
  title     = {Exact synthesis of multiqubit Clifford+$T$ circuits},
  author    = {Giles, Brett and Selinger, Peter},
  journal   = {Phys. Rev. A},
  volume    = {87},
  issue     = {3},
  pages     = {032332},
  numpages  = {7},
  year      = {2013},
  month     = {Mar},
  publisher = {American Physical Society},
  doi       = {10.1103/PhysRevA.87.032332},
  url       = {https://link.aps.org/doi/10.1103/PhysRevA.87.032332}
}

@article{HNN13,
  title   = {The {{Local Hamiltonian}} Problem on a Line with Eight States Is {{QMA-complete}}},
  author  = {S. Hallgren and D. Nagaj and S. Narayanaswami},
  year    = {2013},
  journal = {Quantum Information \& Computation},
  volume  = {13},
  number  = {9\&10},
  pages   = {0721--0750}
}

@book{KSV02,
  author    = {Kitaev, A. Yu. and Shen, A. H. and Vyalyi, M. N.},
  title     = {Classical and Quantum Computation},
  year      = {2002},
  isbn      = {0821832298},
  publisher = {American Mathematical Society},
  address   = {USA}
}

@article{NM07,
  author   = {Nagaj, Daniel and Mozes, Shay},
  title    = {{New construction for a QMA complete three-local Hamiltonian}},
  journal  = {Journal of Mathematical Physics},
  volume   = {48},
  number   = {7},
  pages    = {072104},
  year     = {2007},
  month    = {07},
  abstract = {{We present a new way of encoding a quantum computation into a three-local Hamiltonian. Our construction is novel in that it does not include any terms that induce legal-illegal clock transitions. Therefore, the weights of the terms in the Hamiltonian do not scale with the size of the problem as in previous constructions. This improves the construction by Kempe and Regev [Quantum Inf. Comput. 3, 258–264 (2003); e-print quant-ph∕0302079] who were the first to prove that three-local Hamiltonian is complete for the complexity class QMA, the quantum analog of NP. Quantum k-SAT, a restricted version of the local Hamiltonian problem using only projector terms, was introduced by Bravyi (e-print quant-ph∕0602108) as an analog of the classical k-SAT problem. Bravyi proved that quantum 4-SAT is complete for the class QMA with one-sided error (QMA1) and that quantum 2-SAT is in P. We give an encoding of a quantum circuit into a quantum 4-SAT Hamiltonian using only three-local terms. As an intermediate step to this three-local construction, we show that quantum 3-SAT for particles with dimensions 3×2×2 (a qutrit and two qubits) is QMA1 complete. The complexity of quantum 3-SAT with qubits remains an open question.}},
  issn     = {0022-2488},
  doi      = {10.1063/1.2748377},
  url      = {https://doi.org/10.1063/1.2748377},
}

@article{aharonovPowerQuantumSystems2009,
  title = {The {{Power}} of {{Quantum Systems}} on a {{Line}}},
  author = {Aharonov, Dorit and Gottesman, Daniel and Irani, Sandy and Kempe, Julia},
  year = {2009},
  month = apr,
  journal = {Communications in Mathematical Physics},
  volume = {287},
  number = {1},
  pages = {41--65},
  issn = {1432-0916},
  doi = {10.1007/s00220-008-0710-3},
  urldate = {2023-01-31},
  abstract = {We study the computational strength of quantum particles (each of finite dimensionality) arranged on a line. First, we prove that it is possible to perform universal adiabatic quantum computation using a one-dimensional quantum system (with 9 states per particle). This might have practical implications for experimentalists interested in constructing an adiabatic quantum computer. Building on the same construction, but with some additional technical effort and 12 states per particle, we show that the problem of approximating the ground state energy of a system composed of a line of quantum particles is QMA-complete; QMA is a quantum analogue of NP. This is in striking contrast to the fact that the analogous classical problem, namely, one-dimensional MAX-2-SAT with nearest neighbor constraints, is in P. The proof of the QMA-completeness result requires an additional idea beyond the usual techniques in the area: Not all illegal configurations can be ruled out by local checks, so instead we rule out such illegal configurations because they would, in the future, evolve into a state which can be seen locally to be illegal. Our construction implies (assuming the quantum Church-Turing thesis and that quantum computers cannot efficiently solve QMA-complete problems) that there are one-dimensional systems which take an exponential time to relax to their ground states at any temperature, making them candidates for being one-dimensional spin glasses.},
  langid = {english},
  keywords = {Ground State Energy,Quantum Circuit,Qubit State,Transition Rule,Turing Machine}
}

@inproceedings{aradLinearTimeAlgorithm2016,
  title = {Linear {{Time Algorithm}} for {{Quantum 2SAT}}},
  booktitle = {43rd {{International Colloquium}} on {{Automata}}, {{Languages}}, and {{Programming}} ({{ICALP}} 2016)},
  author = {Arad, Itai and Santha, Miklos and Sundaram, Aarthi and Zhang, Shengyu},
  year = {2016},
  series = {Leibniz {{International Proceedings}} in {{Informatics}} ({{LIPIcs}})},
  volume = {55},
  pages = {15:1--15:14}
}

@article{b.aspvallLineartimeAlgorithmTesting1979,
  title = {A Linear-Time Algorithm for Testing the Truth of Certain Quantified Boolean Formulas},
  author = {B. Aspvall and M. F. Plass and R. E. Tarjan},
  year = {1979},
  journal = {Information Processing Letters},
  volume = {8},
  number = {3},
  pages = {121--123}
}

@inproceedings{beaudrapLinearTimeAlgorithm2016,
  title = {A {{Linear Time Algorithm}} for {{Quantum}} 2-{{SAT}}},
  booktitle = {31st {{Conference}} on {{Computational Complexity}} ({{CCC}} 2016)},
  author = {de Beaudrap, Niel and Gharibian, Sevag},
  year = {2016},
  volume = {50},
  pages = {27:1--27:21}
}

@article{bravyiCriticalityFrustrationQuantum2012,
  title = {Criticality without {{Frustration}} for {{Quantum Spin-1 Chains}}},
  author = {Bravyi, Sergey and Caha, Libor and Movassagh, Ramis and Nagaj, Daniel and Shor, Peter W.},
  year = {2012},
  month = nov,
  journal = {Physical Review Letters},
  volume = {109},
  number = {20},
  pages = {207202},
  publisher = {{American Physical Society}},
  doi = {10.1103/PhysRevLett.109.207202},
  urldate = {2023-09-07},
  abstract = {Frustration-free (FF) spin chains have a property that their ground state minimizes all individual terms in the chain Hamiltonian. We ask how entangled the ground state of a FF quantum spin-s chain with nearest-neighbor interactions can be for small values of s. While FF spin-1/2 chains are known to have unentangled ground states, the case s=1 remains less explored. We propose the first example of a FF translation-invariant spin-1 chain that has a unique highly entangled ground state and exhibits some signatures of a critical behavior. The ground state can be viewed as the uniform superposition of balanced strings of left and right brackets separated by empty spaces. Entanglement entropy of one half of the chain scales as 12log n+O(1), where n is the number of spins. We prove that the energy gap above the ground state is polynomial in 1/n. The proof relies on a new result concerning statistics of Dyck paths which might be of independent interest.},
  file = {C:\Users\sevag\Zotero\storage\44AYHWHP\PhysRevLett.109.html}
}

@misc{bravyiEfficientAlgorithmQuantum2006,
  title = {Efficient Algorithm for a Quantum Analogue of 2-{{SAT}}},
  author = {Bravyi, Sergey},
  year = {2006},
  month = feb,
  number = {arXiv:quant-ph/0602108},
  eprint = {quant-ph/0602108},
  publisher = {{arXiv}},
  doi = {10.48550/arXiv.quant-ph/0602108},
  urldate = {2023-09-06},
  abstract = {Complexity of a quantum analogue of the satisfiability problem is studied. Quantum k-SAT is a problem of verifying whether there exists n-qubit pure state such that its k-qubit reduced density matrices have support on prescribed subspaces. We present a classical algorithm solving quantum 2-SAT in a polynomial time. It generalizes the well-known algorithm for the classical 2-SAT. Besides, we show that for any k{$>$}=4 quantum k-SAT is complete in the complexity class QMA with one-sided error.},
  archiveprefix = {arxiv},
  keywords = {Quantum Physics},
  file = {C:\Users\sevag\Zotero\storage\8TQJIU4R\0602108.html}
}

@article{bravyiSchriefferWolffTransformation2011,
  title = {Schrieffer\textendash{{Wolff}} Transformation for Quantum Many-Body Systems},
  author = {Bravyi, Sergey and DiVincenzo, David P. and Loss, Daniel},
  year = {2011},
  journal = {Annals of Physics},
  volume = {326},
  number = {10},
  pages = {2793--2826},
  issn = {0003-4916},
  doi = {10.1016/j.aop.2011.06.004},
  abstract = {The Schrieffer\textendash Wolff (SW) method is a version of degenerate perturbation theory in which the low-energy effective Hamiltonian Heff is obtained from the exact Hamiltonian by a unitary transformation decoupling the low-energy and high-energy subspaces. We give a self-contained summary of the SW method with a focus on rigorous results. We begin with an exact definition of the SW transformation in terms of the so-called direct rotation between linear subspaces. From this we obtain elementary proofs of several important properties of Heff such as the linked cluster theorem. We then study the perturbative version of the SW transformation obtained from a Taylor series representation of the direct rotation. Our perturbative approach provides a systematic diagram technique for computing high-order corrections to Heff. We then specialize the SW method to quantum spin lattices with short-range interactions. We establish unitary equivalence between effective low-energy Hamiltonians obtained using two different versions of the SW method studied in the literature. Finally, we derive an upper bound on the precision up to which the ground state energy of the nth-order effective Hamiltonian approximates the exact ground state energy.},
  keywords = {Canonical transformation,Perturbative expansion,Quantum many-body system}
}

@inproceedings{c.papadimitriouSelectingSatisfyingTruth1991,
  title = {On Selecting a Satisfying Truth Assignment},
  booktitle = {32nd {{Annual IEEE Symposium}} on {{Foundations}} of {{Computing}} ({{FOCS}} 1991)},
  author = {C. Papadimitriou},
  year = {1991},
  pages = {163--169}
}

@article{chenNogoTheoremOneway2011,
  title = {No-Go Theorem for One-Way Quantum Computing on Naturally Occurring Two-Level Systems},
  author = {Chen, Jianxin and Chen, Xie and Duan, Runyao and Ji, Zhengfeng and Zeng, Bei},
  year = {2011},
  month = may,
  journal = {Physical Review A},
  volume = {83},
  number = {5},
  pages = {050301},
  publisher = {{American Physical Society}},
  doi = {10.1103/PhysRevA.83.050301},
  urldate = {2023-09-07},
  abstract = {The ground states of some many-body quantum systems can serve as resource states for the one-way quantum computing model, achieving the full power of quantum computation. Such resource states are found, for example, in spin-52 and spin-32 systems. It is, of course, desirable to have a natural resource state in a spin-12, that is, qubit system. Here, we give a negative answer to this question for frustration-free systems with two-body interactions. In fact, it is shown to be impossible for any genuinely entangled qubit state to be a nondegenerate ground state of any two-body frustration-free Hamiltonian. What is more, we also prove that every spin-12 frustration-free Hamiltonian with two-body interaction always has a ground state that is a product of single- or two-qubit states. In other words, there cannot be any interesting entanglement features in the ground state of such a qubit Hamiltonian.},
  file = {C:\Users\sevag\Zotero\storage\4JELDUXL\PhysRevA.83.html}
}

@inproceedings{cookComplexityTheoremprovingProcedures1971,
  title = {The Complexity of Theorem-Proving Procedures},
  booktitle = {Proceedings of the Third Annual {{ACM}} Symposium on {{Theory}} of Computing},
  author = {Cook, Stephen A.},
  year = {1971},
  month = may,
  series = {{{STOC}} '71},
  pages = {151--158},
  publisher = {{Association for Computing Machinery}},
  address = {{New York, NY, USA}},
  doi = {10.1145/800157.805047},
  urldate = {2023-09-06},
  abstract = {It is shown that any recognition problem solved by a polynomial time-bounded nondeterministic Turing machine can be ``reduced'' to the problem of determining whether a given propositional formula is a tautology. Here ``reduced'' means, roughly speaking, that the first problem can be solved deterministically in polynomial time provided an oracle is available for solving the second. From this notion of reducible, polynomial degrees of difficulty are defined, and it is shown that the problem of determining tautologyhood has the same polynomial degree as the problem of determining whether the first of two given graphs is isomorphic to a subgraph of the second. Other examples are discussed. A method of measuring the complexity of proof procedures for the predicate calculus is introduced and discussed.},
  isbn = {978-1-4503-7464-4}
}

@article{cubittComplexityClassificationLocal2016a,
  title = {Complexity {{Classification}} of {{Local Hamiltonian Problems}}},
  author = {Cubitt, Toby and Montanaro, Ashley},
  year = {2016},
  month = jan,
  journal = {SIAM Journal on Computing},
  volume = {45},
  number = {2},
  pages = {268--316},
  publisher = {{Society for Industrial and Applied Mathematics}},
  issn = {0097-5397},
  doi = {10.1137/140998287},
  urldate = {2023-09-06},
  abstract = {The model of adiabatic quantum computation is a relatively recent model of quantum computation that has attracted attention in the physics and computer science communities. We describe an efficient adiabatic simulation of any given quantum circuit. This implies that the adiabatic computation model and the standard circuit-based quantum computation model are polynomially equivalent. Our result can be extended to the physically realistic setting of particles arranged on a two-dimensional grid with nearest neighbor interactions. The equivalence between the models allows one to state the main open problems in quantum computation using well-studied mathematical objects such as eigenvectors and spectral gaps of Hamiltonians.}
}

@article{cubittUniversalQuantumHamiltonians2018,
  title = {Universal Quantum {{Hamiltonians}}},
  author = {Cubitt, T. S. and Montanaro, A. and Piddock, S.},
  year = {2018},
  journal = {National Academy of Sciences},
  volume = {115},
  number = {38},
  pages = {9497--9502},
  publisher = {{National Academy of Sciences}},
  issn = {0027-8424},
  doi = {10.1073/pnas.1804949115},
  abstract = {Quantum many-body systems exhibit a bewilderingly diverse range of behavior, which forms the central object of study in many areas of physics and beyond. Our work reveals that, in fact, the entire physics of every other quantum many-body system is replicated in certain simple, universal quantum spin-lattice models. A key application is to the field of analogue simulation of quantum systems, which has long been seen as one of the most promising near-term applications of quantum information technology. We put this field on a rigorous footing, give some rigorous justification for why it may not require error correction, and show that simple families of systems can be used as universal quantum simulators.Quantum many-body systems exhibit an extremely diverse range of phases and physical phenomena. However, we prove that the entire physics of any quantum many-body system can be replicated by certain simple, ``universal'' spin-lattice models. We first characterize precisely what it means for one quantum system to simulate the entire physics of another. We then fully classify the simulation power of all two-qubit interactions, thereby proving that certain simple models can simulate all others, and hence are universal. Our results put the "Physical Review Letters"ctical field of analogue Hamiltonian simulation on a rigorous footing and take a step toward justifying why error correction may not be required for this application of quantum information technology.}
}

@article{gareySimplifiedNPcompleteGraph1976,
  title = {Some Simplified {{NP-complete}} Graph Problems},
  author = {Garey, M. R. and Johnson, D. S. and Stockmeyer, L.},
  year = {1976},
  month = feb,
  journal = {Theoretical Computer Science},
  volume = {1},
  number = {3},
  pages = {237--267},
  issn = {0304-3975},
  doi = {10.1016/0304-3975(76)90059-1},
  urldate = {2023-09-06},
  abstract = {It is widely believed that showing a problem to be NP-complete is tantamount to proving its computational intractability. In this paper we show that a number of NP-complete problems remain NP-complete even when their domains are substantially restricted. First we show the completeness of Simple Max Cut (Max Cut with edge weights restricted to value 1), and, as a corollary, the completeness of the Optimal Linear Arrangement problem. We then show that even if the domains of the Node Cover and Directed Hamiltonian Path problems are restricted to planar graphs, the two problems remain NP-complete, and that these and other graph problems remain NP-complete even when their domains are restricted to graphs with low node degrees. For Graph 3-Colorability, Node Cover, and Undirected Hamiltonian Circuit, we determine essentially the lowest possible upper bounds on node degree for which the problems remain NP-complete.},
  file = {C:\Users\sevag\Zotero\storage\DNSD36F3\0304397576900591.html}
}

@inproceedings{gharibianOracleComplexityClasses2020,
  title = {Oracle {{Complexity Classes}} and {{Local Measurements}} on {{Physical Hamiltonians}}},
  booktitle = {37th {{International Symposium}} on {{Theoretical Aspects}} of {{Computer Science}} ({{STACS}} 2020)},
  author = {Gharibian, Sevag and Piddock, Stephen and Yirka, Justin},
  year = {2020},
  volume = {154},
  pages = {20:1--20:37}
}

@article{j.kempeComplexityLocalHamiltonian2006,
  title = {The Complexity of the Local {{Hamiltonian}} Problem},
  author = {J. Kempe and A. Kitaev and O. Regev},
  year = {2006},
  journal = {SIAM Journal on Computing},
  volume = {35},
  number = {5},
  pages = {1070--1097}
}

@incollection{karpReducibilityCombinatorialProblems1972,
  title = {Reducibility among {{Combinatorial Problems}}},
  booktitle = {Complexity of {{Computer Computations}}: {{Proceedings}} of a Symposium on the {{Complexity}} of {{Computer Computations}}, Held {{March}} 20\textendash 22, 1972, at the {{IBM Thomas J}}. {{Watson Research Center}}, {{Yorktown Heights}}, {{New York}}, and Sponsored by the {{Office}} of {{Naval Research}}, {{Mathematics Program}}, {{IBM World Trade Corporation}}, and the {{IBM Research Mathematical Sciences Department}}},
  author = {Karp, Richard M.},
  editor = {Miller, Raymond E. and Thatcher, James W. and Bohlinger, Jean D.},
  year = {1972},
  series = {The {{IBM Research Symposia Series}}},
  pages = {85--103},
  publisher = {{Springer US}},
  address = {{Boston, MA}},
  doi = {10.1007/978-1-4684-2001-2_9},
  urldate = {2023-09-06},
  abstract = {A large class of computational problems involve the determination of properties of graphs, digraphs, integers, arrays of integers, finite families of finite sets, boolean formulas and elements of other countable domains. Through simple encodings from such domains into the set of words over a finite alphabet these problems can be converted into language recognition problems, and we can inquire into their computational complexity. It is reasonable to consider such a problem satisfactorily solved when an algorithm for its solution is found which terminates within a number of steps bounded by a polynomial in the length of the input. We show that a large number of classic unsolved problems of covering, matching, packing, routing, assignment and sequencing are equivalent, in the sense that either each of them possesses a polynomial-bounded algorithm or none of them does.},
  isbn = {978-1-4684-2001-2},
  langid = {english},
  keywords = {Combinatorial Problem,Finite Alphabet,Polynomial Time,Regular Expression,Turing Machine}
}

@inproceedings{l.gurvitsClassicalDeterministicComplexity2003,
  title = {Classical Deterministic Complexity of {{Edmond}}'s Problem and Quantum Entanglement},
  booktitle = {35th {{Symposium}} on {{Theory}} of Computing ({{STOC}} 2003)},
  author = {L. Gurvits},
  year = {2003},
  pages = {10--19},
  publisher = {{ACM Press}}
}

@article{l.ioannouComputationalComplexityQuantum2007,
  title = {Computational {{Complexity}} of the {{Quantum Separability Problem}}},
  author = {L. Ioannou},
  year = {2007},
  journal = {Quantum Information \& Computation},
  volume = {7},
  number = {4},
  pages = {335}
}

@article{l.levinUniversalSearchProblems1973,
  title = {Universal Search Problems},
  author = {L. Levin},
  year = {1973},
  journal = {Problems of Information Transmission},
  volume = {9},
  number = {3},
  pages = {265--266}
}

@article{landauPolynomialTimeAlgorithm2015,
  title = {A Polynomial Time Algorithm for the Ground State of One-Dimensional Gapped Local {{Hamiltonians}}},
  author = {Landau, Zeph and Vazirani, Umesh and Vidick, Thomas},
  year = {2015},
  month = jul,
  journal = {Nature Physics},
  volume = {11},
  number = {7},
  pages = {566--569},
  publisher = {{Nature Publishing Group}},
  issn = {1745-2481},
  doi = {10.1038/nphys3345},
  urldate = {2023-09-07},
  abstract = {The density matrix renormalization group method has been extensively used to study the ground state of 1D many-body systems since its introduction two decades ago. In spite of its wide use, this heuristic method is known to fail in certain cases and no certifiably correct implementation is known, leaving researchers faced with an ever-growing toolbox of heuristics, none of which is guaranteed to succeed. Here we develop a polynomial time algorithm that provably finds the ground state of any 1D quantum system described by a gapped local Hamiltonian with constant ground-state energy. The algorithm is based on a framework that combines recently discovered structural features of gapped 1D systems with an efficient construction of a class of operators called approximate ground-state projections (AGSPs). The combination of these tools yields a method that is guaranteed to succeed in all 1D gapped systems. An AGSP-centric approach may help guide the search for algorithms for more general quantum systems, including for the central challenge of 2D systems, where even heuristic methods have had more limited success.},
  copyright = {2014 Springer Nature Limited},
  langid = {english},
  keywords = {Computational science,Quantum simulation}
}

@article{m.davisComputingProcedureQuantification1960,
  title = {A Computing Procedure for Quantification Theory},
  author = {M. Davis and H. Putnam},
  year = {1960},
  journal = {Journal of the ACM},
  volume = {7},
  number = {3},
  pages = {201}
}

@article{m.r.kromDecisionProblemClass1967,
  title = {The {{Decision Problem}} for a {{Class}} of {{First-Order Formulas}} in {{Which}} All {{Disjunctions}} Are {{Binary}}},
  author = {M. R. Krom},
  year = {1967},
  journal = {Zeitschrift f\"ur Mathematische Logik und Grundlagen der Mathematik},
  volume = {13},
  pages = {15--20}
}

@misc{nagajLocalHamiltoniansQuantum2008,
  title = {Local {{Hamiltonians}} in {{Quantum Computation}}},
  author = {Nagaj, Daniel},
  year = {2008},
  month = aug,
  number = {arXiv:0808.2117},
  eprint = {0808.2117},
  primaryclass = {quant-ph},
  publisher = {{arXiv}},
  doi = {10.48550/arXiv.0808.2117},
  urldate = {2023-09-06},
  abstract = {In this thesis, I investigate aspects of local Hamiltonians in quantum computing. First, I focus on the Adiabatic Quantum Computing model, based on evolution with a time dependent Hamiltonian. I show that to succeed using AQC, the Hamiltonian involved must have local structure, which leads to a result about eigenvalue gaps from information theory. I also improve results about simulating quantum circuits with AQC. Second, I look at classically simulating time evolution with local Hamiltonians and finding their ground state properties. I give a numerical method for finding the ground state of translationally invariant Hamiltonians on an infinite tree. This method is based on imaginary time evolution within the Matrix Product State ansatz, and uses a new method for bringing the state back to the ansatz after each imaginary time step. I then use it to investigate the phase transition in the transverse field Ising model on the Bethe lattice. Third, I focus on locally constrained quantum problems Local Hamiltonian and Quantum Satisfiability and prove several new results about their complexity. Finally, I define a Hamiltonian Quantum Cellular Automaton, a continuous-time model of computation which doesn't require control during the computation process, only preparation of product initial states. I construct two of these, showing that time evolution with a simple, local, translationally invariant and time-independent Hamiltonian can be used to simulate quantum circuits.},
  archiveprefix = {arxiv},
  keywords = {Quantum Physics},
  file = {C:\Users\sevag\Zotero\storage\RMPUUB72\0808.html}
}

@article{peresSeparabilityCriterionDensity1996,
  title = {Separability {{Criterion}} for {{Density Matrices}}},
  author = {Peres, Asher},
  year = {1996},
  month = aug,
  journal = {Physical Review Letters},
  volume = {77},
  number = {8},
  pages = {1413--1415},
  publisher = {{American Physical Society}},
  doi = {10.1103/PhysRevLett.77.1413}
}

@article{s.evenComplexityTimeTable1976,
  title = {On the Complexity of the Time Table and Multi-Commodity Flow Problems},
  author = {S. Even and A. Itai and A. Shamir},
  year = {1976},
  journal = {SIAM Journal on Computing},
  volume = {5},
  number = {4},
  pages = {691--703}
}

@article{s.gharibianStrongNPhardnessQuantum2010,
  title = {Strong {{NP-hardness}} of the Quantum Separability Problem},
  author = {S. Gharibian},
  year = {2010},
  journal = {Quantum Information and Computation},
  volume = {10},
  number = {3\&4},
  pages = {343--360}
}

@article{schuchComputationalDifficultyFinding2008,
  title = {Computational {{Difficulty}} of {{Finding Matrix Product Ground States}}},
  author = {Schuch, Norbert and Cirac, Ignacio and Verstraete, Frank},
  year = {2008},
  journal = {Physical Review Letters},
  volume = {100},
  number = {25},
  doi = {10.1103/PhysRevLett.100.250501},
  file = {C:\Users\sevag\Zotero\storage\ZB5LXAKP\PhysRevLett.100.html}
}

@article{w.v.quineCoresPrimeImplicants1959,
  title = {On Cores and Prime Implicants of Truth Functions},
  author = {W. V. Quine},
  year = {1959},
  journal = {The American Mathematical Monthly},
  volume = {66},
  number = {5},
  pages = {755--760}
}

@inproceedings{zachosProbabilisticQuantifiersVs1987,
  title = {Probabilistic Quantifiers vs. Distrustful Adversaries},
  booktitle = {Foundations of {{Software Technology}} and {{Theoretical Computer Science}}},
  author = {Zachos, Stathis and Furer, Martin},
  editor = {Nori, Kesav V.},
  year = {1987},
  series = {Lecture {{Notes}} in {{Computer Science}}},
  pages = {443--455},
  publisher = {{Springer}},
  address = {{Berlin, Heidelberg}},
  doi = {10.1007/3-540-18625-5_67},
  isbn = {978-3-540-48033-4},
  langid = {english},
  keywords = {Complexity Class,Exact Order,Matrix Group,Probabilistic Quantifier,Turing Machine}
}

\appendix

\section{Omitted Proofs}\label{sec:proofs}

The proofs of the following lemmas are given in the appendix since they are purely mechanical.

\lemmaHlogicalIII*
\begin{proof}
  It is trivial to verify that $\ket{\Lu},\ket{\La},\ket{\Ld}$ are indeed in the nullspace of $\Hlogical_3$.
  Expanding the implicit identities of $\Hlogical_3$ in the standard basis, we can write $\Hlogical_3$ as the sum of rank-$1$ projectors:
  \begin{equation}\label{eq:Hlogical3-vectors}
    \begin{aligned}
      &\ket{u01},\ket{u00},\gray{\ket{u01}},\ket{u11},\\
      &\ket{a10},\ket{a11},\ket{a01},\gray{\ket{a11}},\\
      &\ket{d10},\ket{d11},\ket{d01},\gray{\ket{d11}},\\
      &\ket{d'10},\ket{d'11},\ket{d'00},\gray{\ket{d'10}},\\
      &\ket{x01},\ket{x11},\\
      &\ket{x00}-\ket{u10},\gray{\ket{x01}-\ket{u11}},\\
      &\ket{x10}-\ket{a00},\gray{\ket{x11}-\ket{a11}}\\
      &\ket{d00}-\ket{d'01},\gray{\ket{d10}-\ket{d'11}}
    \end{aligned}
  \end{equation}
  The gray terms mark vectors in the span of prior vectors.
  The remaining $17$ terms form an orthogonal basis of $\Image(\Hlogical_3)$.
\end{proof}

\lemmaCE*
\begin{proof}
  By \Cref{lem:Hlogical}, we know $\Null(\Hlogical + \Hclockp{1}) = \Null(\Hclockp{1})\cap \L = \Null(\Pi_\LS \Hclockp{1} \Pi_\LS) \cap \L$.
  $\Pi_\LS \Hclockp{1} \Pi_\LS$ then consists of the terms $\ketbra\psi\psi\otimes\I$, where $\ket\psi$ is one of the following vectors, and $x_i:=\alpha_i\beta_i\gamma_i, Y_i:=\delta_i\epsilon_i$ denote the $i$-th logical qutrit and $4$-qudit, respectively:
  \begin{subequations}\label{eq:pairs}
    \begin{align}
      % &\ket{\LU\LA_1}_{Y_iY_{i+1}},\ket{\LU\LA}_{Y_iY_{i+1}},\ket{\LU\LD}_{Y_iY_{i+1}}\label{eq:vUU}\\
      &\ket{\LU\La}_{Y_ix_{i+1}},\ket{\LU\Ld}_{Y_ix_{i+1}},\label{eq:vUu}\\
      &\ket{\Ld\Lu}_{x_ix_{i+1}},\ket{\Ld\La}_{x_ix_{i+1}},\\
      &\ket{\LU\Ld}_{Y_ix_{i+1}},\ket{\LA_1\Ld}_{Y_ix_{i+1}},\ket{\LA_2\Ld}_{Y_ix_{i+1}},\\
      &\ket{\Lu\LA_1}_{x_iY_i},\ket{\La\LA_1}_{x_iY_i},\ket{\Lu\LA_2}_{x_iY_i},\ket{\La\LA_2}_{x_iY_i},\ket{\Lu\LD}_{x_iY_i},\ket{\La\LD}_{x_iY_i}\label{eq:vuaD}\\
      &\ket{\LA_2}_{Y_N},\ket{\LD}_{Y_N}\label{eq:vlast}
    \end{align}
  \end{subequations}
  Note, cross terms do not appear because $\Hclockp{1}$ is diagonal and logical states can be written as a superposition of computational basis states, such that no computational basis state is used by two different logical states (see \Cref{eq:logical4,eq:logical3}).
  The vectors from Equations \eqref{eq:HUu} to \eqref{eq:UA1last} correspond to the projectors from Equations \eqref{eq:vUu} to \eqref{eq:vlast}, respectively.
  Also note, the pairs not listed in \Cref{eq:pairs} are precisely those satisfying the conditions written on the left side of \Cref{eq:Hclock1}.
  Since $\Pi_\LS \Hclockp{1} \Pi_\LS$ is diagonal in the logical basis, its nullspace must also be spanned by logical basis states.

  To prove this lemma, first observe that all states in \Cref{eq:valid1} satisfy the conditions, or equivalently, contain no ``illegal pair'' from \Cref{eq:pairs}.
  Let $\ket\psi=\ket{x_1Y_1\dots x_NY_N}\in\Null(\Pi_\LS \Hclockp{1} \Pi_\LS) \cap \LS$ be a logical basis state.
  Next, we argue that only states $\ket\psi$ in \Cref{eq:valid1} can satisfy the conditions from \Cref{eq:Hclock1} using a case distinction.
  \begin{enumerate}
    \item $\forall i: x_i\ne\Ld$. Then $\ket\psi\in\{\ket{C_{1,1}},\ket{C_{1,2}}\}$ by \eqref{eq:HUu} and \eqref{eq:HuaU}.
    \item $\exists i:x_i=\Ld$. Let $i$ be maximal such that $x_i=\Ld$. Then $x_1Y_1\dotsm x_{i-1}Y_{i-1}=(\Ld\LD)^{i-1}$ by \eqref{eq:Hdd} and \eqref{eq:HDd}.
    \begin{enumerate}
      \item $Y_i=\LU$. Then $Y_ix_{i+1}Y_{i+1}\dotsm x_NY_N=\LU(\Lu\LU)^{N-i}$ by \eqref{eq:HUu} and \eqref{eq:HuaU}. Hence $\ket{\psi} = \ket{C_{i,3}}$.
      \item $Y_i=\LA_1$. Note, $x_{i+1}\ne \Ld$ by choice of $i$.
      \begin{enumerate}
        \item $i=N$. Then $\ket\psi = \ket{C_{N,4}}$.
        \item $x_{i+1}=\Lu$. Then $x_{i+1}Y_{i+1}\dotsm x_NY_N=(\Lu\LU)^{N-i}$ by \eqref{eq:HUu} and \eqref{eq:HuaU}. Hence $\ket\psi = \ket{C_{i,4}}$.
        \item $x_{i+1}=\La$. Then $x_{i+1}Y_{i+1}\dotsm x_NY_N=\La\LU(\Lu\LU)^{N-i-1}$ by \eqref{eq:HUu} and \eqref{eq:HuaU}. Hence, $\ket\psi = \ket{E_{i+1,1}}$.
      \end{enumerate}
      \item $Y_i=\LA_2$. Again, $x_{i+1}\ne \Ld$ by choice of $i$. $i<N$ by \eqref{eq:UA1last}.
      \begin{enumerate}
        \item $x_{i+1}=\Lu$. Then $x_{i+1}Y_{i+1}\dotsm x_NY_N=(\Lu\LU)^{N-i}$ by \eqref{eq:HUu} and \eqref{eq:HuaU}. Hence $\ket\psi = \ket{C_{i+1,1}}$.
        \item $x_{i+1}=\La$. Then $x_{i+1}Y_{i+1}\dotsm x_NY_N=\La\LU(\Lu\LU)^{N-i-1}$ by \eqref{eq:HUu} and \eqref{eq:HuaU}. Hence, $\ket\psi = \ket{E_{i+1,2}}$.
      \end{enumerate}
      \item $Y_i=\LD$. Again, $x_{i+1}\ne \Ld$ and $i<N$.
      \begin{enumerate}
        \item $x_{i+1}=\Lu$. Then $x_{i+1}Y_{i+1}\dotsm x_NY_N=(\Lu\LU)^{N-i}$ by \eqref{eq:HUu} and \eqref{eq:HuaU}. Hence $\ket\psi = \ket{C_{i+1,2}}$.
        \item $x_{i+1}=\La$. Then $x_{i+1}Y_{i+1}\dotsm x_NY_N=\La\LU(\Lu\LU)^{N-i-1}$ by \eqref{eq:HUu} and \eqref{eq:HuaU}. Hence, $\ket\psi = \ket{E_{i+1,3}}$.
      \end{enumerate}
    \end{enumerate}
  \end{enumerate}
\end{proof}

\section{Intuition behind the clock}\label{sec:intuition}

To give the intuition behind our clock construction for $(2,5)$-QSAT, we first sketch how to obtain the $\QMAo$-hardness for $(2,7)$-QSAT by ``simulating'' the $(3,5)$-QSAT construction of \cite{ER08}.

On qutrits, the following clock construction is straightforward to implement with two-local constraints on a line.
\begin{equation}\label{eq:clock33}
    \begin{aligned}
    \cdots\texttt{1000}\cdots\\
    \cdots\texttt{2000}\cdots\\
    \cdots\texttt{2100}\cdots\\
    \cdots\texttt{2200}\cdots\\
    \cdots\texttt{2210}\cdots\\
    \cdots\texttt{2220}\cdots\\
    \cdots\texttt{2221}\cdots\\
    \cdots\texttt{2222}\cdots
    \end{aligned}
\end{equation}
Observe that only a single qutrit changes in each transition.
The same clock can be implemented on a $(2,6)$ system by using the following logical qutrits on a $6$-qudit $\alpha$ and three qubits $\beta,\gamma,\delta$.
\begin{equation}\label{eq:logic33}
    \begin{aligned}
    \ket{\mathbf{0}} &= \ket{0}_\alpha\ket{000}_{\beta\gamma\delta} + \ket{0'}_\alpha\ket{100}_{\beta\gamma\delta}\\
    \ket{\mathbf{1}} &= \ket{1}_\alpha\ket{000}_{\beta\gamma\delta} + \ket{1'}_\alpha\ket{010}_{\beta\gamma\delta}\\
    \ket{\mathbf{2}} &= \ket{2}_\alpha\ket{000}_{\beta\gamma\delta} + \ket{2'}_\alpha\ket{001}_{\beta\gamma\delta}
    \end{aligned}
\end{equation}
Here, $\alpha$ is treated as if it were a qubit and a qutrit.
It is straightforward to construct a $2$-local Hamiltonian $H_{\rm dit}$ on the system $\alpha \beta\gamma\delta$ whose nullspace is spanned by $\ket{\mathbf0},\ket{\mathbf1},\ket{\mathbf2}$.

Observe that each logical qutrit can be identified by a single qubit and transitions only require the qudit $\alpha$.
Thus, we can construct $2$-local $H_{\rm clock}$ whose nullspace is spanned by valid logical clock states \eqref{eq:clock33}.

A $\mathbf{10} \leftrightarrow \mathbf{20}$ transition can be implemented as
\begin{equation}
    \bigl(\ket{1}-\ket{2}\bigr)\bigl(\bra{1}-\bra{2}\bigr)_{\alpha_i}\otimes \ketbra11_{\beta_{i+1}},
\end{equation}
where register $x_i$ belongs to logical qudit $i$.
The remaining transitions can be implemented analogously, which allows us to define a transition Hamiltonian $H_{\rm trans}$ enforcing a uniform superposition over all time steps.

It is straightforward to extend this construction to qudits of any dimension.
This way, we can also represent the states $u,a_1,a_2,a_3,d$ from \cite{ER08}, which gives QMA$_1$ hardness.
We remark that transitions between $a_1,a_2,a_3$ can be done $1$-locally because they only occur once in each logical dit over all time steps.
Therefore, we also only need the left summand in \eqref{eq:logic33} for the logical states corresponding to $a_1,a_2,a_3$, and the summands starting with $\ket{x'}_\alpha$ for $x\in\{a_1,a_2,a_3\}$ are not necessary.
Hence, a $(2,7)$ system suffices.

\end{document}